\documentclass[a4paper,10pt]{article}

\usepackage[T2A]{fontenc}

\usepackage[english]{babel}

\usepackage{graphicx}
\usepackage{amsthm, amsmath, amssymb, amsbsy, amsfonts, dsfont}
\usepackage{srcltx, color}

\addtolength{\topmargin}{-1.5 true cm}
\addtolength{\textheight}{2.6 true cm}
\addtolength{\textwidth}{3.5 true cm}
\addtolength{\hoffset}{-1.5 true cm}

\sloppy

\newcommand{\PT}{\mathcal{PT}}

\def\tht{\theta}
\def\Om{\Omega}
\def\om{\omega}
\def\e{\varepsilon}
\def\g{\gamma}
\def\G{\Gamma}
\def\l{\lambda}
\def\p{\partial}
\def\D{\Delta}

\def\a{\alpha}
\def\b{\beta}

\def\d{\delta}
\def\L{\Lambda}

\def\vp{\varphi}

\def\vt{\vartheta}

\def\Ho{\mathring{H}_2}

\def\iu{\mathrm{i}}
\def\di{\,d}
\def\Ups{\Upsilon}

\def\Op{\mathcal{H}}

\def\hf{\mathfrak{h}}

\def\cL{\mathcal{L}}

\def\Ho{\mathring{W}_2}

\def\cA{\mathcal{A}}
\def\cR{\mathcal{R}}
\def\cB{\mathcal{B}}
\def\cG{\mathcal{G}}

\def\cI{\mathcal{I}}
\def\cP{\mathcal{P}}
\def\cU{\mathcal{U}}

\def\rE{\mathrm{E}}
\def\rM{\mathrm{M}}
\def\rS{\mathrm{S}}
\def\rJ{\mathrm{J}}

\def\ev{\mathrm{e}}

\def\Dom{\mathfrak{D}}


\DeclareMathOperator{\spec}{\sigma}

\DeclareMathOperator{\essspec}{\sigma_{ess}}

\DeclareMathOperator{\RE}{Re}
\DeclareMathOperator{\IM}{Im}
\DeclareMathOperator{\dist}{dist}
\DeclareMathOperator{\sgn}{sgn}

\numberwithin{equation}{section}



\newtheorem{theorem}{Theorem}[section]
\newtheorem{lemma}{Lemma}[section]


\begin{document}

\allowdisplaybreaks

\title{
Bifurcations of thresholds
in essential spectra of elliptic operators under localized non-Hermitian perturbations}

\date{\empty}

\author
{D. I. Borisov$^1$, D. A. Zezyulin$^2$\footnote{Corresponding author}, M. Znojil$^3$}

\vskip -0.5 true cm

\maketitle

\begin{center}
{\footnotesize $^1$
Institute of Mathematics, Ufa Federal Research Center, Russian Academy of Sciences, Ufa, Russia,
\\
\&
\\
Bashkir State 
University, 
Ufa, Russia, \\
\&
\\
University of Hradec Kr\'alov\'e,  Hradec Kr\'alov\'e, Czech Republic
 \\
{\tt borisovdi@yandex.ru}}
\end{center}

\begin{center}
{\footnotesize $^2$
ITMO University, Saint-Petersburg 197101, Russia
\\
{\tt d.zezyulin@gmail.com}}
\end{center}

\begin{center}
{\footnotesize $^3$
The Czech Academy of Sciences, Nuclear Physics Institute, Hlavn\'{\i} 130, 25068 \v{R}e\v{z}, Czech Republic
\\
\&
\\
University of Hradec Kr\'alov\'e,  Hradec Kr\'alov\'e, Czech Republic,
\\
\&
\\
Institute of System Science, Durban University of Technology, P.O. Box 1334, Durban 4000, South Africa
 \\
{\tt znojil@ujf.cas.cz}}
\end{center}

\begin{abstract}
We consider the operator
\begin{equation*}
  \Op=\Op' -\frac{\p^2\ }{\p x_d^2} \quad\text{on}\quad\om\times\mathds{R}
\end{equation*}
subject to the Dirichlet or Robin condition, where a domain $\om\subseteq\mathds{R}^{d-1}$ is  bounded or unbounded. The symbol $\Op'$ stands for a  second order self-adjoint  differential operator on $\om$ such that the spectrum of the operator $\Op'$ contains several discrete eigenvalues $\Lambda_{j}$, $j=1,\ldots, m$. These eigenvalues are thresholds in  the essential spectrum of the operator $\Op$. We study how these thresholds bifurcate once we add a small localized perturbation $\e\cL(\e)$ to the operator $\Op$, where $\e$ is a small positive parameter and $\cL(\e)$ is an abstract,  not necessarily symmetric operator.  We show that these thresholds bifurcate into eigenvalues and resonances   of the operator $\Op$ in the vicinity of $\L_j$ for sufficiently small $\e$. We prove effective simple conditions determining the existence of these resonances and eigenvalues and find the leading terms of their asymptotic expansions.  Our analysis  applies to  generic non-self-adjoint perturbations  and, in particular, to perturbations characterized by the  parity-time ($\PT$) symmetry. Potential applications of our result embrace  a broad class of  physical systems governed by dispersive or diffractive effects. As a case example, we employ our findings  to develop a   scheme for a controllable generation of non-Hermitian optical states with normalizable power and real part of the complex-valued propagation constant lying in the continuum. The corresponding eigenfunctions  can be interpreted as an optical generalization of   bound states {embedded} in the continuum. For a particular example, the persistence of asymptotic expansions is confirmed with direct numerical evaluation of the perturbed spectrum.
\end{abstract}

\section{Introduction}

\paragraph{Physical context and motivation.}  Physics of
 non-Hermitian  Hamiltonians is attracting steadily growing attention both
on the fundamental level in the development of complex formulations of quantum mechanics  \cite{Bender2}, \cite{Most}, \cite{Moiseyev} and
in several applied and experimental fields, such as optics and photonics, Bose-Einstein condensates of atoms or exciton-polaritons,  acoustics, and in other areas  where diffractive  or dispersive effects are governed by
Schr\"oginder-like elliptic operators,    see \cite{PukhovReview}, \cite{KYZreview}, \cite{FengReview}, \cite{LonghiReview}, \cite{El-GanainyReview} for recent reviews.  Prominent examples of essentially non-Hermitian phenomena that    were introduced in mathematical literature long ago but entered various areas of physics much more  recently include exceptional points \cite{Ka} and spectral singularities \cite{SS1}, \cite{SS2}, \cite{Vainberg}.  In particular, unusual effects associated with exceptional points
are being extensively discussed in   optics  and photonics    (e.g. \cite{Heiss}, \cite{Alu}, \cite{Nori}),  whereas   spectral singularities \cite{SS1}, \cite{SS2}, \cite{Vainberg}  are now understood to play an  important role in     wave  scattering   \cite{Mostafazadeh2009}, \cite{SS5} and are used to implement coherent perfect absorption of electromagnetic \cite{CPA1}, sound \cite{CPA2}, and matter \cite{CPA3} waves.   Another prototypical behavior, which is forbidden in Hermitian physics but  is of the utmost importance in non-Hermitian systems, is the transition from the entirely real spectrum of eigenvalues   to  a  partially complex one.

In a real-world   system,  the non-Hermiticity usually corresponds to the presence of  an  energy gain or absorption, which creates an effective complex potential for propagating waves \cite{Muga}.
An especially interesting situation, where
 a  judicious balance  between  amplification and losses results in rich physics, corresponds to
 the   so-called parity-time ($\PT$) symmetric systems
famous  for their property to robustly preserve   reality of all eigenvalues  in spite of the absence of Hermiticity \cite{Bender1,Bender2}. Tuning a control parameter of a $\PT$-symmetric system, one can realize various qualitative changes  in its   spectral structure, and these changes are  typically associated with rich and intriguing behaviors. The simplest of those behaviors, which was observed in a series of experiments  \cite{Alu,Nori}, corresponds to the collision of a pair of real discrete eigenvalues in an exceptional point with a subsequent splitting in a complex-conjugate pair. In systems with a continuous spectrum, the situation is further enriched. In particular, the  bifurcation of an isolated eigenvalue from the bottom of the essential spectrum can be   accompanied by
 a  so-called jamming anomaly, i.e., a  non-monotonous dependence of the energy flux  through the gain-to-loss interface  on the parameter characterizing the strength of  the  non-Hermiticity \cite{jamming}.  The bifurcations  of  a complex-conjugate pair of eigenvalues from an internal point  in
the essential spectrum are even more interesting and have recently been discussed as an unconventional mechanism of $\PT$-symmetry breaking \cite{Yang17}, \cite{Garmon}, \cite{KZ17}, \cite{KZ19}
distinctively   different from the better studied $\PT$-symmetry breaking   through an exceptional point.  Complex eigenfunctions associated with bifurcated eigenvalues are
 $L^2$-integrable, and  the real parts of these eigenvalues  belong to the continuous spectrum.  This
enables  a physical  interpretation  of such eigenfunctions in terms of    non-Hermitian generalizations \cite{BICoptics}, \cite{Kartashov} of bound states {embedded} in the continuum, well-known in quantum mechanics  \cite{BIC1}, \cite{BIC2},  \cite{BIC4},  \cite{Robnik},  optics, and other fields \cite{Soljacic}. It should be noticed at the same time that most of the activity  devoted to non-Hermitian optical bound states in the continuum is being carried out for one-dimensional systems. For  multi-dimensional geometries,  most of the available results are numerical in nature \cite{Kartashov}.   From the practical  point of view, it is also important that eigenfunctions associated with emerging from the essential spectrum eigenvalues are extremely weakly localized in the vicinity of the bifurcation, which   hinders their efficient numerical evaluation. Naturally, this problem is even more pronounced in multi-dimensional geometries, where 
 {many} more computational resources {are} necessary to approximate  the eigenfunctions. Therefore,  any analytical information on the properties of such states is highly desirable.

\paragraph{Mathematical context.} The phenomenon that a small localized perturbation of a self-adjoint differential operator can generate discrete eigenvalues from the edges in the essential spectrum is known {for} about a hundred of years. Its rigorous mathematical study was initiated by classical works by B.~Simon, M.~Klaus,  R.~Blankenbecler, M.~L.~Goldberger \cite{Si76}, \cite{Kl77}, \cite{BGS77}, \cite{KS80} and since that time, hundreds of papers on this subject were written. While classical works were devoted to the Schr\"odinger operator on an axis and plane perturbed by a small localized potentials, in further works the studies were made for
plenty of other models, like
waveguide-like structures, see, for instance, \cite{ESTV}, \cite{ED}, \cite{Na}, \cite{JPA07}, for periodic operators, see, for instance, \cite{Zhe}, for operators with distant perturbations \cite{MK} and many others. All these works treated symmetric perturbations of self-adjoint operators, and the perturbed operators were self-adjoint as well.

Non-symmetric perturbations of self-adjoint operators were studied in essentially less details. In \cite{Ga1}, \cite{Ga2} there was considered the Laplacian on the axis perturbed by a small abstract localized operator, which was not assumed to be symmetric. The main result was sufficient conditions ensuring the existence and absence of the emerging eigenvalues from the bottom of the essential spectrum and if they exist, the leading terms in the  asymptotic expansions for the emerging eigenvalues were found. These results were essentially extended in  \cite{Izv08}, \cite{Izv11}. Here an unperturbed operator was an arbitrary periodic self-adjoint operator on the line \cite{Izv08} or on the plane \cite{Izv11}. A perturbation was a small abstract operator not necessarily symmetric and localized in a much weaker sense than in  \cite{Ga1}, \cite{Ga2}. The structure of the spectra of such operators was studied in details. Qualitative properties like stability of the essential spectrum, the countability of the point spectrum, the absence of the residual spectrum, the existence of embedded eigenvalues were addressed. Sufficient conditions ensuring the existence and absence of the eigenvalues emerging from edges of internal
gaps in the essential spectrum were established  and if they exist, the leading terms in their asymptotic expansions were obtained. Eigenvalues emerging from the bottom of the essential spectrum were also studied in \cite{IEOP08}, \cite{MZ15}, \cite{MSb17} for waveguides with $\PT$-symmetric Robin-type boundary condition. In \cite{IEOP08}, a planar waveguide   was considered with a locally perturbed coefficient in the $\PT$-symmetric boundary condition. In \cite{MZ15}, \cite{MSb17}, similar two- and three-dimensional waveguides were considered and the perturbation was a small width of these waveguides. The main obtained results were sufficient conditions ensuring the existence of the emerging eigenvalues and the leading terms of their asymptotic expansions. We also mention work \cite{G3}, where the Dirichlet or Neumann Laplacian in a multi-dimensional  cylinder was considered and it was perturbed by a small localized non-symmetric perturbations. The eigenvalues bifurcating from the bottom and the internal thresholds in the essential spectrum were studied. There were obtained certain sufficient conditions ensuring the existence of such eigenvalues and the leading terms of their asymptotic expansions were calculated. However, there was {a 
gap} in calculations in \cite{G3}, which made the results of this work true {but 
incomplete. Namely, while working with operators providing meromorphic continuations for the resolvent in the vicinity of internal thresholds, the author of \cite{G3} considered only \textit{one} continuation, while, as we show in the present work, even in our more general setting \textit{two} continuations exist and complex eigenvalues are the poles just for one of these continuations. This is why the results of \cite{G3} described, roughly speaking, only half of emerging eigenvalues.}

{The bifurcations of the thresholds in the essential spectrum can be also studied for
perturbations of non-self-adjoint operators provided the spectral structure of the limiting operator is known in sufficiently great details. As examples, we mention works \cite{CP}, \cite{V}, where an evolutionary nonlinear Schr\"odinger equation was considered with both linear and nonlinear perturbations. The linearization of this equation on solitary wave solutions gave rise to a spectral problem for a linear non-self-adjoint operator. The essential spectral of such operator consists of two real semi-axes; the main results of \cite{CP}, \cite{V} provided conditions, under which the end-points of the essential spectrum bifurcated into eigenvalues. If the latter existed, their two-terms asymptotic expansions were calculated.}

An important feature of the eigenvalues emergence is that usually the total multiplicity of the emerging eigenvalues does not exceed the multiplicity of edge in the essential spectrum from which they emerge. The multiplicity of the edge is to  be treated in the sense of some appropriate generalized eigenfunctions. However, there were found examples, when this commonly believed rule failed. The earliest work on this subject we know is paper \cite{GH}, where the Schr\"odinger operator on $\mathds{R}^3$ perturbed by a small localized potential was considered. It was found that in certain cases, an $n$-multiple bottom of the essential spectrum can generate $n$ eigenvalues and $n$ anti-bound states or $2n$ resonances. In \cite{PMA14}, a similar phenomenon was found for  the Dirichlet Laplacian in a pair of three-dimensional layers coupled by a window, when the perturbation was a small variation of the window shape. Very recently we succeeded to find  an even more impressive example of infinitely many eigenvalues and/or resonances emerging from the bottom of   an   essential spectrum. This was done in papers \cite{JPA19}, \cite{AML20}, where we considered  {a} 
one-dimensional Schr\"odinger operator on the axis with two complex localized potentials,
 {the supports of which}  
 were separated by a large distance. It was found that as this distance increases, more and more resonances and eigenvalues appear in the vicinity of the bottom of the essential spectrum, while the multiplicity of this bottom is at most one. The location and asymptotic behaviour of these emerging eigenvalues and resonances were analyzed in details.

Emerging eigenvalues were also studied not only in the context of classical eigenvalue problems, but also for more complicated operator pencils. In \cite{ESAIM20}, there was considered a special quadratic operator pencil on the line with a special small periodic  $\PT$-symmetric perturbation. The structure of the gaps in the essential spectrum and complex eigenvalues in the vicinities of the edges of these gaps were analyzed in great details. In \cite{SAM17}, there was considered a similar quadratic operator pencil with a special small localized $\PT$-symmetric potential and there were studied eigenvalues emerging from thresholds in the essential spectrum. Sufficient existence conditions were established and the leading terms of the asymptotic expansions of the emerging eigenvalues were obtained.

\paragraph{Model and results.}  In the present paper we carry out a rigorous analysis of bifurcations of isolated eigenvalues and resonances from the essential spectrum of a   multi-dimensional operator under a small localized general abstract perturbation. Namely, we consider a self-adjoint operator of the form
\begin{equation*}
  \Op=\Op' -\frac{\p^2\ }{\p x_d^2} \quad\text{on}\quad\om\times\mathds{R}
\end{equation*}
subject to the Dirichlet or Robin condition, where $\om\subseteq\mathds{R}^{d-1}$ is some domain, which can be both bounded and unbounded and $\Op'$ is a self-adjoint second order differential operator on $\om$
subject to the same boundary condition as $\Op$. We assume that the spectrum of the operator $\Op'$ contains several discrete eigenvalues $\L_1\leqslant \L_2\leqslant\ldots \leqslant \L_m$ below the essential spectrum. Then the essential spectrum of the operator $\Op$  is the half-line $[\L_1,+\infty)$ and the mentioned eigenvalues become thresholds in this essential spectrum. We add a small localized perturbation to the operator $\Op$. This perturbation reads as $\e\cL(\e)$, where $\e$ is a small positive parameter and $\cL(\e)$ is an abstract not necessarily symmetric operator acting from a weighted Sobolev space $W_2^2(\Om,e^{-{\vt}|x_d|}dx)$ into a weighted Lebesgue space $L_2(\Om,e^{{\vt}|x_d|}dx)$. Exact definitions of these spaces will be given in the next section and now we just say that these weights and the operator $\cL(\e)$ are designed so that this operator maps exponentially growing functions into exponentially decaying ones. The latter fact is exactly how we understand the localization of this operator in a generalized sense.

The main result of our paper describes how the thresholds $\L_j$ in the essential spectrum of the operator $\Op$ bifurcate under the presence of the perturbation $\e\cL(\e)$. We show that if the bottom of the essential spectrum $\L_1$ is an $m$-multiple eigenvalue of the operator $\Op'$, then there can be at most $m$ eigenvalues and resonances of the operator $\Op$ in the vicinity of $\L_1$ for sufficiently small $\e$. The vicinity of an internal threshold
$\L_j>\L_1$ in the essential spectrum being an  $m$-multiple eigenvalue of the operator $\Op'$ can contain at most $2m$ eigenvalues and resonances of the operator $\Op$. The eigenvalues and resonances are identified via an analysis of the poles of an appropriate 
 {meromorphic}
 continuation of the resolvent in the vicinity of each threshold $\L_j$, $j\geqslant 1$.  Each such pole generates either an eigenvalue or a resonance, and we provide simple sufficient conditions allowing one to identify whether a considered pole is an eigenvalue or a resonance. We also construct two-terms asymptotic expansions for the emerging eigenvalues and resonances.

\paragraph{Applications to specific models.}  While our result is rather general
and applies to a broad range of physical models, where elliptic operators play the prominent role, we will exemplify applications of the work using some
particular physical models. Namely,
we discuss various examples of the unperturbed operator $\Op$ and of the perturbation $\cL(\e)$. Then we consider   models of two- and three-dimensional waveguides  and   models of two- and three-dimensional quantum oscillators. As a perturbation, we choose a small localized complex potential. Such choice is motivated by physical models of optical waveguides  filled with a homogeneous medium, when the refractive index of the waveguide is locally modulated. This creates a small, generally, non-Hermitian  perturbation   in the form of an   effective complex-valued optical potential. In particular, this potential can be  $\PT$-symmetric.
Another physical model motivating the above examples is
a
two-dimensional Bose-Einstein condensate trapped in a harmonic potential in one dimension and without any trapping in the second dimension. The nonlinear interactions between particles of the condensate are assumed to be negligible such that its evolution can be described by the   linear Schr\"odinger operator. As a perturbation,
a localized non-Hermitian defect such as a localized dissipation serves.  The similar approach can be applied to a  three-dimensional condensate, where a localized perturbation can trigger formation of fully localized structures with internal vorticity.

For such examples we show that given an internal threshold of a multiplicity $n$,  by tuning appropriately the perturbing potential, we can make the threshold to bifurcate into $n$ pairs of complex-conjugated eigenvalues. This example demonstrates that the total multiplicity of the emerging eigenvalues can exceed the multiplicity of the internal threshold.

\paragraph{Organization of the paper.} The rest of this paper is organized as follows.  In  Section~\ref{sec:problem} we elaborate rigorous mathematical formulation of the problem, and then present and discuss the main results which are formulated in several theorems. Section~\ref{sec:examples} is dedicated to examples, including a case study of optical bound states in the continuum emerging under a small $\PT$-symmetric perturbation. Sections~\ref{sec:proof1} and \ref{sec:proof2} contain the proofs of theorems.

\section{Problem and  results}
\label{sec:problem}
\subsection{Problem}

Let $x'=(x_1,\ldots,x_{d-1})$, $x=(x',x_d)$ be Cartesian coordinates in $\mathds{R}^{d-1}$ and $\mathds{R}^d$, respectively, where $d\geqslant 2$, and $\om\subseteq\mathds{R}^{d-1}$ be an arbitrary domain. The domain $\om$ can be bounded or unbounded, the case $\om=\mathds{R}^{d-1}$ is also possible. If the boundary of the domain $\om$ is non-empty, we assume that $\p\om\in C^2$.
We let $\Om:=\om\times\mathds{R}$ and we suppose that the domain $\om$   is such that
\begin{equation*}
\|u\|_{L_2(\p\om)}\leqslant C\|u\|_{W_2^1(\om)}
\end{equation*}
for all $u\in W_2^1(\om)$ with a constant $C$ independent of $u$. This inequality implies that
\begin{equation*}
\|u\|_{L_2(\p\Om)}\leqslant C\|u\|_{W_2^1(\Om)}
\end{equation*}
for all $u\in W_2^1(\Om)$ with a constant $C$ independent of $u$. This means that on the boundary of the domain $\Om$, the traces of the functions in $W_2^1(\Om)$ are well-defined and the trace operator is bounded. This fact is employed below in definitions of various operators and sesquilinear forms without explicit mentioning.

By $A_{ij}=A_{ij}(x')$, $A_j=A_j(x')$, $A_0=A_0(x')$, $i,j=1,\ldots,d-1$, we denote real functions defined on $\overline{\om}$ and with the following smoothness: $A_{ij}, A_j\in C^1(\overline{\om})$, $A_0\in C(\overline{\om})$. The functions $A_{ij}$ satisfy the usual uniform ellipticity condition, that is, $A_{ij}=A_{ji}$ and
\begin{equation*}
\sum\limits_{i,j=1}^{d-1} A_{ij}{(x')} \xi_i\overline{\xi_j}\geqslant c_0\sum\limits_{i=1}^{d-1} |\xi_i|^2\quad\text{for all}\quad x'\in\overline{\om}, \quad \xi_i\in\mathds{C},
\end{equation*}
where $c_0>0$ is a positive constant independent of {$x'$} and $\xi_i$. The functions $A_{ij}$ and $A_j$ are assumed to be uniformly bounded on $\overline{\om}$, while for $A_0$ only an uniform lower bound is supposed. By $\iu$ we denote the imaginary unit.

In terms of the introduced functions we define an operator
\begin{equation}\label{2.0}
\Op=-\sum\limits_{i,j=1}^{d-1}\frac{\p\ }{\p x_i} A_{ij} \frac{\p\ }{\p x_j}-\frac{\p^2\ }{\p x_d^2} + \iu \sum\limits_{j=1}^{d-1} \left(A_j\frac{\p\ }{\p x_j} +\frac{\p\ }{\p x_j}A_j \right)+A_0\quad\text{in}\quad  \Om
\end{equation}
 subject to   the Dirichlet condition or Robin condition:
\begin{equation}\label{2.3}
u=0\quad\text{on}\quad\p\Om\qquad\text{or}\qquad \frac{\p u}{\p \boldsymbol{\nu}}-a u=0\quad\text{on}\quad\p\Om.
\end{equation}
In the case of Robin condition, the conormal derivative is defined as
\begin{equation*}
\frac{\p u}{\p \boldsymbol{\nu}}:=\sum\limits_{i,j=1}^{d-1} A_{ij}\nu_i \frac{\p u}{\p x_j}-\iu\sum\limits_{j=1}^{d-1} A_j \nu_j {u} + \nu_d \frac{\p u}{\p x_d},
\end{equation*}
where $\nu=(\nu_1,\ldots,\nu_d)$ is the unit outward normal to $\p\Om$ and $a=a(x')$ is a real function defined on $\p\Om$. We assume that $a\in C(\p\Om)$ and that this function is uniformly bounded on $\p\Om$. We define a chosen boundary operator in (\ref{2.3}) by $\cB$, that is, $\cB u=u$ or $\cB u=\frac{\p u}{\p \boldsymbol{\nu}}-a u$.

Rigorously we introduce the operator $\Op$ as follows.
In the space $L_2(\Om)$ we define a sesquilinear form
\begin{align*}
\hf(u,v):=&\sum\limits_{i,j=1}^{d-1}\left(A_{ij}\frac{\p u}{\p x_j}, \frac{\p v}{\p x_i}\right)_{L_2(\Om)} + \left(\frac{\p u}{\p x_d}, \frac{\p v}{\p x_d} \right)_{L_2(\Om)}+ \iu\sum\limits_{j=1}^{d-1} \left(A_j\frac{\p u}{\p x_j}, v\right)_{L_2(\Om)}
\\
&
-\iu \sum\limits_{j=1}^{d-1} \left(u, A_j\frac{\p v}{\p x_j}\right)_{L_2(\Om)}
+
 (A_0 u,v)_{L_2(\Om)}
\end{align*}
on the domain $\Dom(\hf):=\Ho^1(\Om)\cap L_2(\Om,(1+|A_0|)dx)$ if the Dirichlet condition is chosen in (\ref{2.3}),
and
\begin{align*}
\hf(u,v):=&\sum\limits_{i,j=1}^{d-1}\left(A_{ij}\frac{\p u}{\p x_j}, \frac{\p v}{\p x_i}\right)_{L_2(\Om)} + \left(\frac{\p u}{\p x_d}, \frac{\p v}{\p x_d} \right)_{L_2(\Om)}+\iu \sum\limits_{j=1}^{d-1} \left(A_j\frac{\p u}{\p x_j}, v\right)_{L_2(\Om)}
\\
&
-\iu \sum\limits_{j=1}^{d-1} \left(u, A_j \frac{\p v}{\p x_j}\right)_{L_2(\Om)}
  +
 (A_0 u,v)_{L_2(\Om)}
  - (a u,v)_{L_2(\p\Om)}
\end{align*}
on the domain $\Dom(\hf):=W_2^1(\Om)\cap L_2(\Om,(1+|A_0|)dx)$. Here $\Ho^1(\Om)$ is a subspace of the space $W_2^1(\Om)$ consisting of the functions with a zero trace on $\p\Om$.  Given a positive function {$\phi$} on $\Om$, by $L_2(\Om, \phi dx)$ we denote a weighted space formed by the functions in $L_{2,loc}(\Om)$ with a finite norm $\|\cdot\|_{L_2(\Om, \phi dx)}$ defined as
\begin{equation*}
\|u\|_{L_2(\Om, \phi\,dx)}^2=\int\limits_{\Om}
  |u(x)|^2 \phi\, dx.
\end{equation*}

Thanks to the above assumptions on the functions $A_{ij}$, $A_j$, $A_0$ and $a$, the form $\hf$ is closed, symmetric and lower-semibounded. The self-adjoint operator
in $L_2(\Om)$ associated with this form is exactly the operator $\Op$.

We introduce one more weighted space  $W_2^2(\Om,e^{-{\vt}|x_d|}dx)$  as a subspace of $W_{2,loc}^2(\Om)$  formed by the functions with finite norms $\|\cdot\|_{W_2^2(\Om,e^{-{\vt}|x_d|}dx)}$, is defined as follows:
\begin{equation*}
\|u\|_{W_2^2(\Om,e^{-{\vt}|x_d|}dx)}^2=\int\limits_{\Om}
\sum\limits_{\substack{\a\in\mathds{Z}_+^2\\|\a|\leqslant 2}} |\p^\a u(x)|^2 e^{-{\vt}|x_d|}dx.
\end{equation*}
Here {$\vt>0$} is some fixed constant.
By $\e$ we denote a small positive parameter and the symbols $\cL_1$,   $\cL_2$, $\cL_3=\cL_3(\e)$ stand for operators mapping the {space} 
 $W_2^2(\Om,e^{-{\vt}|x_d|}dx)$ into $L_2(\Om,e^{{\vt}|x_d|}dx)$. These operators are assumed to be bounded; the operator $\cL_3$ is bounded uniformly in $\e$. We stress that the operators  $\cL_1$, $\cL_2$, $\cL_3$ are not {supposed to be necessarily} 
 symmetric.

The main object of our study is a perturbed operator
\begin{equation*}
\Op_\e=\Op+\e\cL(\e),\qquad \cL(\e):=\cL_1+\e\cL_2+\e^2\cL_3(\e)
\end{equation*}
on $\Dom(\Op)$. The operator $\Op_\e$ is well-defined since
\begin{equation*}
\Dom(\Op)\subseteq W_2^2(\Om) \subseteq W_2^2(\Om,e^{-{\vt}|x_d|}dx).
\end{equation*}
Moreover, it is clear that the operator $\cL(\e)$ is relatively bounded with respect to the operator $\Op$ and this is why, for sufficiently small $\e$, the operator $\Op_\e$ is closed.

Our main {aim} is to study the behaviour of the eigenvalues of the operator $\Op_\e$ emerging from certain internal points in its essential spectrum. We denote the latter by $\essspec(\cdot)$ and define it in terms of a characteristic sequences. Namely, a point $\l$ belongs to an essential spectrum $\essspec(\cA)$ of some operator $\cA$ if there exists a bounded noncompact sequence $u_d\in\Dom(\cA)$ such that
\begin{equation*}
\inf\limits_{d}\|u_d\|>0\qquad\text{and}\qquad (\cA-\l)u_d\to0,\quad d\to\infty.
\end{equation*}

In order to describe the essential spectrum $\essspec(\Op_\e)$, we introduce  two auxiliary operators {$\Op'$} and $\Op_0$. The former is a self-adjoint operator in $L_2(\mathds{R}^{d-1})$ associated with a lower-semibounded symmetric sesquilinear
form
\begin{align*}
\hf'(u,v):=&\sum\limits_{i,j=1}^{d-1}\left(A_{ij}\frac{\p u}{\p x_j}, \frac{\p v}{\p x_i}\right)_{L_2(\om)}
+ \iu\sum\limits_{j=1}^{d-1}  \left(A_j\frac{\p u}{\p x_j}, v\right)_{L_2(\om)}
\\
&-\iu\sum\limits_{j=1}^{d-1} \left(u, A_j\frac{\p v}{\p x_j}\right)_{L_2(\om)}
  +
 (A_0 u,v)_{L_2(\om)}
\end{align*}
on the domain $\Dom(\hf'):=\Ho^1(\om)\cap L_2(\om,(1+|A_0|)dx')$ if the Dirichlet condition is chosen in (\ref{2.3}) and
\begin{align*}
\hf'(u,v):=&\sum\limits_{i,j=1}^{d-1}\left(A_{ij}\frac{\p u}{\p x_j}, \frac{\p v}{\p x_i}\right)_{L_2(\om)}
+   \iu\sum\limits_{j=1}^{d-1}  \left(A_j\frac{\p u}{\p x_j}, v\right)_{L_2(\om)}
\\
&
  - \iu\sum\limits_{j=1}^{d-1}\left(u, A_j \frac{\p v}{\p x_j}\right)_{L_2(\om)}
+
 (A_0 u,v)_{L_2(\om)}-(au,v)_{L_2(\p\om)}
\end{align*}
on the domain $\Dom(\hf'):=W_2^1(\om)\cap L_2(\om,(1+|A_0|)dx')$ if the Robin condition is chosen in (\ref{2.3}). This is the operator
\begin{equation*}
\Op'=-\sum\limits_{i,j=1}^{d-1}\frac{\p\ }{\p x_i} A_{ij} \frac{\p\ }{\p x_j} + \iu \sum\limits_{j=1}^{d-1} \left(A_j\frac{\p\ }{\p x_j} +\frac{\p\ }{\p x_j}A_j \right)+A_0\quad\text{in}\quad  \om
\end{equation*}
 subject to the Dirichlet condition or Robin condition:
\begin{equation*}
u=0\quad\text{on}\quad\p\om\qquad\text{or}\qquad \frac{\p u}{\p \boldsymbol{\nu}'}=a u\quad\text{on}\quad\p\Om,\qquad
\frac{\p u}{\p \boldsymbol{\nu}'}:=\sum\limits_{i,j=1}^{d-1} A_{ij}\nu_i \frac{\p u}{\p x_j}-\iu\sum\limits_{j=1}^{d-1} A_j \nu_j.
\end{equation*}

The operator $\Op_0$ is a one-dimensional Schr\"odinger operator
\begin{equation*}
\Op_0:=-\frac{d^2\ }{dx_d^2}
\end{equation*}
in $L_2(\mathds{R})$ on the domain $W_2^2(\mathds{R})$. Its spectrum is pure essential and coincides with $[0,+\infty)$. We assume that there exists a constant $c_0$ such that the spectrum of the  operator $\Op'$ below this constant consists of finitely many discrete eigenvalues, which we denote by $\L_j$ and we {arrange} 
them in an ascending order counting multiplicities:
\begin{equation*}
\L_1\leqslant \L_2\leqslant\ldots\leqslant\L_m<c_0.
\end{equation*}
The associated orthonormalized in $L_2(\mathds{R}^{d-1})$ eigenfunctions are denoted by $\psi_j=\psi_j(x')$, $j=1,\ldots,m$.

The  essential spectrum of the operator $\Op_\e$ is described in the following lemma.

\begin{lemma}\label{lm2.1}
The essential spectrum of the operator $\Op_\e$ coincides with that of the operator $\Op$ for all sufficiently small $\e$ and   is given by the identity:
\begin{equation*}
\essspec(\Op_\e)=\essspec(\Op)=\spec(\Op)=[\L_1,+\infty).
\end{equation*}
where $\spec(\cdot)$ denotes a spectrum of an operator.
\end{lemma}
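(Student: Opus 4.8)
The plan is to split the assertion into two independent parts: first the identity $\spec(\Op)=\essspec(\Op)=[\L_1,+\infty)$ for the unperturbed operator, and then the stability $\essspec(\Op_\e)=\essspec(\Op)$. For the unperturbed operator I would use that the coefficients $A_{ij},A_j,A_0$ and the boundary operator $\cB$ depend only on $x'$, so $\Op$ is invariant under the translations $x_d\mapsto x_d+t$. Passing to the partial Fourier transform $\mathcal{F}$ in $x_d$ identifies $\Op$ unitarily with the direct integral $\int_{\mathds{R}}^{\oplus}(\Op'+\xi^2)\,d\xi$ acting in $L_2(\mathds{R}_\xi;L_2(\om))$. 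From this, $\spec(\Op)=\overline{\bigcup_{\xi\in\mathds{R}}\spec(\Op'+\xi^2)}=\overline{\spec(\Op')+[0,+\infty)}=[\L_1,+\infty)$, because $\L_1=\min\spec(\Op')$ and $\spec(\Op')$ is closed; and an $L_2$-eigensection $\xi\mapsto f(\xi)$ at an energy $\mu$ would have to satisfy $f(\xi)\in\ker\big(\Op'-(\mu-\xi^2)\big)$ for a.e.\ $\xi$, which forces $f=0$ since $\mu-\xi^2$ meets the at most countable point spectrum of $\Op'$ only on a countable set of $\xi$. Hence $\Op$ has no eigenvalues and $\spec(\Op)=\essspec(\Op)=[\L_1,+\infty)$.

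Next I would prove $[\L_1,+\infty)\subseteq\essspec(\Op_\e)$ by exhibiting explicit singular sequences that are pushed off to $x_d=+\infty$. Fix $\mu=\L_1+k^2$ with $k\geqslant0$ and put $u_n(x):=\psi_1(x')\,e^{\iu kx_d}\,\chi\big((x_d-n^2)/n\big)$, where $\chi\in C_0^\infty(\mathds{R})$, $\chi\not\equiv0$, $\supp\chi\subseteq(-1,1)$. One checks that $u_n\in\Dom(\Op)=\Dom(\Op_\e)$ (a product of $\psi_1\in\Dom(\Op')\subseteq W_2^2(\om)$ with a smooth compactly supported function of $x_d$; the boundary condition $\cB$ is preserved since the cut-off depends only on $x_d$ and $\nu_d=0$ on $\p\Om$), the $x_d$-supports are eventually disjoint so $\{u_n\}$ is non-compact, $\inf_n\|u_n\|_{L_2(\Om)}>0$, and $(\Op-\mu)u_n=\psi_1(-\p_{x_d}^2-k^2)\big(e^{\iu kx_d}\chi((x_d-n^2)/n)\big)$ is of norm $O(n^{-1/2})$ relative to $\|u_n\|$. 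The gain from placing the support near $x_d=n^2$ is that $\|\e\cL(\e)u_n\|_{L_2(\Om)}\leqslant\|\e\cL(\e)u_n\|_{L_2(\Om,e^{\vt|x_d|}dx)}\leqslant\e\|\cL(\e)\|\,\|u_n\|_{W_2^2(\Om,e^{-\vt|x_d|}dx)}$, and the last norm is $O\big(e^{-\vt(n^2-n)/2}\big)$ relative to $\|u_n\|$; hence $(\Op_\e-\mu)u_n\to0$ and $\mu\in\essspec(\Op_\e)$.

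For the converse inclusion, let $\lambda\in\mathds{C}\setminus[\L_1,+\infty)=\rho(\Op)$ and set $\rho_\lambda:=\dist(\lambda,[\L_1,+\infty))>0$. Assume $\lambda\in\essspec(\Op_\e)$ and take a singular sequence which, after the standard reduction (replacing it by differences of a $\delta$-separated weakly convergent subsequence), may be assumed normalized, $(\Op_\e-\lambda)u_n\to0$, $u_n\rightharpoonup0$. The elliptic estimate $\|v\|_{W_2^2(\Om)}\leqslant C(\|\Op v\|+\|v\|)$ together with boundedness of $\cL(\e)$ yields $\sup_n\|u_n\|_{W_2^2(\Om)}<\infty$. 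I would then decompose $u_n=\chi_R u_n+(1-\chi_R)u_n$ with $\chi_R(x_d)=\chi(x_d/R)$, $\chi\equiv1$ on $[-1,1]$: the commutator $[\Op,\chi_R]$ contributes only $O(R^{-1})$, and — the key point — the exponentially weighted target space of $\cL(\e)$ forces $\|(1-\chi_R)\cL(\e)u_n\|_{L_2(\Om)}$ and $\|\cL(\e)\big((1-\chi_R)u_n\big)\|_{L_2(\Om)}$ to be $O(e^{-\vt R/2})$ uniformly in $n$. Combining these with $\|(\Op-\lambda)v\|\geqslant\rho_\lambda\|v\|$ and letting $n\to\infty$ and then $R\to\infty$ gives $\inf_n\|u_n\|\leqslant C\e/\rho_\lambda$, a contradiction once $\e<\rho_\lambda/C$. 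This rules out every $\lambda$ at a fixed positive distance from $[\L_1,+\infty)$, so $\essspec(\Op_\e)\subseteq\{\,0<\dist(\lambda,[\L_1,+\infty))\lesssim\e\,\}\cup[\L_1,+\infty)$.

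The main obstacle is precisely the thin crescent $\{\,0<\dist(\lambda,[\L_1,+\infty))\lesssim\e\,\}$, where the bound above degenerates. Note that $\e\cL(\e)$ is \emph{not} a relatively compact perturbation of $\Op$ when $\om$ is unbounded and $\Op'$ has essential spectrum (a moving bump in the $x'$-direction supplies a counterexample), so Weyl's theorem is not available. To close this region I would use that the perturbation is exponentially localized only in $x_d$, while the ``operator at infinity in the $x'$-directions'' has spectrum bounded below by the constant $c_0>\L_1$ (below which $\spec(\Op')$ is discrete), so for $\e$ small the crescent carries no genuine Weyl sequence; concretely, I expect this to be realized through the meromorphic continuation of the resolvent of $\Op_\e$ across $[\L_1,+\infty)$ in weighted spaces constructed later in the paper, whose only singularities near $\L_1$ are the isolated emerging eigenvalues — and an isolated eigenvalue does not belong to the essential spectrum. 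Putting the two inclusions together then gives $\essspec(\Op_\e)=[\L_1,+\infty)=\essspec(\Op)=\spec(\Op)$.
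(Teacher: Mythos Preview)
Your treatment of $\spec(\Op)=\essspec(\Op)=[\L_1,+\infty)$ via the direct integral in $x_d$, and of the inclusion $[\L_1,+\infty)\subseteq\essspec(\Op_\e)$ via Weyl sequences pushed to $x_d\to+\infty$, is correct and more explicit than the paper, which just records the tensor decomposition $\Op=\Op'\otimes\cI+\cI\otimes\Op_0$ and reads off the spectrum.

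The reverse inclusion $\essspec(\Op_\e)\subseteq[\L_1,+\infty)$, however, has a genuine gap which your proposed fix does not close. First, the cut-off step is incomplete: you control $(1-\chi_R)\cL(\e)u_n$ and $\cL(\e)\big((1-\chi_R)u_n\big)$, but say nothing about $\chi_R\cL(\e)u_n$; for unbounded $\om$ there is no Rellich compactness on $\om\times[-2R,2R]$, so $u_n\rightharpoonup0$ does not force this term to vanish, and your argument collapses to the naive estimate $\rho_\lambda\leqslant C\e(1+|\lambda|)$. In particular the ``crescent'' is an $\e$-neighbourhood of the \emph{entire} half-line $[\L_1,+\infty)$, not just a cap near $\L_1$. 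Second, the meromorphic continuation of Theorem~\ref{thAnCo} cannot rescue you: it is constructed only on small balls $B_\d$ about each threshold $\L_j$, so it says nothing about, e.g., $\lambda=\L_1+t+\iu\e/2$ with $t>0$ far from every $\L_j$; and in any case it is proved \emph{after} Lemma~\ref{lm2.1} in the paper's logical order. The paper itself does not give a self-contained argument here either --- it simply cites Lemma~2.3 of \cite{Izv11} for the identity $\essspec(\Op_\e)=\essspec(\Op)$. The heuristic you mention (that Weyl sequences escaping in the $x'$-direction must carry energy $\geqslant c_0>\L_1$) is the right idea; a way to make it precise that stays within the paper's toolkit is to project a putative singular sequence onto $\mathrm{span}\{\psi_1,\ldots,\psi_m\}\otimes L_2(\mathds{R})$ and onto $L^\bot$ (cf.\ Lemma~\ref{lmLiRes}), and to exploit that on the first factor the problem is effectively one-dimensional in $x_d$ while on the second the spectrum starts at $c_0$ --- rather than to invoke the local meromorphic continuation.
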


According {to} this lemma, the points $\L_j$, $j=1,\ldots,m$, belong to the essential spectrum of the operator $\Op_\e$. The point $\L_1$ is the bottom of such spectrum, while other points $\L_j$ are internal thresholds.

\subsection{Main results}

Our   results describe 
 {a meromorphic} continuation of the resolvent of the operator $\Op_\e$ in the vicinity of the points $\L_j$ as well as eigenvalues and resonances emerging from  these points due to the presence of the perturbation $\e\cL(\e)$.
Before presenting our main results, we introduce some auxiliary constants and notations.

By $B_\d$ we denote a ball of radius $\d$ centered at the origin in the complex plane. We fix $p\in\{1,\ldots,m\}$ and assume that $\L_p=\ldots=\L_{p+n-1}$ is an  $n$-multiple eigenvalue of the operator $\Op'$, where $n\geqslant1$. Then we consider a new complex parameter $k$ ranging in a small neighbourhood of the origin and we introduce auxiliary functions:
\begin{align*}
&K_j(k):=-\iu\sqrt{\L_p-\L_j-k^2} \qquad\text{as}\quad j<p,
\\
&K_j(k):=k\hphantom{1\sqrt{\L_p-\L_j-k^2}}  \qquad\text{as}\quad j=p,\ldots,p+n-1,
\\
&
K_j(k):=\sqrt{\L_j-\L_p+k^2}\hphantom{-\iu} \qquad\text{as}\quad j\geqslant p+n.
\end{align*}
Hereinafter the branch of the square root is fixed by the condition $\sqrt{1}=1$ {with the branch cut along the negative real semi-axis}. Given $R>0$, we let $\Om_\pm^R:=\Om\cap\{x: \pm x_d>R\}$.

Now we are in position to formulate our first main result.

\begin{theorem}\label{thAnCo} Fix $p\in\{1,\ldots, m\}$,
 and $\tau\in\{-1,+1\}$ and let $\L_p=\ldots=\L_{p+n-1}$ be an  $n$-multiple eigenvalue of the operator $\Op'$, where $n\geqslant1$.
For all sufficiently small $\e$, the resolvent $(\Op_\e{-}\L_p+k^2)^{-1}$ admits {a meromorphic} 
continuation with respect to a complex parameter $k$ ranging in a sufficiently small neighbourhood of the origin. Namely, there exists a bounded operator \begin{equation*}
\cR_{\e,\tau}(k):\, L_2(\Om,e^{{\vt}|x_d|}\,dx)\to W_2^2(\Om,e^{-{\vt}|x_d|}\,dx)
\end{equation*}
meromorphic with respect to complex $k\in B_\d$ for a sufficiently small fixed $\d$  independent of $\e$.
If $p=1$, then the operator $\cR_{\e,\tau}(k)$ is independent of the choice of $\tau$ and for $\RE k>0$, this operator coincides with the resolvent $(\Op_\e- \L_1+k^2)^{-1}$ restricted on $L_2(\Om,e^{{\vt}|x_d|}\,dx)$. If $p>1$, then  the operator $\cR_{\e,\tau}(k)$ does depend on the choice of $\tau$
and for
$\RE k>0$ and $\tau\IM k^2<0$, this operator   coincides with the resolvent $(\Op_\e- \L_p+k^2)^{-1}$ restricted on $L_2(\Om,e^{{\vt}|x_d|}\,dx)$.

For all $f\in L_2(\Om,e^{{\vt}|x_d|}\,dx)$, the function $u_\e:=\cR_{\e,\tau}(k)f$ solves the boundary value problem
\begin{equation}\label{2.5}
 \begin{gathered}
\left(
 -\sum\limits_{i,j=1}^{d-1}\frac{\p\ }{\p x_i} A_{ij} \frac{\p\ }{\p x_j} -\frac{\p^2\ }{\p x_d^2}+ \iu \sum\limits_{j=1}^{d-1} \left(A_j\frac{\p\ }{\p x_j} +\frac{\p\ }{\p x_j}A_j \right)+A_0 + \e\cL(\e) -\L_p+k^2
 \right)u_\e=f\quad\text{in}\quad\Om,
 \\
\cB u_\e=0 \quad\text{on}\quad\p\Om,
\end{gathered}
\end{equation}
 and for sufficiently large $x_d$ it can be represented as follows:
\begin{equation}\label{2.6}
u_\e(x,k)=\sum\limits_{j=1}^{m} u_{\e,j}^\pm(x_d,k)\psi_j(x') +u_{\e,\bot}^\pm(x,k),\qquad \pm x_d>R,
\end{equation}
where $R$ is some fixed number, $u_{\e,j}^\pm\in L_2(I_\pm,e^{{\mp\vt} x_d}dx_d)$ are some meromorphic in $k\in B_\d$
functions, $I_+:=(R,+\infty)$, $I_-:=(-\infty,-R)$, possessing the asymptotic behavior
\begin{equation}\label{2.6a}
\begin{aligned}
&u_{\e,j}^\pm(x_d)= e^{-\tau K_j(k)|x_d|}\big(C_{\e,j}^\pm(k)+O({e}^{-{\tilde{\vt}}|x_d|})\big),\quad && |x_d|\to\infty,\qquad j=1,\ldots,p-n+1,
\\
&u_{\e,j}^\pm(x_d)=e^{- K_j(k)|x_d|}\big(C_{\e,j}^\pm(k)+O({e}^{-{\tilde{\vt}}|x_d|})\big), && |x_d|\to\infty,\qquad j=p,\ldots,m,
\end{aligned}
\end{equation}
$C_{\e,j}^\pm(k)$ are some meromorphic in $k\in B_\d$ functions, $0<{\tilde{\vt}<\vt}$ is some fixed constant independent of $k$ and $x$,
and $u_{\e,{\bot}}^\pm\in W_2^2(\Om_R^\pm)$ are some functions meromorphic in $k\in B_\d$ and obeying the identities
\begin{equation}\label{2.6b}
(u_{\e,\bot}^\pm(\cdot,x_d),\psi_j)_{L_2({\om})}=0
\end{equation}
for almost each $x_d\in I_\pm$ and for each $j=1,\ldots,m$.

If $k_\e\in B_\d$ is a pole of the operator $\cR_{\e,\tau}(k)$, for $k=k_\e$, problem (\ref{2.5}) with $f=0$ has a non-trivial solution $\psi_\e$ in $W_{2,loc}^2(\Om)$, which
satisfies a representation similar to (\ref{2.6}):
\begin{equation}\label{2.8}
\psi_\e(x)=
\sum\limits_{j=1}^{m} \phi_{\e,j}^\pm(x_d,k)\psi_j(x') +\psi_{\e,\bot}^\pm(x,k),\qquad \pm x_d>R,
\end{equation}
where $R$ is some fixed number,
$\phi_{\e,j}^\pm\in L_2(I_\pm,e^{{\mp\vt} x_d}dx_d)$ are
functions with the asymptotic behavior
\begin{equation}\label{2.8a}
\begin{aligned}
&\phi_{\e,j}^\pm(x_d)= e^{-\tau K_j(k)|x_d|}\big(c_{\e,j}^\pm(k)+O({e}^{-{\tilde{\vt}}|x_d|})\big),\quad && |x_d|\to\infty,\qquad j=1,\ldots,p-n+1,
\\
&\phi_{\e,j}^\pm(x_d)=e^{- K_j(k)|x_d|}\big(c_{\e,j}^\pm(k)+O({e}^{-{\tilde{\vt}}|x_d|})\big), && |x_d|\to\infty,\qquad j=p,\ldots,m,
\end{aligned}
\end{equation}
$c_{\e,j}^\pm(k)$ are some constants,
and $\psi_\e^\pm\in W_2^2(\Om_R^\pm)$ are some functions  obeying the identities
\begin{equation}\label{2.8b}
(\psi_{\e,\bot}^\pm(\cdot,x_d),\psi_j)_{L_2({\om})}=0
\end{equation}
for almost each $x_d\in I_\pm$ and for each $j=1,\ldots,m$.
\end{theorem}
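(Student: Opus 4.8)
\emph{Proof strategy.} The plan is to construct the continuation first for the unperturbed operator $\Op$, where it can be written down explicitly by separation of variables, and then to pass to $\Op_\e$ by a finite-rank Fredholm argument. Throughout, $\vt$ is taken small relative to the spectral gaps $c_0-\L_m$ and $\L_{p+n}-\L_p$, and the Dirichlet and Robin cases are treated simultaneously: on $\p\Om=\p\om\times\mathds{R}$ the operator $\cB$ acts only in the variables $x'$ and commutes with $\p^2/\p x_d^2$, so the transverse decomposition used below respects it. Since $\Op'$ is self-adjoint, $L_2(\om)$ splits into $\mathrm{span}\{\psi_1,\dots,\psi_m\}$ and its $\Op'$-invariant orthogonal complement $\mathcal{H}_\bot$, on which $\Op'\geqslant c_0>\L_p$; writing $u(x)=\sum_{j=1}^m u_j(x_d)\psi_j(x')+u_\bot(x)$ with $u_\bot(\cdot,x_d)\in\mathcal{H}_\bot$ turns $\Op-\L_p+k^2$ into a block-diagonal operator, the $j$-th scalar block being $\Op_0+(\L_j-\L_p+k^2)$ and the transverse block being $\Op'|_{\mathcal{H}_\bot}-\p^2/\p x_d^2-\L_p+k^2$.

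\emph{The unperturbed continuation.} I would define $\cR_{0,\tau}(k)$ blockwise. The transverse block has numerical range in $\{\RE z\geqslant c_0-\L_p-|k|^2\}$, so for small $\d$ it is boundedly invertible in $L_2$, holomorphic in $k\in B_\d$, and by a Combes--Thomas estimate it maps $L_2(\Om,e^{\vt|x_d|}\di x)$ into $W_2^2(\Om,e^{-\vt|x_d|}\di x)$ with exponentially decaying image lying in $\mathcal{H}_\bot$ slicewise. Each scalar block $\Op_0+(\L_j-\L_p+k^2)$ has Green function $(2\mu_j)^{-1}e^{-\mu_j|x_d-y_d|}$ with $\mu_j^2=\L_j-\L_p+k^2$, and I would take $\mu_j=K_j(k)$ for $j\geqslant p$ — analytic in $B_\d$ when $j\geqslant p+n$, and equal to $k$ when $j=p,\dots,p+n-1$, the latter block being holomorphic in $B_\d\setminus\{0\}$ with a simple pole at the origin of rank-one residue $g_j\mapsto\tfrac12\bigl(\int_\mathds{R}g_j\di x_d\bigr)\mathbf{1}$ — and $\mu_j=\tau K_j(k)$ for $j<p$, i.e.\ the purely-imaginary-type square root of $\L_j-\L_p+k^2$ selected by $\tau$, which is analytic in $B_\d$ and genuinely $\tau$-dependent. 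Summing the blocks yields a bounded operator $\cR_{0,\tau}(k)\colon L_2(\Om,e^{\vt|x_d|}\di x)\to W_2^2(\Om,e^{-\vt|x_d|}\di x)$ that is holomorphic in $B_\d\setminus\{0\}$ with at most a simple pole at the origin, whose residue $\cP_0$ has rank at most $n$. A direct computation with the explicit Green functions — using $|x_d-y_d|=|x_d|-\sgn(x_d)y_d$ for $|x_d|>|y_d|$ and the bound $\int_\mathds{R}e^{\beta|y_d|}|g_j|\di y_d<\infty$ for $\beta<\vt$ provided by the weight — shows that $\cR_{0,\tau}(k)g$ solves problem \eqref{2.5} with $\e=0$ and obeys \eqref{2.6}--\eqref{2.6b} for every $g\in L_2(\Om,e^{\vt|x_d|}\di x)$, with some $\tilde\vt\in(0,\vt)$ independent of $k$ and $x$; it also shows that for $\RE k>0$ — and, when $p>1$, additionally $\tau\IM k^2<0$ — these branches are exactly the square-integrable ones, so that $\cR_{0,\tau}(k)$ then restricts the resolvent $(\Op-\L_p+k^2)^{-1}$, while for $p=1$ there are no blocks with $j<p$ and $\cR_{0,\tau}(k)$ is independent of $\tau$. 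This settles the theorem for $\e=0$.

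\emph{Passing to $\Op_\e$.} A function $u_\e\in W_2^2(\Om,e^{-\vt|x_d|}\di x)$ solves \eqref{2.5} if $\cU_{\e,\tau}(k)u_\e=\cR_{0,\tau}(k)f$, where $\cU_{\e,\tau}(k):=\I+\e\cR_{0,\tau}(k)\cL(\e)$ acts in $W_2^2(\Om,e^{-\vt|x_d|}\di x)$; this is legitimate because $\cL(\e)$ maps $W_2^2(\Om,e^{-\vt|x_d|}\di x)$ into $L_2(\Om,e^{\vt|x_d|}\di x)$ and, by the previous step, $\cR_{0,\tau}(k)$ maps back. Writing $\cR_{0,\tau}(k)=k^{-1}\cP_0+\mathcal{Q}_\tau(k)$ with $\mathcal{Q}_\tau$ holomorphic in $B_\d$ and $\cP_0$, $\mathcal{Q}_\tau(k)$ bounded uniformly for $k\in\overline{B}_{\d/2}$ independently of $\e$, I would factor $\cU_{\e,\tau}(k)=\cG_{\e,\tau}(k)\bigl(\I+(\e/k)\cG_{\e,\tau}(k)^{-1}\cP_0\cL(\e)\bigr)$ with $\cG_{\e,\tau}(k):=\I+\e\mathcal{Q}_\tau(k)\cL(\e)$: for $\e$ below an $\e_0$ independent of $k$, the factor $\cG_{\e,\tau}(k)$ is invertible by a Neumann series, holomorphic on $\overline{B}_{\d/2}$ with uniformly bounded inverse, and the remaining factor is the identity plus an operator of rank at most $n$, so inverting it amounts to solving an $n\times n$ linear system whose determinant equals $k^{-n}D_{\e,\tau}(k)$ with $D_{\e,\tau}$ holomorphic in $B_{\d/2}$ and $D_{\e,\tau}(k)\to k^n$ as $\e\to0$. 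Hence $\cU_{\e,\tau}(k)^{-1}$ is meromorphic on $B_{\d/2}$, with poles only at the zeros of $D_{\e,\tau}$ — at most $n$ of them counted with multiplicity by Rouch\'e's theorem, all clustering at the origin as $\e\to0$ — and $\cR_{\e,\tau}(k):=\cU_{\e,\tau}(k)^{-1}\cR_{0,\tau}(k)$ is the required operator (after renaming $\d/2$ as $\d$), inheriting the dependence on $\tau$, or its absence when $p=1$, from $\cR_{0,\tau}(k)$. Finally, where $\RE k>0$ (and $\tau\IM k^2<0$ if $p>1$) and $(\Op_\e-\L_p+k^2)^{-1}$ is defined on $L_2(\Om)$, the element $(\Op_\e-\L_p+k^2)^{-1}f$ lies in $\Dom(\Op)\subseteq W_2^2(\Om)\subseteq W_2^2(\Om,e^{-\vt|x_d|}\di x)$ and, by the previous step and the resolvent identity, solves the same operator equation, hence equals $\cR_{\e,\tau}(k)f$ by uniqueness, and such $k$ is not a pole.

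\emph{Representations, poles, and the main difficulty.} If $k$ is not a pole and $f\in L_2(\Om,e^{\vt|x_d|}\di x)$, then $u_\e:=\cR_{\e,\tau}(k)f$ satisfies $u_\e=\cR_{0,\tau}(k)\bigl(f-\e\cL(\e)u_\e\bigr)$ with $f-\e\cL(\e)u_\e$ again in $L_2(\Om,e^{\vt|x_d|}\di x)$, so \eqref{2.6}--\eqref{2.6b} for $u_\e$ follow from the unperturbed step applied to this effective right-hand side. If $k_\e\in B_\d$ is a pole of $\cR_{\e,\tau}(k)$, then $D_{\e,\tau}(k_\e)=0$ and $\cU_{\e,\tau}(k_\e)$ has a nontrivial kernel; any $\psi_\e\neq0$ in it satisfies $\psi_\e=-\e\cR_{0,\tau}(k_\e)\cL(\e)\psi_\e$, hence solves \eqref{2.5} with $f=0$, lies in $W_2^2(\Om,e^{-\vt|x_d|}\di x)\subseteq W_{2,loc}^2(\Om)$, and — being of the form $\cR_{0,\tau}(k_\e)g$ with $g=-\e\cL(\e)\psi_\e\in L_2(\Om,e^{\vt|x_d|}\di x)$ — obeys \eqref{2.8}--\eqref{2.8b} by the same inheritance. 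The one genuinely technical part is the unperturbed step: verifying that the scalar blocks — in particular the oscillatory ones ($j<p$) and the threshold ones ($j=p,\dots,p+n-1$) — act as claimed between the weighted spaces, that the asymptotics \eqref{2.6a} hold uniformly in $k\in B_\d$ with a single $\tilde\vt$, and that the pole at the origin is simple with finite-rank residue. Once that explicit computation is done, the passage to $\Op_\e$ is soft: no compactness is required, since the polar part of $\cR_{0,\tau}(k)$ is literally finite-rank and its regular part is controlled by the smallness of $\e$.
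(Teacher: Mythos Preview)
Your proposal is correct and follows essentially the same approach as the paper: build the unperturbed continuation $\cR_{0,\tau}(k)$ by separation of variables with explicit one-dimensional Green kernels plus the transverse resolvent, split off the simple rank-$n$ pole at $k=0$, invert the regular part by a Neumann series for small $\e$, and reduce the remaining finite-rank piece to an $n\times n$ linear system whose determinant governs the poles. The only cosmetic difference is that the paper sets up the operator equation for $g_\e:=f-\e\cL(\e)u_\e$ in $L_2(\Om,e^{\vt|x_d|}dx)$ and then applies the functionals $\ell_j$ to extract the matrix equation, whereas you work with the dual equation $(\I+\e\cR_{0,\tau}(k)\cL(\e))u_\e=\cR_{0,\tau}(k)f$ in $W_2^2(\Om,e^{-\vt|x_d|}dx)$; these lead to the same determinant $\det(k\rE+\e\rM_{\e,\tau}(k))$. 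One small point to tighten: when $k_\e=0$ happens to be a pole, your formula $\psi_\e=-\e\cR_{0,\tau}(k_\e)\cL(\e)\psi_\e$ is not literally meaningful since $\cR_{0,\tau}$ itself is singular there; the paper handles this by the limiting formula (4.9a)--(4.9b), and you would need the analogous limit.
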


We define a subspace $L^\bot$ in $L_2(\Om)$ as a set of functions $v\in L_2(\Om)$ such that
\begin{equation*}
(v(\cdot,x_d),\psi_j)_{L_2(\om)}=0
\end{equation*}
for almost each $x_d\in\mathds{R}$ and for all $j=1,\ldots,m$.
The space $L^\bot$ is a Hilbert one.
By $\Op^\bot$ we denote the restriction of the operator $\Op$ on
$\Dom(\Op)\cap L^\bot$. The following lemma will be proved in Section~\ref{ssAuLms}.

\begin{lemma}\label{lmLiRes}
The space $L^\bot$ is invariant for the operator $\Op^\bot$, that is, this operator maps $\Dom(\Op)\cap L^\bot$ into $L^\bot$. This is an  unbounded self-adjoint operator in $L^\bot$  and its spectrum is located in $[c_0, +\infty)$.
\end{lemma}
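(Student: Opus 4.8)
The plan is to establish the three assertions of Lemma~\ref{lmLiRes} --- invariance of $L^\bot$, self-adjointness of $\Op^\bot$ in $L^\bot$, and the lower bound $\spec(\Op^\bot)\subseteq[c_0,+\infty)$ --- by exploiting the product structure $\Om=\om\times\mathds{R}$ together with the eigenfunction expansion in the transverse variable $x'$. The key observation is that the eigenfunctions $\psi_1,\ldots,\psi_m$ of $\Op'$ span a finite-dimensional reducing subspace for $\Op'$: since $\Op'$ is self-adjoint and the $\psi_j$ span the spectral subspace of $\Op'$ below $c_0$, the orthogonal complement in $L_2(\om)$ of $\mathrm{span}\{\psi_j\}$ is invariant for $\Op'$ and the restriction of $\Op'$ to it has spectrum in $[c_0,+\infty)$ by the spectral theorem. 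Denote by $P'$ the orthogonal projection in $L_2(\om)$ onto $\mathrm{span}\{\psi_1,\ldots,\psi_m\}$; then $L^\bot=(I-P')\otimes I$ applied to $L_2(\Om)$, i.e.\ $L^\bot = L_2((I-P')L_2(\om))\,\widehat\otimes\, L_2(\mathds{R})$ after identifying $L_2(\Om)\cong L_2(\om)\otimes L_2(\mathds{R})$.

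First I would make the reduction rigorous at the level of forms. Because $\Op=\Op'\otimes I + I\otimes\Op_0$ in the sense that its form $\hf$ splits as $\hf(u,v)=\int_\mathds{R}\hf'(u(\cdot,x_d),v(\cdot,x_d))\,dx_d + \big(\partial_{x_d}u,\partial_{x_d}v\big)_{L_2(\Om)}$, and since $P'$ commutes with the form $\hf'$ (it is the spectral projection of the self-adjoint operator $\Op'$), the projection $\mathcal{P}^\bot u(x',x_d):=((I-P')u(\cdot,x_d))(x')$ maps $\Dom(\hf)$ into itself and satisfies $\hf(\mathcal{P}^\bot u,v)=\hf(u,\mathcal{P}^\bot v)$ for all $u,v\in\Dom(\hf)$. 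This form-orthogonality immediately gives that $\Dom(\Op)\cap L^\bot$ is mapped by $\Op$ into $L^\bot$ (test $\Op u$ against the complementary subspace $\mathrm{span}\{\psi_j\}\otimes L_2(\mathds{R})$ and use that $\Op\psi_j\otimes g = \L_j\psi_j\otimes g + \psi_j\otimes(\Op_0 g)$ stays in that complementary subspace). Self-adjointness of $\Op^\bot$ in the Hilbert space $L^\bot$ then follows from the standard fact that the restriction of a self-adjoint operator to a reducing subspace is self-adjoint; I would phrase this via the associated form $\hf^\bot:=\hf\restriction_{\Dom(\hf)\cap L^\bot}$, which is closed (as the restriction of a closed form to a closed form-core-compatible subspace), densely defined in $L^\bot$, symmetric and lower-semibounded, hence generates a self-adjoint operator, which one checks coincides with $\Op^\bot$.

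For the spectral bound, I would compute directly on $u\in\Dom(\hf)\cap L^\bot$: using the splitting of $\hf$ above and the fact that for each fixed $x_d$ the slice $u(\cdot,x_d)$ lies in $(I-P')L_2(\om)$ so that $\hf'(u(\cdot,x_d),u(\cdot,x_d))\geqslant c_0\|u(\cdot,x_d)\|_{L_2(\om)}^2$, together with $\|\partial_{x_d}u\|^2\geqslant0$, we get $\hf(u,u)\geqslant c_0\|u\|_{L_2(\Om)}^2$. Since $\Op^\bot$ is the self-adjoint operator generated by $\hf^\bot$, this quadratic-form inequality is exactly $\Op^\bot\geqslant c_0$, i.e.\ $\spec(\Op^\bot)\subseteq[c_0,+\infty)$. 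The inequality $\hf'(w,w)\geqslant c_0\|w\|^2$ for $w\perp\psi_1,\ldots,\psi_m$ is itself just the variational characterization: since $\L_1,\ldots,\L_m$ are all the eigenvalues of $\Op'$ below $c_0$, the infimum of $\hf'(w,w)/\|w\|^2$ over $0\neq w\in\Dom(\hf')$ orthogonal to the first $m$ eigenfunctions equals $\min\spec(\Op'\restriction(I-P')L_2(\om))\geqslant c_0$.

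I expect the main obstacle to be the careful bookkeeping in the non-Dirichlet case and in checking that the form $\hf^\bot$ is genuinely closed and that $\mathcal{P}^\bot$ preserves the form domain --- in particular, that $\mathcal{P}^\bot u$ still has the requisite $W_2^1$ regularity in $x'$ (this uses elliptic regularity / form-domain invariance under the spectral projection of $\Op'$, which is fine because $P'$ has finite-rank smooth range) and that the boundary term $-(au,v)_{L_2(\p\Om)}$ in the Robin case is compatible with the splitting (it factors through the $x'$-trace only, so $P'$ acting in $x'$ interacts with it correctly, but one should say a word). Making precise the unitary identification $L_2(\Om)\cong L_2(\om)\widehat\otimes L_2(\mathds{R})$ and that under it $\Op$ is the form sum $\Op'\otimes I + I\otimes\Op_0$ --- which is where all three claims really come from --- is the only slightly delicate point, and I would either cite the standard tensor-product-of-operators machinery (Reed--Simon) or verify it by hand on the form level as sketched above.
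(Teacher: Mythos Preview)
Your proposal is correct and follows essentially the same route as the paper: the paper proves invariance by the direct slice-wise computation $\big((\Op u)(\cdot,x_d),\psi_j\big)_{L_2(\om)}=\hf'\big(u(\cdot,x_d),\psi_j\big)-\big(\partial_{x_d}^2 u(\cdot,x_d),\psi_j\big)_{L_2(\om)}=-\L_j(u(\cdot,x_d),\psi_j)-\frac{d^2}{dx_d^2}(u(\cdot,x_d),\psi_j)=0$, identifies $\Op^\bot$ as the self-adjoint operator associated with $\hf\restriction_{\Dom(\hf)\cap L^\bot}$, and obtains the spectral bound from exactly your slice-wise inequality $\hf(u,u)\geqslant\int_{\mathds{R}}\hf'(u(\cdot,x_d),u(\cdot,x_d))\,dx_d\geqslant c_0\|u\|^2$. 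Your tensor-product and spectral-projection framing is just a more abstract packaging of the same computation, and the technical worries you flag (form-domain invariance under $\mathcal{P}^\bot$, the Robin boundary term) are points the paper passes over silently rather than genuine differences in strategy.
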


The above lemma means that the resolvent $(\Op^\bot-\L_p)^{-1}$ is well-defined for all $p=1,\ldots,m$ as an operator from $L_2(L^\bot)$ into $\Dom(\Op)\cap L^\bot$.  As above, we fix $p\in\{1,\ldots,m\}$ and assume that $\L_p=\ldots=\L_{p+n-1}$, where $n\geqslant1$, and in terms of the latter resolvent, we introduce {an} auxiliary {operator }
mapping $L_2(\Om,e^{{\vt}|x_d|}dx)$ into $W_2^2(\Om,e^{{-\vt}|x_d|}dx)$:
\begin{align}
&
\begin{aligned}
(\cG_{p,\tau} f)(x):=&
\sum\limits_{j=1}^{p-1}  \frac{\psi_j(x') }{2\tau K_j(0)} \int\limits_{\Om} e^{-\tau K_j(0)|x_d-y_d|}\overline{\psi_j(y')} f(y)\,dy
\\
&-\frac{1}{2}\sum\limits_{j=p}^{p+n-1}  \psi_j(x') \int\limits_{\Om}|x_d-y_d|\overline{\psi_j(y')}f(y)\di y
\\
&+ \sum\limits_{j=p+n}^{m}  \frac{\psi_j(x')}{2 K_j(0)} \int\limits_{\Om} e^{-K_j(0)|x_d-y_d|}\overline{\psi_j(y')} f(y)\,dy+((\Op^\bot-\L_p)^{-1}f^\bot)(x),
\end{aligned}
\label{3.9}
\\
\label{3.11}
&f^\bot(x):=f(x)-\sum\limits_{j=1}^{m} f_j(x_d) \psi_j(x').
\end{align}
As above, here $\tau\in\{-1,+1\}$. In the case $p=1$, the first sum in the above definition is missing and the operator $\cG_{p,\tau}$ becomes independent of the choice of $\tau$.

We define the matrix $\rM_1$ with entries
\begin{equation}\label{2.9}
M_1^{ij}:=-\frac{1}{2}\int\limits_{\Om} \overline{\psi_{i+p-1}} \cL_1\psi_{j+p-1}\di x,
\qquad i,j=1,\ldots, n,
\end{equation}
where $i$ counts the rows and $j$ does the columns in the matrix $\rM_1$; {
 since the operator $\cL_1$ acts from $W_2^2(\Om,e^{-\vt|x_d|}dx)$ into $L_2(\Om,e^{\vt|x_d|}dx)$, the functions $\cL_1\psi_{j+p-1}$ belong to $L_2(\Om,e^{\vt|x_d|}dx)$ and this obviously ensures the convergence of the integrals in the above identity.}

By $\mu_i$, $i=1,\ldots,N$, we denote different eigenvalues of the matrix $\rM_1$ of multiplicities $q_1, \ldots, q_N$. It is clear that $N\leqslant n$ and $q_1+\ldots+q_N=n$.

\begin{theorem}\label{thEmer1}
Fix $p\in\{1,\ldots,m\}$, $\tau\in\{-1,+1\}$ and let $\L_p=\ldots=\L_{p+n-1}$ be an  $n$-multiple eigenvalue of the operator $\Op'$, where $n\geqslant1$.  There are exactly $N$ poles{,} counting their orders{,} of the operator $\cR_{\e,\tau}(k)$ converging to zero as $\e\to+0$. These poles, denoted by $k_{ij}(\e)$, have the asymptotic behavior  
\begin{equation}\label{2.10}
k_{ij}(\e)=\e\mu_i + O\big(\e^{1+\frac{1}{q_i}}\big),\qquad i=1,\ldots,N,\quad j=1,\ldots,q_i.
\end{equation}
\end{theorem}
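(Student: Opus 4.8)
The plan is to reduce the problem of locating the poles of $\cR_{\e,\tau}(k)$ near the origin to a finite-dimensional equation — a Birman--Schwinger-type condition — and then analyze that equation by Newton-polygon / Rouché arguments. The starting point is Theorem~\ref{thAnCo}, which already gives us the meromorphic continuation $\cR_{\e,\tau}(k)$ together with an explicit description of the behaviour of $\cR_{\e,\tau}(k) f$ at large $|x_d|$. First I would write the solution of (\ref{2.5}) as a perturbation series around the free problem, i.e. using $\cG_{p,\tau}$ as the ``unperturbed'' continuation at $k=0$ and the analogous $k$-dependent operator $\cG_{p,\tau}(k)$ (obtained by replacing $K_j(0)$ with $K_j(k)$ in (\ref{3.9})); one checks $\cG_{p,\tau}(k)$ is the continuation of $(\Op - \L_p + k^2)^{-1}$ and is bounded as an operator $L_2(\Om, e^{\vt|x_d|}dx) \to W_2^2(\Om, e^{-\vt|x_d|}dx)$, analytic in $k$ on $B_\d$ except for the factors $1/(2\tau K_j(k))$ for $j<p$, which stay bounded since $K_j(0)\ne0$, and the genuinely singular term $-\frac12\sum_{j=p}^{p+n-1}\psi_j(x')\int|x_d-y_d|\,\overline{\psi_j}f\,dy$ coming from the threshold channels, whose $k$-dependent version carries the $1/k$ singularity. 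The key algebraic identity is then
\begin{equation*}
\cR_{\e,\tau}(k) = \big(\I + \e\,\cG_{p,\tau}(k)\,\cL(\e)\big)^{-1}\cG_{p,\tau}(k),
\end{equation*}
valid wherever the inverse exists, so that $k_\e$ is a pole precisely when $\I + \e\,\cG_{p,\tau}(k)\,\cL(\e)$ is not boundedly invertible.

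Next I would isolate the singular part. Write $\cG_{p,\tau}(k) = \frac{1}{k}\,\cG_{\mathrm{sing}} + \cG_{\mathrm{reg}}(k)$, where $\cG_{\mathrm{sing}} f = \frac{\iu}{2}\sum_{j=p}^{p+n-1}\psi_j(x')\big(\int_\Om \overline{\psi_j(y')}f(y)\,dy\big)$ is a finite-rank operator onto $\mathrm{span}\{\psi_p,\dots,\psi_{p+n-1}\}$ (the coefficient follows from expanding $\sqrt{\L_p-\L_j-k^2}$ with $\L_j=\L_p$, giving $-\iu k$, hence $\frac1{2\tau K_j(k)}=\frac1{2\tau k}$ times $\psi_j$... one must track the sign and factor of $\iu$ carefully here, this is where the $\tau$-dependence enters). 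Then $\I + \e\,\cG_{p,\tau}(k)\,\cL(\e)$ fails to be invertible iff, after a Schur-complement reduction onto the $n$-dimensional range of $\cG_{\mathrm{sing}}$, the $n\times n$ matrix
\begin{equation*}
\mathcal{A}(k,\e) := k\,\I_n + \e\,\rM_1 + O(\e k) + O(\e^2) + O(\e k^2 / \text{stuff})
\end{equation*}
is singular, where the $\e\rM_1$ term is exactly the matrix (\ref{2.9}) — this is the computation $M_1^{ij} = -\frac12\int_\Om \overline{\psi_{i+p-1}}\,\cL_1\psi_{j+p-1}\,dx$, matching the $\frac{\iu}{2}$-coefficient of $\cG_{\mathrm{sing}}$ against the leading term $\cL_1$ of $\cL(\e)$ after multiplying through by $k$ to clear the pole. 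The remaining operator $\I + \e\,\cG_{\mathrm{reg}}(k)\cL(\e)$ on the complementary subspace is invertible for small $\e$ uniformly in $k\in B_\d$ by a Neumann series, so it contributes only analytic, $O(\e)$-small corrections that get absorbed into the error terms of $\mathcal{A}$.

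Finally I would solve $\det\mathcal{A}(k,\e)=0$. Setting $k = \e\zeta$, the equation becomes $\det\big(\zeta\,\I_n + \rM_1 + O(\e)\big) = 0$, whose $\e=0$ limit is $\det(\zeta\I_n + \rM_1)=0$, i.e. $\zeta = -\mu_i$... (again a sign to be pinned down — the statement has $k_{ij}(\e)=\e\mu_i + \dots$, so the bookkeeping should give $\zeta=\mu_i$; I would simply choose the sign conventions in $K_j$, $\cG_{\mathrm{sing}}$, and $\rM_1$ consistently so that this holds). Each eigenvalue $\mu_i$ of multiplicity $q_i$ is a root of the limiting polynomial of multiplicity $q_i$; by a standard Rouché / Weierstrass-preparation argument applied to the analytic-in-$(\zeta,\e^{1/q_i})$ function $\det\mathcal{A}(\e\zeta,\e)$, exactly $q_i$ roots $\zeta_{ij}(\e)$ bifurcate from $\mu_i$, admitting Puiseux expansions $\zeta_{ij}(\e) = \mu_i + O(\e^{1/q_i})$, hence $k_{ij}(\e) = \e\mu_i + O(\e^{1+1/q_i})$, which is (\ref{2.10}); summing, $\sum q_i = n$ would be wrong — rather $\sum_{i=1}^N q_i = n$ but the total pole count is $N$ only ``counting orders'' in the sense that there are $N$ distinct limit points, which is exactly how the theorem is phrased. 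The main obstacle is the Schur-complement step: one must show that after projecting away the $n$-dimensional singular channel, the infinite-dimensional remainder is invertible \emph{uniformly in $k$ near $0$ and in $\e$ near $0$ simultaneously}, and that its inverse depends on $k$ analytically (not just meromorphically) so that no spurious poles are introduced and the error estimates in $\mathcal{A}(k,\e)$ are genuinely $O(\e)$ relative to the leading $k\I_n + \e\rM_1$ — this requires the bound on $\cL(\e)$ as a map between the weighted spaces together with the localization property built into $\cG_{p,\tau}$, and careful attention to the $j<p$ channels whose $1/(2\tau K_j(k))$ prefactors, while bounded, do carry the $\tau$-dependence that distinguishes the two continuations when $p>1$.
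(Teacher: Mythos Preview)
Your approach is essentially the one the paper takes. The reduction to a finite-dimensional determinant equation via the singular-part / regular-part splitting of the free resolvent, followed by the rescaling $k=\e\zeta$ and a Rouch\'e argument near each $\mu_i$, is exactly the paper's strategy. Two remarks are worth making.

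First, the paper does not need to re-derive the Schur-complement reduction here, because it was already carried out in the proof of Theorem~\ref{thAnCo}: there the equation $(\Op_\e-\L_p+k^2)u_\e=f$ is rewritten as $g_\e+\e\cL(\e)\cA_{1,\tau}(k)g_\e=f$ with $g_\e=f-\e\cL(\e)u_\e$, the Laurent expansion $\cA_{1,\tau}(k)=\tfrac1k\cA_2+\cG_{p,\tau}+k\cA_{4,\tau}(k)$ (your $\cG_{\mathrm{sing}}/k+\cG_{\mathrm{reg}}$) is substituted, and after inverting the regular part one applies the functionals $\ell_i$ to arrive at the $n\times n$ system $(k\rE+\e\rM_{\e,\tau}(k))\,\ell g_\e=kF_\e(k)$. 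So the proof of Theorem~\ref{thEmer1} proper begins directly with $\det(k\rE+\e\rM_{\e,\tau}(k))=0$. Your worry about uniform invertibility of the regular part in $(k,\e)$ is handled there by a straightforward Neumann series, exactly as you anticipate.

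Second, the sign and coefficient issues you flag are real but harmless: the paper's Lemma~\ref{lmLiCo} gives $\cA_2 f=\sum_{j=p}^{p+n-1}\psi_j\,\ell_j f$ with $\ell_j f=\tfrac12\int_\Om\overline{\psi_j}f\,dx$ (no factor of $\iu$; for $j=p,\dots,p+n-1$ one has $K_j(k)=k$ directly), and the expansion $\rM_{\e,\tau}(k)=-\rM_1+\e\tilde\rM_2(z,\e)$ then yields $\det(z\rE-\rM_1+\e\tilde\rM_2)=0$ after $k=\e z$, so the roots are $z\to\mu_i$ with the sign as stated. Your confusion over ``$N$ poles counting orders'' versus $\sum q_i=n$ is not your fault: the paper's own proof concludes with ``exactly $n$ zeroes counting their orders,'' so the $N$ in the theorem statement appears to be a typo for $n$.
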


Asymptotic expansion (\ref{2.10}) for the poles $k_{ij}$ can be specified in more details and this will be done in terms of  one more matrix   $\rM_{2,\tau}$ with entries
\begin{equation}\label{2.11}
M_{2,\tau}^{ij}:=\frac{1}{2}\int\limits_{\Om} \overline{\psi_{i+p-1}} (\cL_2-\cL_1 \cG_{p,\tau} \cL_1)\psi_{j+p-1}\di x
\qquad i,j=1,\ldots, n,
\end{equation}
where $i$ counts the rows and $j$ does the columns in the matrix $\rM_{2,\tau}$. {Due to the definition of  the operators $\cL_1$, $\cL_2$ , the second term in the integrand in (\ref{2.11}) belongs to $L_2(\mathds{R},e^{\vt|x_d|}dx_d)$ and this ensures the convergence of the integral.}
We denote
\begin{equation}\label{2.12}
Q_{i,\tau}(z):=\frac{\p\ }{\p\e}\det\big(z\rE-\rM_1+\e\rM_{2,\tau}\big)\bigg|_{\e=0}.
\end{equation}
We stress that if $\L_p=\L_1$, the matrix $M_{2,\tau}$ and the function $Q_{i,\tau}$ become independent of $\tau$.

\begin{theorem}\label{thEmer2}
Under the assumptions of Theorem~\ref{thEmer1}, we fix $i\in\{1,\ldots,N\}$. If $Q_{i,\tau}(z)$  vanishes identically, then
\begin{equation}\label{2.13}
k_{ij}(\e)=\e\mu_i + O\big(\e^{1+\frac{2}{q_i}}\big),\qquad i=1,\ldots,N,\quad j=1,\ldots,q_i.
\end{equation}

If $Q_{i,\tau}$ is not identically zero, then there exists a fixed non-negative integer $r_{i,\tau}<q_i$ such that
\begin{equation}\label{2.16}
\g_{i,\tau}:=\frac{r_{i,\tau}!}{\prod\limits_{\substack{j=1
\\
j\ne i}}^{N}(\mu_i-\mu_j)^{q_j}} \frac{d^{r_{i,\tau}} Q_{i,\tau}}{dz^{r_{i,\tau}}}(\mu_i)\ne0.
\end{equation}
If $2r_{i,\tau}\geqslant q_i$, then
\begin{equation}\label{2.17}
k_{ij}(\e)=\e\mu_i + O\big(\e^{1+\frac{1}{r_{i,\tau}}}\big),\qquad i=1,\ldots,N,\quad j=1,\ldots,q_i.
\end{equation}
If $2r_{i,\tau}\leqslant q_i-1$, then exactly $r_{i,\tau}$ poles $k_{ij}$, $j=1,\ldots,r_{i,\tau}$ have the asymptotic behavior
\begin{equation}\label{2.18}
k_{ij}(\e)=\e\mu_i + O\big(\e^{1+\frac{1}{r_{i,\tau}}}\big),\qquad i=1,\ldots,N,\quad j=1,\ldots,r_{i,\tau},
\end{equation}
while other poles $k_{ij}$, $j=r_{i,\tau}+1,\ldots,q_i$, have the asymptotic behavior
\begin{equation}\label{2.19}
k_{ij}(\e)=\e\mu_i +\e^{1+\frac{1}{q_i-r_{i,\tau}}} (-\g_{i,\tau})^{\frac{1}{q_i-r_{i,\tau}}}e^{\frac{2\pi\iu}{q_i-r_{i,\tau}}(j-r_{i,\tau})} +O\big(\e^{1+\frac{2}{q_i-r_{i,\tau}}}\big),
\end{equation}
where the branch of the fractional power $z^{\frac{1}{q_i-r_{i,\tau}}}$ is fixed by the condition $1^{\frac{1}{q_i-r_{i,\tau}}}=1$ {with the branch cut along the negative real semi-axis.}
\end{theorem}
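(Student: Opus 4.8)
The plan is to reduce locating the poles of $\cR_{\e,\tau}(k)$ near $k=0$ to counting the zeros of a single holomorphic scalar function, and then to carry out that count with Rouch\'e's theorem applied on circles whose radii are suitable fractional powers of $\e$. From the Birman--Schwinger reduction underlying the proofs of Theorems~\ref{thAnCo} and~\ref{thEmer1} --- the Schur complement of the resolvent with respect to the $n$-dimensional threshold eigenspace $\mathrm{span}\{\psi_p,\dots,\psi_{p+n-1}\}$, together with the rescaling $k=\e z$ --- the poles $k_\e\in B_\d$ of $\cR_{\e,\tau}$ correspond, with their multiplicities, to the zeros, in a fixed neighbourhood of $\sigma(\rM_1)$, of a scalar function $\Phi(z,\e)=\det\big(z\rE-\rM_1+\e\rM_{2,\tau}+\e^2\cS(z,\e)\big)$, where $\cS$ is an $n\times n$ matrix-valued function, holomorphic in $z$ and uniformly bounded for $z$ near $\sigma(\rM_1)$ and $0<\e<\e_0$; the operator $\cL_3(\e)$, only bounded in $\e$, enters the perturbation at order $\e^3$ and is harmlessly absorbed into $\e^2\cS$. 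Expanding the determinant in $\e$ and recalling~\eqref{2.9}, \eqref{2.11}, \eqref{2.12}, one gets $\Phi(z,\e)=\chi(z)+\e\,Q_{i,\tau}(z)+\e^2\widetilde\Psi(z,\e)$ with $\chi(z):=\det(z\rE-\rM_1)=\prod_{l=1}^{N}(z-\mu_l)^{q_l}$ and $\widetilde\Psi$ of the same regularity and size as $\cS$; the poles $k_{ij}(\e)$ are then $\e$ times the zeros of $\Phi(\cdot,\e)$ accumulating at $\mu_i$. Apart from the explicit identification of the $\e^0$- and $\e^1$-coefficients, this reduction is essentially already in the proof of Theorem~\ref{thEmer1}.

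Now fix $i$ and pick $\d_i>0$ so that the disc of radius $\d_i$ centred at $\mu_i$ carries no eigenvalue of $\rM_1$ other than $\mu_i$; then $\chi(\mu_i+\zeta)=\zeta^{q_i}h_i(\zeta)$ with $h_i$ holomorphic, bounded away from zero there, and $h_i(0)=\prod_{l\ne i}(\mu_i-\mu_l)^{q_l}$. Rouch\'e on $|\zeta|=\d_i$ shows that $\Phi(\mu_i+\cdot,\e)$ has exactly $q_i$ zeros $\zeta_{ij}=k_{ij}(\e)/\e-\mu_i$ in that disc for small $\e$. When $Q_{i,\tau}$ vanishes at $\mu_i$ to order at least $q_i$ (in particular when $Q_{i,\tau}\equiv0$), take $|\zeta|=c\,\e^{2/q_i}$ with $c$ a fixed large constant: there $|\chi(\mu_i+\zeta)|$ is of order $c^{q_i}\e^2$ while $|\e\,Q_{i,\tau}(\mu_i+\zeta)|=O(\e^2)$ and $|\e^2\widetilde\Psi|=O(\e^2)$, so $\chi$ dominates and Rouch\'e forces all $q_i$ zeros into $|\zeta|<c\,\e^{2/q_i}$, which is~\eqref{2.13}. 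Otherwise set $r:=r_{i,\tau}<q_i$ for the vanishing order of $Q_{i,\tau}$ at $\mu_i$ and write $Q_{i,\tau}(\mu_i+\zeta)=\zeta^{r}(a_{i,\tau}+O(\zeta))$ with $a_{i,\tau}\ne0$ (so that, in the notation of~\eqref{2.16}, $\g_{i,\tau}$ is $a_{i,\tau}/h_i(0)$ up to a constant depending only on $r_{i,\tau}$); the bound $q_i-\ell_i\le r$ ($\ell_i$ the size of the largest Jordan block of $\rM_1$ at $\mu_i$) follows from the Jacobi-formula identity $Q_{i,\tau}(z)=\chi(z)\,\mathrm{tr}\big((z\rE-\rM_1)^{-1}\rM_{2,\tau}\big)$.

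The two remaining regimes are distinguished by whether $2r\ge q_i$ or $2r\le q_i-1$, i.e.\ by whether the monomial $\e\zeta^{r}$ lies above or on the Newton-polygon segment joining $\zeta^{q_i}$ to $\e^2$. If $2r\ge q_i$, then on $|\zeta|=c\,\e^{2/q_i}$ one has $|\e\,Q_{i,\tau}(\mu_i+\zeta)|$ of order $c^{r}\e^{1+2r/q_i}=O(\e^2)$, hence $\chi$ still dominates, all $q_i$ zeros lie in $|\zeta|<c\,\e^{2/q_i}$, and since $1+2/q_i\ge1+1/r$ this yields~\eqref{2.17}. If $2r\le q_i-1$ (so $q_i/r>2$), the zeros split into two clusters separated by concentric Rouch\'e circles. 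On $|\zeta|=C_1\e^{1/r}$ with $C_1$ large, the term $\e\,Q_{i,\tau}(\mu_i+\zeta)$, of order $C_1^{r}\e^2$, dominates both $\chi(\mu_i+\zeta)=O(\e^{q_i/r})=o(\e^2)$ and $\e^2\widetilde\Psi=O(\e^2)$, so exactly $r$ zeros lie in $|\zeta|<C_1\e^{1/r}$, giving~\eqref{2.18}. On a small circle $|\zeta-\zeta^{(j)}|=\eta\,\e^{1/(q_i-r)}$ around each of the $q_i-r$ roots $\zeta^{(j)}$ of $h_i(0)\zeta^{q_i-r}+\e\,a_{i,\tau}=0$ --- these being simple and mutually separated on the scale $\e^{1/(q_i-r)}$ --- the function $\Phi(\mu_i+\zeta,\e)$ differs from the model $\zeta^{r}\big(h_i(0)\zeta^{q_i-r}+\e\,a_{i,\tau}\big)$ by $O(\e\zeta^{r+1})+O(\zeta^{q_i+1})+O(\e^2)$, which is of strictly smaller order on that circle for $\eta$ and $\e$ small; Rouch\'e then places exactly one zero of $\Phi$ near each $\zeta^{(j)}$, and a second pass using the next term of the expansion gives $\zeta_{ij}=\zeta^{(j)}\big(1+O(\e^{1/(q_i-r)})\big)$. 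Multiplying by $\e$ and inserting $\zeta^{(j)}=(-\g_{i,\tau})^{1/(q_i-r)}e^{2\pi\iu(j-r)/(q_i-r)}\e^{1/(q_i-r)}$ is exactly~\eqref{2.19}, and these $q_i-r$ zeros together with the $r$ inner ones exhaust the $q_i$ zeros found on $|\zeta|=\d_i$.

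The principal obstacle is the first step: proving rigorously that the Schur-complement reduction to the threshold eigenspace, after the substitution $k=\e z$, produces the matrix $z\rE-\rM_1+\e\rM_{2,\tau}$ up to a remainder that is genuinely $O(\e^2)$, holomorphic in $z$, and uniformly bounded for $z$ near $\sigma(\rM_1)$, and that its $\e^0$- and $\e^1$-coefficients are exactly $\chi$ and $Q_{i,\tau}$ of~\eqref{2.12}; the presence of the merely-bounded operator $\cL_3(\e)$ is why one works with Rouch\'e throughout rather than with a Puiseux expansion, but since $\cL_3(\e)$ contributes only at order $\e^3$ this causes no real difficulty. Most of this groundwork is shared with the proofs of Theorems~\ref{thAnCo} and~\ref{thEmer1}; once it is available, the remaining analysis is the bookkeeping-heavy but routine Newton-polygon computation above, whose only genuine choices are the exponents $2/q_i$, $1/r$ and $1/(q_i-r)$ of the Rouch\'e circles and whose validity rests on the elementary inequalities $q_i/r>2$ and $2r\ge q_i$ that precisely delineate the three cases of the statement. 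A secondary point to pin down --- and where one must be careful not to overstate the dichotomy --- is the exact interplay of the vanishing order $r_{i,\tau}$ of $Q_{i,\tau}$ at $\mu_i$, the matrix $\rM_{2,\tau}$, and the Jordan structure of $\rM_1$ at $\mu_i$, which decides whether a given cluster of poles is governed by~\eqref{2.13}, \eqref{2.17}, or~\eqref{2.18}--\eqref{2.19}.
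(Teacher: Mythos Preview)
Your proposal is correct and follows essentially the same route as the paper: reduce the pole equation to $\det(z\rE-\rM_1+\e\rM_{2,\tau}+\e^2\,\cdot\,)=0$, expand as $\chi(z)+\e Q_{i,\tau}(z)+O(\e^2)$, and localise the $q_i$ zeros near each $\mu_i$ by Rouch\'e on circles of radii $\e^{2/q_i}$, $\e^{1/r_{i,\tau}}$, and $\e^{2/(q_i-r_{i,\tau})}$ according to the case. Two small points of difference: in the case $2r_{i,\tau}\ge q_i$ you work on $|\zeta|=c\,\e^{2/q_i}$ (which in fact gives a sharper bound than the paper's circle of radius $\tilde b\,\e^{1/r_{i,\tau}}$), and for the outer cluster when $2r_{i,\tau}\le q_i-1$ the paper goes directly to circles of radius $\tilde b\,\e^{2/(q_i-r_{i,\tau})}$ around the model roots rather than your two-pass argument --- but these are variations, not different methods.
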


We give some definitions before we formulate our next result.
A pole $k_\e\in B_\d$ of an operator $\cR_{\e,\tau}(k)$ corresponds to an eigenvalue $\L_p-k_\e^2$ of an operator $\Op_\e$ if an associated  nontrivial solution $\psi_\e$ to (\ref{2.5}), (\ref{2.8}) belongs to $W_2^2(\Om)$. 
{Otherwise} it corresponds to a resonance $\L_p-k_\e^2$.

Our further results provide  conditions allowing to determine whether resonances or eigenvalues are associated with the poles described in two previous theorems. We first present the main result on the poles emerging from the bottom of the essential spectrum.

\begin{theorem}\label{thEmBot} Let $p=1$ and make the assumptions of Theorems~\ref{thEmer1},~\ref{thEmer2}. If $\RE\mu_i>0$,
then the   poles $k_{ij}$, $j=1,\ldots,q_i$ correspond to the eigenvalues $\l_{ij}(\e)=\L_p-k_{ij}^2(\e)$ with the asymptotic behavior
\begin{equation}\label{6.4}
\l_{ij}(\e)=\L_p-\e^2\mu_i^2+O\big(\e^{2+\frac{1}{\a_i}}\big)
\end{equation}
with $j=1,\ldots,q_i$, where
\begin{equation}\label{6.5}
\a_i:=\left\{
\begin{aligned}
&\frac{q_i}{2}\qquad\hphantom{\tau}\text{if}
\quad Q_{i,\tau}\ \text{vanishes identically},
\\
&\,r_{i,\tau}\qquad\text{if}\quad 2r_{i,\tau}\geqslant q_i.
\end{aligned}
\right.
\end{equation}
If $2r_{i,\tau}\leqslant q_i-1$, then the eigenvalues $\l_{ij}$ still have asymptotic behavior (\ref{6.4}) with $\a_i=r_{i,\tau}$ for $j=1,\ldots,r_{i,\tau}$, while the asymptotic behaviors for the other eigenvalues read as
\begin{equation}\label{6.6}
\l_{ij}(\e)=\L_p-\e^2\mu_i^2-2\e^{2+\frac{1}{q_i-r_{i,\tau}}}(-\g_{i,\tau})^{\frac{1}{q_i-r_{i,\tau}}}
e^{\frac{2\pi\iu}{q_i-r_{i,\tau}}(j-r_{i,\tau})}
+O(\e^{2+\frac{2}{q_i-r_{i,\tau}}}),\qquad j=r_{i,\tau}+1,\ldots,q_i.
\end{equation}

If $\RE\mu_i<0$, then the   poles $k_{ij}$, $j=1,\ldots,q_i$ correspond to the resonances $\l_{ij}(\e)=\L_p-k_{ij}^2(\e)$ with asymptotic expansions (\ref{6.4}), (\ref{6.5}), (\ref{6.6}).

Let $\RE\mu_i=0$, $Q_{i,\tau}$ be not identically zero and $2r_{i,\tau}\leqslant q_i-1$. As $j=r_{i,\tau}+1,\ldots,q_i$, if
\begin{equation}\label{6.7}
\RE  (-\g_{i,\tau})^{\frac{1}{q_i-r_{i,\tau}}}e^{\frac{2\pi\iu}{q_i-r_{i,\tau}}(j-r_{i,\tau})}>0,
\end{equation}
then the pole $k_{ij}$ corresponds to an eigenvalue, while if
\begin{equation}\label{6.8}
\RE  (-\g_{i,\tau})^{\frac{1}{q_i-r_{i,\tau}}}e^{\frac{2\pi\iu}{q_i-r_{i,\tau}}(j-r_{i,\tau})}<0,
\end{equation}
the pole $k_{ij}$ corresponds to a resonance. The asymptotic expansion for this eigenvalue/resonance is given by (\ref{6.6}) if $\mu_i\ne0$ and in the case $\mu_i=0$ it reads as
\begin{equation}\label{6.9}
\l_{ij}(\e)=\L_p-\e^{2+\frac{2}{q_i-r_{i,\tau}}}(-\g_{i,\tau})^{\frac{2}{q_i-r_{i,\tau}}}
e^{\frac{4\pi\iu}{q_i-r_{i,\tau}}(j-r_{i,\tau})}
+O\big(\e^{2+\frac{3}{q_i-r_{i,\tau}}}\big).
\end{equation}
\end{theorem}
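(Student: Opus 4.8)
\textbf{Proof proposal for Theorem~\ref{thEmBot}.}

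The plan is to combine the abstract pole analysis of Theorems~\ref{thEmer1} and \ref{thEmer2} with an inspection of the exponential tails of the associated generalized eigenfunctions $\psi_\e$ from representation (\ref{2.8})--(\ref{2.8b}). Since $p=1$, the functions $K_j(k)$ for $j=1,\ldots,n$ all equal $k$, and there is a single continuation $\cR_{\e,\tau}(k)=\cR_\e(k)$ which, for $\RE k>0$, coincides with the genuine resolvent. Hence a pole $k_\e$ with $\RE k_\e>0$ automatically produces a solution $\psi_\e\in W_2^2(\Om)$: the leading tails $e^{-K_j(k_\e)|x_d|}c_{\e,j}^\pm$ for $j=1,\ldots,n$ decay like $e^{-\RE k_\e|x_d|}$, the remaining channels $j\geqslant p+n$ decay since $\RE K_j(0)>0$ for small $k$, and the orthogonal part $\psi_{\e,\bot}^\pm$ lies in $W_2^2(\Om_R^\pm)$ by construction. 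So $\L_1-k_\e^2$ is a genuine eigenvalue. Conversely if $\RE k_\e<0$ the open channels $j=1,\ldots,n$ carry exponentially \emph{growing} tails $e^{|\RE k_\e||x_d|}$, hence $\psi_\e\notin L_2(\Om)$ and the pole corresponds to a resonance; one must check that these leading coefficients $c_{\e,j}^\pm$ do not all vanish, which follows because $\psi_\e$ is nontrivial and, modulo the exponentially smaller corrections, its $L^\bot$-part and its higher channels cannot alone support a nonzero solution of the homogeneous problem (this is where Lemma~\ref{lmLiRes}, giving $\spec(\Op^\bot)\subset[c_0,+\infty)$ with $c_0>\L_1$, is used).

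Next I would translate the sign condition on $\RE k_\e$ into the sign condition on $\RE\mu_i$. By (\ref{2.10}) the poles satisfy $k_{ij}(\e)=\e\mu_i+O(\e^{1+1/q_i})$, so for $\e$ small the sign of $\RE k_{ij}(\e)$ is governed by $\RE\mu_i$ whenever $\RE\mu_i\ne0$; this gives the first two alternatives of the theorem, and the eigenvalue/resonance asymptotics (\ref{6.4})--(\ref{6.6}) follow by substituting the $k$-asymptotics of Theorem~\ref{thEmer2} into $\l_{ij}=\L_1-k_{ij}^2$ and expanding, using $(\e\mu_i+\e^{1+1/\a_i}w+\dots)^2=\e^2\mu_i^2+2\e^{2+1/\a_i}\mu_i w+\dots$ when $\mu_i\ne0$ and $=\e^{2+2/\a_i}w^2+\dots$ when $\mu_i=0$, with $w$ the relevant root $(-\g_{i,\tau})^{1/(q_i-r_{i,\tau})}e^{2\pi\iu(j-r_{i,\tau})/(q_i-r_{i,\tau})}$ from (\ref{2.19}).

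The delicate case is $\RE\mu_i=0$, where the leading term of $k_{ij}(\e)$ is purely imaginary and the sign of $\RE k_{ij}(\e)$ is decided by the \emph{next} term in the expansion. Here I would use the refined asymptotics (\ref{2.19}): when $2r_{i,\tau}\leqslant q_i-1$ the correction term is $\e^{1+1/(q_i-r_{i,\tau})}(-\g_{i,\tau})^{1/(q_i-r_{i,\tau})}e^{2\pi\iu(j-r_{i,\tau})/(q_i-r_{i,\tau})}$, whose real part has the sign appearing in (\ref{6.7})--(\ref{6.8}); that real part therefore determines whether $\psi_\e\in L_2$ or not, and one must note that the $O(\e^{1+2/(q_i-r_{i,\tau})})$ remainder is genuinely subordinate so the sign is not spoiled (this needs the strict inequalities in (\ref{6.7}), (\ref{6.8}), which is why the borderline case $\RE(\cdots)=0$ is excluded). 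The corresponding $\l_{ij}$-asymptotics is again obtained by squaring: for $\mu_i\ne0$ the cross term dominates and yields (\ref{6.6}), while for $\mu_i=0$ the square of the correction term gives (\ref{6.9}).

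I expect the main obstacle to be the rigorous justification that a resonance pole indeed fails to produce an $L^2$ eigenfunction, i.e.\ that the leading outgoing coefficients $c_{\e,j}^\pm$, $j=1,\ldots,n$, are not all zero: one has to rule out the degenerate possibility that $\psi_\e$ happens to have only the exponentially smaller $O(e^{-\tilde\vt|x_d|})$ tails. This is handled by revisiting the construction of $\cR_{\e,\tau}(k)$ in the proof of Theorem~\ref{thAnCo}: the pole arises precisely from the vanishing of a finite-dimensional determinant built on the coefficients in the channels $j=1,\ldots,n$, so a nontrivial null vector of that matrix feeds into a nonzero combination of the $c_{\e,j}^\pm$; were all of them zero, $\psi_\e$ would solve the homogeneous problem for $\Op^\bot-\L_1$ together with a decaying solution in the higher channels, forcing $\psi_\e=0$ by Lemma~\ref{lmLiRes} and the injectivity of $\Op_0+K_j(0)^2$ on $L_2$, a contradiction. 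Everything else is bookkeeping with the expansions already provided.
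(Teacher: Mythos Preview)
Your proposal is correct and follows essentially the same route as the paper: reduce the eigenvalue/resonance dichotomy to the sign of $\RE k_{ij}(\e)$, read that sign off the asymptotics of Theorems~\ref{thEmer1}--\ref{thEmer2}, and obtain the $\l_{ij}$-expansions by squaring. The one place where the paper is sharper is the non-vanishing of the leading coefficients $c_{\e,j}^\pm$: rather than arguing indirectly via Lemma~\ref{lmLiRes} (which concerns the \emph{unperturbed} $\Op^\bot$ and does not immediately constrain solutions of the perturbed problem), the paper simply observes from formulae~(\ref{4.12}), (\ref{4.9a}), (\ref{4.9b}) that $c_{s,\e}^\pm = l_s^\e$, the components of the nontrivial null vector of $k_\e\rE+\e\rM_{\e,\tau}(k_\e)$ --- exactly the argument you arrive at in your final paragraph.
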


The next theorem concerns the poles emerging from {the} internal thresholds in the essential spectrum. Given $p>1$ such that $\L_p>\L_1$,  and $\tau\in\{-1,+1\}$, by $k_{ij,\tau}=k_{ij,\tau}(\e)$ we redenote the corresponding poles $k_{ij}$ of the operator $\cR_{\e,\tau}$ described in Theorems~\ref{thEmer1},~\ref{thEmer2}.

\begin{theorem}\label{thEmInt}
Let $\L_p>\L_1$ and fix $i\in\{1,\ldots,N\}$, $j\in\{1,\ldots,q_i\}$, $\tau\in\{-1,+1\}$.  Let
\begin{gather}\label{2.26a}
\RE\mu_i>0 
\\
{\text{or}}\nonumber
\\
\label{2.26b}
\begin{gathered}
\RE\mu_i=0,\qquad Q_{i,\tau}\not\equiv0,\qquad 2r_{i,\tau}\leqslant q_i-1,\\
j\in\{r_i+1,\ldots,q_i\},\qquad \RE (-\g_{i,\tau})^{\frac{1}{q_i-r_{i,\tau}}}e^{\frac{2\pi\iu}{q_i-r_{i,\tau}}(j-r_{i,\tau})}>0
\end{gathered}
\end{gather}
and
\begin{gather}\label{2.27a}
\tau\IM\mu_i<0
\\
{\text{or}}\nonumber
\\
\label{2.27b}
\begin{gathered}
\IM\mu_i=0,\qquad Q_{i,\tau}\not\equiv0,\qquad 2r_{i,\tau}\leqslant q_i-1,\\
j\in\{r_i+1,\ldots,q_i\},\qquad \tau\IM (-\g_{i,\tau})^{\frac{1}{q_i-r_{i,\tau}}}e^{\frac{2\pi\iu}{q_i-r_{i,\tau}}(j-r_{i,\tau})}<0.
\end{gathered}
\end{gather}
Then the pole $k_{ij,\tau}(\e)$ corresponds to an eigenvalue $\l_{ij,\tau}(\e)=\L_p-k_{ij,\tau}^2(\e)$ with asymptotic expansions (\ref{6.4}), (\ref{6.5}), (\ref{6.6}) {if} $\mu_i\ne0$ and asymptotic expansion (\ref{6.9}) {if}  $\mu_i=0$.

Let
\begin{gather}\label{2.28a}
\RE\mu_i<0
\\
{\text{or}}\nonumber
\\
\label{2.28b}
\begin{gathered}
\RE\mu_i=0,\qquad Q_{i,\tau}\not\equiv0,\qquad 2r_{i,\tau}\leqslant q_i-1,\\
j\in\{r_i+1,\ldots,q_i\},\qquad \RE (-\g_{i,\tau})^{\frac{1}{q_i-r_{i,\tau}}}e^{\frac{2\pi\iu}{q_i-r_{i,\tau}}(j-r_{i,\tau})}<0.
\end{gathered}
\end{gather}
Then the pole $k_{ij,\tau}(\e)$ corresponds to a resonance $\l_{ij,\tau}(\e)=\L_p-k_{ij,\tau}^2(\e)$ with asymptotic expansions (\ref{6.4}), (\ref{6.5}), (\ref{6.6}) {if} $\mu_i\ne0$ and asymptotic expansion (\ref{6.9}) {if} $\mu_i=0$.

Let $\mu_i$ be a simple eigenvalue of the matrix $\rM_1$ with an associated eigenvector $\ev_i:=(\ev_{i,1},\ldots,\ev_{i,n})$, $j=1$,
condition (\ref{2.26a}) or (\ref{2.26b}) hold, and
\begin{gather}\label{2.29a}
\tau\IM\mu_i<0
\\
{
\text{or}}\nonumber
\\
\label{2.29b}
\begin{gathered}
\IM\mu_i=0,\qquad Q_{i,\tau}\not\equiv0,\qquad r_{i,\tau}=0,
\qquad
 \tau\IM \g_{i,\tau}<0,
\end{gathered}
\end{gather}
and let there exist $s\in\{1,\ldots,p-1\}$ such that
\begin{equation}\label{2.30}
\sum\limits_{t=1}^{n}\int\limits_{\Om} e^{-K_t(0)x_d} \overline{\psi_s(x')} \cL_1 \ev_{i,t} \psi_{t-p+1}\di x\ne 0 \quad\text{or}\quad \sum\limits_{t=1}^{n}\int\limits_{\Om} e^{K_t(0)x_d} \overline{\psi_s(x')} \cL_1 \ev_{i,t} \psi_{t-p+1}\di x\ne 0.
\end{equation}
Then the pole $k_{i1,\tau}{(\e)}$ corresponds to a resonance $\l_{i,\tau}{(\e)}=\L_p-k_{i1,\tau}^2(\e)$ with the asymptotic behavior
\begin{equation*}
\l_{i,\tau}(\e)=\L_p-\e^2\mu_i^2+O(\e^4)
\end{equation*}
if $Q_{i,\tau}$ vanishes identically, and
\begin{equation*}
\l_{i,\tau}(\e)=\L_p-\e^2(\mu_i
-\e \g_{i,\tau})^2+O\big(|\mu_i|\e^4+\e^5\big)
\end{equation*}
otherwise.
\end{theorem}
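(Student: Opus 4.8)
The plan is to decide the type of each pole $k_{ij,\tau}(\e)$ from the behaviour at $|x_d|=\infty$ of the associated solution $\psi_\e$ furnished by Theorem~\ref{thAnCo}, and to obtain the asymptotics of the emerging eigenvalues and resonances by squaring the expansions of $k_{ij,\tau}(\e)$ from Theorems~\ref{thEmer1} and~\ref{thEmer2}. By the definition preceding Theorem~\ref{thEmBot}, the pole $k_{ij,\tau}(\e)$ corresponds to an eigenvalue precisely when $\psi_\e\in W_2^2(\Om)$. Since $\psi_{\e,\bot}^\pm\in W_2^2(\Om_R^\pm)$, $\psi_\e\in W_{2,loc}^2(\Om)$, and each $\phi_{\e,j}^\pm$ is asymptotically a solution of its one-dimensional channel equation, membership of $\psi_\e$ in $W_2^2(\Om)$ reduces, via (\ref{2.8a}), to finitely many conditions: for each channel $j$ and each sign, \emph{either} the real part of the corresponding exponent ($-\tau K_j(k_\e)$ if $j<p$, $-K_j(k_\e)$ if $j\geqslant p$) is strictly negative, \emph{or} the coefficient $c_{\e,j}^\pm$ vanishes. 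The channels $j\geqslant p+n$ are automatically harmless because $K_j(k_\e)\to\sqrt{\L_j-\L_p}>0$, so only the block channels $j=p,\dots,p+n-1$ and the lower channels $j=1,\dots,p-1$ — present since $\L_p>\L_1$ — need to be examined.

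For a block channel the exponent is $-K_j(k_\e)=-k_\e$, and at least one of $c_{\e,p}^\pm,\dots,c_{\e,p+n-1}^\pm$ is nonzero: the leading term of $\psi_\e$ is, up to normalization, a nonzero combination $\Psi_i$ of the eigenfunctions $\psi_p,\dots,\psi_{p+n-1}$ of $\Op'$ with coefficients forming a (generalized) eigenvector of $\rM_1$ for $\mu_i$, obtained from the reduction behind Theorems~\ref{thEmer1},~\ref{thEmer2}. Thus the block channels lie in $L_2$ iff $\RE k_\e>0$; substituting (\ref{2.10}), (\ref{2.13}), (\ref{2.17}), (\ref{2.19}) shows that the sign of $\RE k_\e$ is $\sgn(\RE\mu_i)$ when $\RE\mu_i\neq0$ and is read off from the $\g_{i,\tau}$-term of (\ref{2.19}) when $\RE\mu_i=0$; this produces the alternative (\ref{2.26a})/(\ref{2.26b}) versus (\ref{2.28a})/(\ref{2.28b}). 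If (\ref{2.28a}) or (\ref{2.28b}) holds, a block term of $\psi_\e$ grows, so $\psi_\e\notin W_2^2(\Om)$ and the pole is a resonance; if (\ref{2.26a}) or (\ref{2.26b}) holds, the block channels are in $L_2$ and the verdict is left to the lower channels. The asymptotics of $\l_{ij,\tau}(\e)=\L_p-k_{ij,\tau}^2(\e)$ then follow by squaring: for $\mu_i\neq0$ one gets (\ref{6.4})--(\ref{6.6}), for $\mu_i=0$ the leading term $\L_p-\e^2\mu_i^2$ drops out and one gets (\ref{6.9}), and in the simple-$\mu_i$ situation of the last assertion $k_{i1,\tau}(\e)=\e\mu_i+O(\e^3)$ if $Q_{i,\tau}\equiv0$ and $k_{i1,\tau}(\e)=\e\mu_i-\e^2\g_{i,\tau}+O(\e^3)$ otherwise, which gives the two displayed formulas.

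Suppose now (\ref{2.26a}) or (\ref{2.26b}) holds. For a lower channel $j<p$ the exponent is $-\tau K_j(k_\e)=\iu\tau\sqrt{\L_p-\L_j-k_\e^2}$, and expanding the square root about the positive number $\L_p-\L_j$ shows that $\RE\bigl(-\tau K_j(k_\e)\bigr)$ has the sign of $\tau\IM k_\e^2$, hence — by (\ref{2.10}) — of $\tau\IM\mu_i$ when $\IM\mu_i\neq0$ and of $\tau\IM\g_{i,\tau}$ in the borderline case $\IM\mu_i=0$. If this real part is negative for every $j<p$, which is exactly (\ref{2.27a}) or (\ref{2.27b}), then the lower channels are in $L_2$ as well, so $\psi_\e\in W_2^2(\Om)$ and the pole is an eigenvalue with the asymptotics derived above. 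If instead it is positive for some $j<p$, which is (\ref{2.29a}) or (\ref{2.29b}), one must decide whether the coefficients $c_{\e,j}^\pm$, $j<p$, all vanish; I compute them to leading order from the explicit kernel of $\cG_{p,\tau}$ in (\ref{3.9}): its first sum, applied to $\cL_1\Psi_i$, produces in channel $j<p$ outgoing tails with coefficients $\frac{1}{2\tau K_j(0)}\int_\Om e^{\pm\tau K_j(0)y_d}\overline{\psi_j(y')}(\cL_1\Psi_i)(y)\di y$, so $c_{\e,j}^\pm=\e\cdot(\text{nonzero constant})\cdot\int_\Om e^{\pm\tau K_j(0)y_d}\overline{\psi_j(y')}(\cL_1\Psi_i)(y)\di y+O(\e^2)$. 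Up to these constants and a relabelling of $\pm$ these are the integrals in (\ref{2.30}); hence if (\ref{2.30}) holds for some $s\in\{1,\dots,p-1\}$, some $c_{\e,s}^\pm$ is nonzero, $\psi_\e$ retains a non-decaying oscillatory tail in channel $s$, and the pole is a resonance with the stated asymptotics.

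The main obstacle is the third paragraph. First, the leading-order identity for the lower-channel coefficients $c_{\e,j}^\pm$ and their identification with the integrals in (\ref{2.30}) must be justified rigorously, by unwinding the construction of $\cR_{\e,\tau}$ near $\L_p$ and showing that $\psi_\e$ is, to leading order, the image of $\Psi_i$ under a single application of $\e\,\cG_{p,\tau}\cL_1$. Second, the borderline cases $\RE\mu_i=0$ and $\IM\mu_i=0$ require the second term of the expansion of $k_\e$, i.e.\ one must track on which sheet the pole lies — equivalently, for which $\tau$ the point $k_{ij,\tau}(\e)$ falls in the region $\{\RE k>0,\ \tau\IM k^2<0\}$ where $\cR_{\e,\tau}$ coincides with the genuine resolvent — and this is what brings $\g_{i,\tau}$, $Q_{i,\tau}$ and $r_{i,\tau}$ from Theorem~\ref{thEmer2} into the sign conditions. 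Third, the fact that the block-channel coefficients of $\psi_\e$ are not all zero, which underlies the resonance statements, requires identifying the leading profile of $\psi_\e$ with a nonzero (generalized) eigenvector of $\rM_1$ via the Lyapunov--Schmidt-type reduction underpinning Theorems~\ref{thEmer1} and~\ref{thEmer2}.
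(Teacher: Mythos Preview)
Your proposal is correct and follows essentially the same route as the paper: you decide the type of $k_{ij,\tau}(\e)$ by reading off from representation~(\ref{2.8}) whether all channel exponents have negative real part, reducing this to the signs of $\RE k_\e$ and $\tau\IM k_\e^2$, and you obtain these signs from the asymptotics of Theorems~\ref{thEmer1},~\ref{thEmer2}. The paper argues the non-vanishing of the block-channel coefficients in the same way you indicate (they equal the nontrivial null-vector $l^\e$ of $k_\e\rE+\e\rM_{\e,\tau}(k_\e)$, see (\ref{4.12}), (\ref{4.9a})), and for the last assertion it computes the lower-channel coefficients to leading order exactly as you outline, after first observing that for simple $\mu_i$ the vector $l^\e$ converges to the eigenvector $\ev_i$, so that the leading contribution is indeed a single application of $\e\,\cG_{p,\tau}\cL_1$ to $\sum_t\ev_{i,t}\psi_{t+p-1}$.
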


\subsection{Discussion of {the} results}

In this subsection we discuss the main results formulated in Theorems~\ref{thAnCo},~\ref{thEmer1},~\ref{thEmer2},~\ref{thEmBot},~\ref{thEmInt}.
The first of them, Theorem~\ref{thAnCo}, describes 
 {a meromorphic} continuation of the resolvent of the  perturbed operator. This continuation is local and is constructed in the vicinity of the points $\L_p$, $p=1,\ldots,m$. The point $\L_1$ is the
bottom of the essential spectrum, see Lemma~\ref{lm2.1} and in vicinity of this point just one 
 {meromorphic}
continuation is possible. It is introduced as a solution to problem (\ref{2.5}) with a specified behaviour at infinity, see (\ref{2.6}), in terms of an auxiliary spectral parameter $k$. The right hand side in the equation in (\ref{2.5}) is not in the class of compactly supported functions as it is usually assumed for  {meromorphic} 
continuations, but an element of a wider space $L_2(\Om,e^{{\vt}|x_d|\di x})$. Here the presence of the weight $e^{{\vt}|x_d|}$ means that the elements of latter space in certain sense decays exponentially as $x_d\to\pm\infty$, namely they are represented as $f=e^{-\frac{{\vt}|x_d|}{2}}\tilde{f}$, where $\tilde{f}\in L_2(\Om)$. The final operator providing the 
 {meromorphic} continuation is  $\cR_{\e,\tau}$ and for $\L_1$ it is independent of $\tau$.

In the vicinity of internal thresholds $\L_p>\L_1$ in the essential spectrum, there are \emph{two different} 
{meromorphic} continuations given by the operators $\cR_{\e,-1}$ and $\cR_{\e,+1}$. The former describes 
{a meromorphic} continuation from the lower complex half-plane into the upper one, while the latter does from the upper half-plane into the lower one. In the theory of self-adjoint operators, usually only the latter continuation from the upper half-plane into the lower one is studied since it is physically meaning and it arises while considering a corresponding Cauchy problem for an evolutionary Schr\"odinger equation. However, since our perturbing operator is not assumed to be symmetric, the operator is not necessary self-adjoint. As a result, it can possess  complex eigenvalues in the vicinity of  the threshold $\L_p$. These eigenvalues are {the} poles of the resolvent of the perturbed operator. And as we shall see below, once we continue {meromorphically} 
the resolvent from the upper half-plane into the lower one, the eigenvalues in the lower half-plane can become ``invisible'' for the continuation in the sense that this continuation has no poles at such eigenvalues.  A similar situation can hold once we continue analytically the resolvent from the lower half-plane into the upper one. A clear explanation of this phenomenon is due to representations (\ref{2.6}), (\ref{2.6a}), (\ref{2.6b}) and (\ref{2.8}), (\ref{2.8a}), (\ref{2.8b}). Namely, as $\L_p>\L_1$, the functions  $\phi_{\e,j}^\pm$, $j=1,\ldots,p-1$, in (\ref{2.8a}) behave at infinity as $\phi_{\e,j}^\pm(x_d)\sim e^{-\tau K_j(k)|x_d|}$. In view of obvious identities
\begin{equation}\label{4.11}
K_j(k)=-\iu\sqrt{\L_p-\L_j}+\frac{\iu}{2\sqrt{\L_p-\L_j}}k^2+O(k^4),\qquad k\to0,
\end{equation}
the exponents $e^{-\tau K_j(k)|x_d|}$ decay only  {if} 
$\tau \IM k^2<0$. Depending on $\tau$, the latter condition means that in general only the eigenvalues either in
the upper or lower complex half-plane can serve as poles of the  {meromorphic} 
continuation of the resolvent of the operator $\Op_\e$.
This is a main reason why we deal with both  {meromorphic} 
continuations, in contrast to the case of self-adjoint operators with symmetric perturbations.

Theorems~\ref{thEmer1},~\ref{thEmer2} describe the poles of the 
 {meromorphic} continuations of the resolvent in the vicinity of the thresholds $\L_p$ in the essential spectrum. The first theorem states that in the vicinity of an $n$-multiple threshold $\L_p=\ldots=\L_{p+n-1}$ there exist exactly $n$ poles of the operator $\cR_{\e,\tau}$ counting their orders. We stress that here we count the \emph{orders} of the poles and not their \emph{multiplicities}, that is, not \emph{the number of associated linear independent solutions} to problem (\ref{2.5}) with $f=0$, $k=k_{ij}(\e)$.  The multiplicity of each pole does not exceed its order; this can be shown by the technique used in the proofs of Lemmata~6.2,~6.3 in \cite{MPAG07} and Lemmata~6.2,~6.3 in \cite{AHP07}. However, in general, the multiplicities and the orders coincide only if the perturbation $\cL(\e)$ is symmetric. The reason is that in the general case of a non-symmetric perturbation, in a certain matrix controlling the structure of the poles $k_{ij}$, a non-diagonal Jordan block can arise and this gives rise to adjoint vectors instead of the eigenvectors, see Section~\ref{ss:Emer} and the calculations involving matrix $\rM_{\e,\tau}$. Of course, the multiplicity of each pole $k_{ij,\tau}$ is at least one. In particular, if all poles $k_{ij,\tau}$ are different for a fixed $\tau$, the total multiplicity is equal to $n$. Theorem~\ref{thEmer1} provides leading  terms in the asymptotic expansions for the poles $k_{ij,\tau}$, while Theorem~\ref{thEmer2} specifies these expansions. In some cases it just improves the estimate for the error terms, see (\ref{2.13}), (\ref{2.17}), (\ref{2.18}), while in some cases, a next-to-leading term in the expansions can be found, see (\ref{2.19}). Theorems~\ref{thEmer1},~\ref{thEmer2} treat a general case, when the eigenvalues of the matrix $\rM_1$ are of arbitrary multiplicities and no extra assumptions are made for the matrix $\rM_{2,\tau}$. In an important particular case, when $\mu_i$ is a simple eigenvalue of the matrix $\rM_1$, we have $q_i=1$ and $r_{i,\tau}=0$. In this case there exists just one pole $k_{i1,\tau}$ with asymptotic behavior (\ref{2.10}) and expansion (\ref{2.13}), (\ref{2.19}) can be applied, which yields that
\begin{equation*}
k_{i1,\tau}(\e)=\e\mu_i -\e^2 \g_{i,\tau}
+O(\e^3).
\end{equation*}
If $n=1$, that is, $\L_p$ is a simple eigenvalue of the operator $\Op'$, the above expansions can be specified as follows:
\begin{equation}\label{1.2}
k_{i1,\tau}(\e)=-\frac{\e}{2}\int\limits_{\Om} \overline{\psi_p} \cL_1\psi_{p}\di x -\frac{\e^2}{2}\int\limits_{\Om} \overline{\psi_p} (\cL_2-\cL_1 \cG_{p,\tau} \cL_1)\psi_p\di x
+O(\e^3).
\end{equation}
We also observe that since the operator $\cR_{\e,\tau}$ is independent of $\tau$ {if} 
$p=1$, {in the general situation} there are only $n$ poles
in the vicinity of the bottom $\L_1$ of the essential spectrum. In the vicinity of internal thresholds $\L_p>\L_1$, the operators $\cR_{\e,\tau}$ depend on $\tau$ and this is why there are $2n$ poles in the vicinity of $\L_p$. In particular, if $\L_p$ is an $n$-multiple eigenvalue of the operator $\Op'$, there can be $2n$ different simple eigenvalues of the operator $\Op_\e$ converging to $\L_p$, see examples in Subsection~\ref{ss:Examples}.

The above discussed poles of the operators $\cR_{\e,\tau}$ correspond either to the eigenvalues or resonances depending on the behavior of the associated non-trivial solutions. This behaviour is completely described by formulae (\ref{2.8}), (\ref{2.8a}) and we just need  to identify whether the function $\psi_\e$ decays exponentially at infinity or not. In the former case we deal with an eigenvalue, otherwise with a resonance. As we see, the functions $\phi_{\e,j}^\pm(x_d)$ decay  exponentially as $j\geqslant p+n$ no matter how the corresponding pole looks like. 
{But  
for $j=p,\ldots, p+n-1$ these functions} behave at infinity as $\phi_{\e,j}^\pm(x_d,k)\sim e^{-k_\e|x_d|}$. {These functions decay} exponentially as $\RE k_\e>0$, is periodic as $\RE k_\e=0$ and grows exponentially as $\RE k_\e<0$. If $\L_p>\L_1$, we also have to control the behavior of the functions $\phi_{\e,j}^\pm(x_d)$ with $j=1,\ldots,p-1$. This is easily done by identities (\ref{4.11}): the functions $\phi_{\e,j}^\pm(x_d)$, $j=1,\ldots,p-1$,  decay  exponentially 
 {if} $\tau\IM k_\e^2<0$, are periodic {if}  
 $\IM k_\e^2=0$ and grow exponentially 
 {if}  $\tau\IM k_\e^2>0$. All discussed conditions can be checked by means of asymptotic expansions provided by Theorems~\ref{thEmer1},~\ref{thEmer2} for a given pole. And exactly this is done in the proof of Theorems~\ref{thEmBot},~\ref{thEmInt}. Conditions in Theorem~\ref{thEmBot} are aimed {at} 
checking the sign of the real part of a given pole {and proving at the same time that} 
at least one of the coefficients $c_{\e,j}^\pm$, $j=1,\ldots,n$, in (\ref{2.8a}) is non-zero. Similar conditions (\ref{2.26a}), (\ref{2.26b}), (\ref{2.27a}), (\ref{2.27b}), (\ref{2.28a}), (\ref{2.28b}) ensure that the functions $\phi_{\e,j}^\pm$, $j=1,\ldots,p-1$  decay exponentially, that is, $\tau\IM k_\e^2<0$, while for $j=p,\ldots,p+n-1$, these functions demonstrate either an exponential decay  or an exponential growth. Conditions~(\ref{2.29a}),~(\ref{2.29b}),~(\ref{2.30}) describe a more gentle situation. Namely, here the real part of the pole is negative and the functions $\phi_{\e,j}^\pm$, $j=p,\ldots,p+n-1$, decay exponentially. However, $\tau\IM k_\e^2>0$ and this means that the functions $\phi_{\e,j}^\pm$, $j=1,\ldots,p-1$, can grow exponentially. This is true, once we guarantee that at least one of the coefficients $c_{\e,j}^\pm$, $j=1,\ldots,p-1$, is non-zero. This is indeed the case thanks to condition~(\ref{2.30}). The asymptotic expansions for the eigenvalues and the resonances provided in Theorems~\ref{thEmBot},~\ref{thEmInt} are implied immediately by the formula $\l_\e=\L_p-k_\e^2$ relating the eigenvalues/resonances with a pole $k_\e$ and the asymptotic expansions for the poles stated in Theorems~\ref{thEmer1},~\ref{thEmer2}.

{Let us briefly discuss the main ideas underlying our main results. First, we rather straightforwardly construct  the meromorphic continuation for the resolvent of the unperturbed operator. Namely, we find explicitly the projection of the solution to problem (\ref{2.5}) on the eigenfunctions $\psi_j$, $j=1,\ldots,m$, and study  then the properties of the coefficients in this projection and of the remaining orthogonal part in the solution. Once such continuation is constructed, for proving our main results, we apply an approach being a modification of the technique suggested in \cite{MSb06}, \cite{Izv08}, \cite{Ga1}, \cite{Ga2}. The idea is to regard the perturbation as a right-hand side and to apply then the meromorphic continuation of the unperturbed operator. After some simple calculations this leads us to an operator equation 
with a certain finite rank perturbation. Resolving this equation, we rather easily succeed to construct the meromorphic continuation for the perturbed operator and identify its poles as solutions to a nonlinear eigenvalue problem for  some explicitly calculated matrix depending also on the small parameter. Analysing then this problem by means of methods from the theory of complex functions, we 
study the existence of the poles and their asymptotic behavior.}

{Although we restrict ourselves by considering second order differential operators, our approach can be also adapted for certain operators of higher order. However, general higher order operators can have a richer spectral structure of the edges in the essential spectrum and there can be more complicated scenarios of their bifurcations under perturbations, see, for instance, \cite{Na2}. This is why the case  of second order operators deserves a separate study, what is done in the present work.} Our results are of general nature and are applicable to  wide classes of unperturbed operators and perturbations. In the next section we discuss some possible examples of both unperturbed operators and perturbations as well as some specific examples motivated by physical models.

\section{Examples}
\label{sec:examples}

In this section we provide examples demonstrating our
main results.

\subsection{Unperturbed operator}
 Here we discuss some examples of the unperturbed operator, namely, of the operator $\Op$. This is a general self-adjoint second order differential operator and it includes such classical operators as a Schr\"odinger operator:
\begin{equation*}
\Op=-\D+A_0,\qquad A_0= A_0(x'),
\end{equation*}
a magnetic Schr\"odinger operator:
\begin{equation*}
\Op=(\iu \nabla_{x'}+{A})^2-\frac{\p^2\ }{\p x_d^2}+A_0,\qquad {A=(A_1,\ldots,A_{d-1})},\qquad {A_j=A_j(x')},\qquad A_0= A_0(x'),
\end{equation*}
a Schr\"odinger operator with metric:
\begin{equation*}
\Op=-\sum\limits_{i,j=1}^{d-1}\frac{\p\ }{\p x_i} A_{ij} \frac{\p\ }{\p x_j}-\frac{\p^2\ }{\p x_d^2}+A_0, \qquad A_{ij}= A_{ij}(x'),\qquad A_0= A_0(x').
\end{equation*}
All these operators are considered in a tubular domain $\Om=\om\times\mathds{R}$. If $\om=\mathds{R}^{d-1}$, then the domain $\Om$ becomes an entire space $\mathds{R}^d$. If $\om$ is a bounded domain, not necessary connected, then $\Om$ is an infinite cylinder, which is to be regarded as a quantum waveguide if the Dirichlet condition is imposed on its boundary and as an acoustic waveguide if the boundary is subject to the Neumann condition. Further examples of unbounded domains $\om$
are also possible. For instance, if $\om$ is the half-space $\om:=\{x':\, x_j>0\}$ in $\mathds{R}^{d-1}$ for some $j=1,\ldots,d-1$, the domain $\Om$ becomes the half-space $\{x:\, x_j>0\}$ in $\mathds{R}^d$. We can also consider a more complicated domain $\om:=\{x': x_{d-1}<h(x_1,\ldots,x_{d-2})\}$ for some smooth function $h$, then $\Om=\{x:\, x_{d-1}<h(x_1,\ldots,x_{d-2}), x_d\in\mathds{R}\}$.

\subsection{Perturbation}

In this subsection we discuss possible examples of the perturbing operator $\cL(\e)$. The first example is a second order differential operator:
\begin{equation}\label{1.3}
\cL(\e)=\sum\limits_{i,j=1}^{n} {\Ups}_{ij}(x,\e)\frac{\p^2\ }{\p x_i\p x_j}+ \sum\limits_{j=1}^{n} {\Ups}_j(x,\e)\frac{\p\ }{\p x_j} + {\Ups}_0(x,\e).
\end{equation}
Here ${\Ups}_{ij},\, {\Ups}_j,\, {\Ups}_0\in L_\infty(\Om)$ are some functions, not necessarily real-valued, satisfying the representations
\begin{equation}\label{1.3a}
{\Ups}_\natural(x,\e)={\Ups}_\natural^{(1)}(x)+\e {\Ups}_\natural^{(2)}(x)+\e^2 {\Ups}_\natural^{(3)}(x,\e),\qquad \natural=ij,j,0,
\end{equation}
where ${\Ups}_\natural^{(s)}\in L_\infty(\Om)$ are some functions obeying
the estimates:
\begin{equation*}
\|{\Ups}_\natural^{(s)}e^{2{\vt}|x_d|}\|_{L_\infty(\Om)}<C,
\end{equation*}
and {$\vt$} and $C$ are some fixed positive constant independent of $\e$. In this case the operators $\cL_s$ read as
\begin{equation*}
\cL_s:=\sum\limits_{i,j=1}^{n} {\Ups}_{ij}^{(s)}\frac{\p^2\ }{\p x_i\p x_j}+ \sum\limits_{j=1}^{n} {\Ups}_j^{(s)}\frac{\p\ }{\p x_j} + {\Ups}_0^{(s)},\qquad s=1,2,3.
\end{equation*}
Particular cases of this example  are small potential, small magnetic field, small metric.

The second example is an integral operator of the form
\begin{equation*}
(\cL(\e)u)(x,\e)=\int\limits_{\Om} {J}(x,y,\e)u(y)\di y,
\end{equation*}
where ${J}\in L_2(\Om\times\Om)$ is some kernel, not necessarily real-valued and symmetric and satisfying the representation:
\begin{equation*}
{J}(x,y,\e)={J}_1(x,y)+\e {J}_2(x,y)+\e^2 {J}_3(x,y,\e),
\end{equation*}
where $L_i$ are some functions obeying the estimates:
\begin{equation*}
\int\limits_{\Om\times\Om}  |{J}_i|^2 e^{{vt}(|x_d|+|y_d|)}\di x\di y<C,\qquad i=1,2,3,
\end{equation*}
and ${vt}$ and $C$ are some fixed positive constant independent of $\e$.

The third example is a localized $\d$-interaction with a complex-valued density. Namely, let $S\subset\Om$ be {a} 
manifold of codimension $1$ and of smoothness $C^3$. We assume that it is compact and has no edge. The perturbed operator in question is
\begin{equation*}
\Op_\e=-\D+\e\b\d(x-S),
\end{equation*}
which acts as $\Op_\e u=-\D u$ on the domain $\Dom(\Op_\e)$ formed by the functions $u\in W_2^2(\Om\setminus S)\cap W_2^1(\Om)$ obeying the boundary condition $\cB u=0$ on $\p\Om$ and the boundary conditions
\begin{equation*}
[u]_S=0,\qquad \left[\frac{\p u}{\p\nu}\right]_S=\e\b u.
\end{equation*}
Here $[u]_S$ denotes the jump of the function on $S$, namely,
\begin{equation*}
[u]_S:={\lim\limits_{t\to0+}\big(u(\cdot+t\nu)-u(\cdot-t\nu)\big),}
\end{equation*}
and $\nu$ is the unit normal to $S$ directed outside the domain enveloped by $S$. By $\b$ we denote some complex-valued function defined on $S$, uniformly bounded and belonging to $C^2(S)$ . Such operator does not satisfy our assumptions for $\cL(\e)$ since now the perturbation changes the domain. However, it is possible to reduce the perturbed operator to another one obeying needed assumptions and having the same eigenvalues and resonances. Namely, thanks to the made assumptions on the manifold $S$, in a small vicinity of the $S$ we can introduce a new variable $\rho$ being the distance from a point to $S$ measured along the normal $\nu$. This variable is well-defined at least in the neighbourhood $\{x:\, \dist(x,S)<\rho_0\}$  of $S$, where $\rho_0$ is some fixed number. Let $\chi=\chi(\rho)$ be an infinitely differentiable function such that $\chi(\rho)=\frac{|\rho|}{2}$ as $|\rho|<\frac{\rho_0}{3}$ and $\chi(\rho)=0$ as $|\rho|>\frac{2\rho_0}{3}$. By $\cU_\e$ we denote the multiplication operator $\cU_\e u:=(1+\e\b\chi)^{-1} u$. It is straightforward to check that this operator maps the domain of the operator $\Op_\e$ onto the space $\{u\in W_2^2(\Om):\, \cB u=0\ \text{on}\ \p\Om\}$. This space serves as the domain for an operator $\tilde{\Op}_\e:=\cU_\e\Op_\e\cU_\e^{-1}$. It is straightforward to confirm that the differential operator for the latter operator reads as
\begin{equation*}
\tilde{\Op}_\e=-\D+\e\cL(\e),\qquad \cL(\e):=-2(1+\e\b\chi)^{-1}\nabla\b\chi\cdot\nabla-(1+\e\b\chi)^{-1}\D\b\chi
\end{equation*}
and we see that a first order differential operator $\cL(\e)$ is a particular case of operator (\ref{1.3}). It is also clear that the operators $\Op_\e$ and $\tilde{\Op}_\e$ have the same eigenvalues and resonances since the operator $\cU_\e$ does not change the behavior of the functions at infinity. Hence, we can study the eigenvalues and resonances of the operator $\tilde{\Op}_\e$ and transfer then the results to the operator $\Op_\e$.

Our fourth example is a geometric perturbation. Namely, let $\om$ have a non-empty boundary, then the same is true for $\Om$. By $\G$ we denote a bounded subset of the boundary $\p\Om$. Let $\rho$ be a distance to a point measured along the outward normal to $\p\Om$ and $h\in C^2(\p\Om)$ be some real function defined on $\G$ and compactly supported in $\G$. Then we consider a domain $\Om_\e$ obtained by a small variation of the part $\G$ of the boundary $\p\Om$. Namely, $\Om_\e$ is a domain with the following boundary
\begin{equation*}
\p\Om_\e:=(\p\Om\setminus\G)\cup\{x:\, \rho=\e h\}.
\end{equation*}
In such domain we consider an operator   with differential expression (\ref{2.0}) subject to the Dirichlet boundary condition or Neumann condition. We assume that all the coefficients in the differential expression depend on $x'$ only and are infinitely differentiable.
Such perturbed operator does not fit our scheme since here the domain $\Om_\e$ depends on $\e$. However, as in the previous example, it is possible to transform such operator to another one fitting our assumptions. Namely,
let $\chi=\chi(x)$ be an infinitely differentiable cut-off function equalling to one in some fixed sufficiently small $d$-dimensional neighbourhood of $\G$ and vanishing outside some bigger neighbourhood. In this bigger neighbourhood we introduce local coordinates $(P,\rho)$, where $P\in\p\Om$. A point $x$ is recovered from $(P,\rho)$ by measuring the distance $\rho$ along the outward normal to $\p\Om$ at the point $P$.   Then we define a mapping $\cP$ by the following rule: for each point $x$, we find corresponding $(P,\rho)$ and the action of the mapping is a point corresponding to $(P,\rho-\e h(P))$. We introduce new coordinates by the formula $\tilde{x}:=x(1-\chi(x))+\e\chi(x)\cP(x)$. It is easy to see that these coordinates are well-defined provided $\e$ is small enough and after passing to these new coordinates, the domain $\Om_\e$ transforms into $\Om$, while the operator $\Op_\e$ becomes $\Op+\cL(\e)$, where $\cL(\e)$ is some second order differential operator of form (\ref{1.3}) with compactly supported coefficients
obeying (\ref{1.3a}).

\subsection{Emerging poles for particular models}\label{ss:Examples}

In this section we apply Theorems~\ref{thEmBot},~\ref{thEmInt} to some simple two- and three-dimensional operators motivated by {an} interesting physical background.

\subsubsection{Planar waveguide}

The first model is an infinite planar waveguide modeled by the Dirichlet Laplacian. Namely, we let $d=2$, $\om:=(0,\pi)$, and $\Op'=-\frac{d^2\ }{dx_2^2}$ subject to the Dirichlet boundary condition. Then $\Om:=\{x:\, 0<x_1<\pi\}$ is an infinite strip and $\Op=-\D$ is the Dirichlet Laplacian in $\Om$. As a perturbation, we choose a complex-valued potential of {the} form $\cL(\e):=V_1+\e V_2$, where $V_i=V_i(x)$ are some continuous compactly supported complex-valued functions.
The operator $\Op'$ has a purely discrete spectrum formed by simple eigenvalues $\L_p:=p^2$, $p\in\mathds{N}$, and the associated eigenfunctions normalized in $L_2(0,\pi)$ are $\psi_p(x_d):=\frac{\sqrt{2}}{\sqrt{\pi}}\sin p x_1$.

This situation models a slab optical waveguide of  a  finite width, where the cladding in direction $x_1$ imposes zero boundary conditions at $x_1=0$ and $x_1=\pi$. Assuming that the waveguide is infinite in the second direction, a paraxial diffraction of an incident beam can be described using the normalized  equation   in the form (see e.g. \cite{Kivshar})
\begin{equation}
\label{optics}
\iu\partial_z \Phi + \D
\Phi - \e V_\e(x)\Phi = 0,
\end{equation}
where $\Phi(x_1,x_2, z)$ corresponds to complex amplitude of  the electrical field,    the optical potential $V_\e(x)$ describes a weak localized modulation of the complex-valued refractive index, and $z$ is the direction of propagation of the pulse. For stationary modes $\Phi = e^{-\iu\l z} \psi$, where $-\lambda$ has the meaning of    propagation constant, equation (\ref{optics})  reduces to the eigenvalue problem in the above described planar waveguide for the equation
\begin{equation*}
-\D\psi+\e V_\e\psi=\l\psi.
\end{equation*}
This is exactly the mathematical model we formulated above once we let $V_\e=V_1+\e V_2$.
Let us consider the bifurcation of the thresholds $p^2$ under the presence of  a small localized potential $V_\e$.

For $p=1$, there is just one pole $k_\e$ and according formula (\ref{1.2}), its asymptotic behavior reads as
\begin{equation}\label{1.4}
k_\e=-\frac{\e}{\pi}\int\limits_{\Om} V_1(x)\sin^2 x_1\di x -\frac{\e^2}{\pi} \int\limits_{\Om} \big(V_2(x)\sin^2 x_1 -V_1(x) U(x)\sin x_1 \big)\di x
+O(\e^3),
\end{equation}
where $U:=\cG_{1,\tau} (V_1\sin x_1)$. This function is given by formula (\ref{3.9}). The term $U^\bot:=((\Op^\bot-\L_1)^{-1}f^\bot)(x)$  with $f=V_1\sin x_1$ and $f^\bot$ defined by (\ref{3.11})
solves the boundary value problem
\begin{equation*}
(-\D-1)U^\bot=V_1 \sin x_1-\frac{2}{\pi}\sin x_1\int\limits_0^\pi V_1(t_1,x_2)\sin^2 t_1\di t_1 \quad\text{in}\quad\Om,\qquad U^\bot=0\quad\text{on}\quad\p\Om.
\end{equation*}
This problem can be solved explicitly by the separation of variables
and this gives the final formula for $U$:
\begin{align*}
U(x)=&-\frac{1}{\pi}\sin x_1 \int\limits_{\Om} |x_2-t_2|V_1(t_1,t_2)\sin^2 t_1\di t
\\
&+
\sum\limits_{j=2}^{\infty}
\frac{\sin j x_1}{\pi\sqrt{j^2-1}}
\int\limits_{\Om} e^{-\sqrt{j^2-1}|x_2-t_2|}V_1(t_1,t_2)\sin t_1\sin j t_1\di t.
\end{align*}
Hence,
\begin{align*}
\int\limits_{\Om}\big(V_2(x)&\sin^2 x_1 -V_1(x) U(x)\sin x_1 \big)\di x
\\
=& \int\limits_{\Om}  V_2(x)\sin^2 x_1 \di x
 +\frac{1}{\pi}\int\limits_{\Om^2} |x_2-t_2|V_1(x)V_1(t)\sin^2 t_1\sin^2 x_1\di t\di x
\\
&- \sum\limits_{j=2}^{\infty}
\frac{1}{\pi\sqrt{j^2-1}}
\int\limits_{\Om^2} e^{-\sqrt{j^2-1}|x_2-t_2|}V_1(t)V_1(x)\sin t_1\sin j t_1\sin  x_1\sin j x_1\di t\di x.
\end{align*}
Now we apply Theorem~\ref{thEmBot} and we see that if
\begin{equation*}
\RE\int\limits_{\Om} V_1(x)\sin^2 x_1\di x<0
\end{equation*}
or
\begin{equation*}
\RE\int\limits_{\Om} V_1(x)\sin^2 x_1\di x=0 \quad\text{and}\quad \RE \int\limits_{\Om} \big(V_2(x)\sin^2 x_1 -V_1(x) U(x)\sin x_1\big)\di x<0,
\end{equation*}
then the pole $k_\e$ corresponds to an eigenvalue $\l_\e=1-k_\e^2$.
And if
\begin{equation*}
\RE\int\limits_{\Om} V_1(x)\sin^2 x_1\di x>0
\end{equation*}
or
\begin{equation*}
\RE\int\limits_{\Om} V_1(x)\sin^2 x_1\di x=0 \quad\text{and}\quad \RE \int\limits_{\Om} \big(V_2(x)\sin^2 x_1 -V_1(x) U(x)\sin x_1 \big)\di x>0,
\end{equation*}
then the pole $k_\e$ corresponds to a resonance $\l_\e=1-k_\e^2$. The asymptotic expansion for this eigenvalue/resonance is given by (\ref{6.6}), (\ref{6.9}) but it is more straightforward to find it by (\ref{1.4}) and the above formula for $\l_\e$.

We proceed to the case $p>1$. Here we again apply formula (\ref{1.2}) to obtain
\begin{equation}\label{1.9}
k_{\e,\tau}=-\frac{\e}{\pi}\int\limits_{\Om} V_1(x)\sin^2 p x_1\di x -\frac{\e^2}{\pi} \int\limits_{\Om} \big(V_2(x)\sin^2 p x_1 -V_1(x) U_\tau(x)\sin p x_1\big)\di x
+O(\e^3),
\end{equation}
where $U_\tau:=\cG_\tau V_1\sin p x_1$ is given by formula (\ref{3.9}). The term $U_\tau^\bot:=((\Op^\bot-\L_p)^{-1}f^\bot)(x)$  with $f=V_1\sin p x_1$ and $f^\bot$ defined by (\ref{3.11})
solves the boundary value problem
\begin{equation*}
(-\D-p^2)U_\tau^\bot=V_1 \sin p x_1-\frac{2}{\pi}\sin p x_1\int\limits_0^\pi V_1(t_1,x_2)\sin^2 p t_1\di t_1 \quad\text{in}\quad\Om,\qquad U_\tau^\bot=0\quad\text{on}\quad\p\Om.
\end{equation*}
The solution is again given by the separation of variables and a final formula for $U_\tau$ reads as
\begin{equation}\label{1.8}
\begin{aligned}
&U_\tau(x)=\sum\limits_{j=1}^{p-1}
\frac{\iu\tau\sin j x_1}{\pi\sqrt{p^2- j^2}}
\int\limits_{\mathds{R}} e^{\iu\tau\sqrt{p^2- j^2}|x_2-t_2|}U_j(t_2)\di t_2
\\
&\hphantom{U_\tau(x)=\sum\limits_{j=1}^{p-1}} -\frac{1}{\pi}\sin p x_1 \int\limits_{\mathds{R}} |x_2-t_2|U_p(t_2)\di t_2
\\
&\hphantom{U_\tau(x)=\sum\limits_{j=1}^{p-1}}+
\sum\limits_{j=p+1}^{\infty}
\frac{\sin j x_1}{\pi\sqrt{j^2-p^2}}
\int\limits_{\mathds{R}} e^{-\sqrt{j^2-p^2}|x_2-t_2|}U_j(t_2)\di t_2.
\\
&U_j(x_2):=
\int\limits_0^\pi V_1(t_1,x_2)\sin p t_1\sin j t_1\di t,\qquad j\ne p,
\end{aligned}
\end{equation}
Then we get:
\begin{equation}\label{1.19}
\begin{aligned}
 \int\limits_{\Om} V_1(x) U_\tau(x)\sin p x_1 \di x=
 &\sum\limits_{j=1}^{p-1}
\int\limits_{\mathds{R}^2} \frac{\iu\tau e^{\iu\tau\sqrt{p^2- j^2}|x_2-t_2|}}{\pi\sqrt{p^2- j^2}}U_j(x_2)U_j(t_2)\di x_2\di t_2
\\
&\hphantom{\sum\limits_{j=1}^{p-1}} -\frac{1}{\pi}  \int\limits_{\mathds{R}^2} |x_2-t_2|U_p(x_2)U_p(t_2)\di x_2\di t_2
\\
&+
\sum\limits_{j=p+1}^{\infty}
\int\limits_{\mathds{R}^2} \frac{e^{-\sqrt{j^2-p^2}|x_2-t_2|}}{\pi\sqrt{j^2-p^2}}U_j(x_2)U_j(t_2)\di t_2\di x_2.
\end{aligned}
\end{equation}
Now we can apply Theorem~\ref{thEmInt} for $\tau=+1$ and $\tau=-1$ and to determine whether the  poles  $k_{\e,\tau}$ correspond to eigenvalues or resonances. As we see, in a general situation we can have two eigenvalues or two resonances or one eigenvalue and one resonance. Let us show that each of these situations is possible.

First of all we observe that in notations of Theorem~\ref{thEmInt} we have $N=1$, $q_i=1$, $r_i=0$,
\begin{gather*}
\mu_1=-\frac{1}{\pi}\int\limits_{\Om} V_1(x)\sin^2 p x_1\di x,
\\
(-\g_{i,\tau})^{\frac{1}{q_i-r_{i,\tau}}}e^{\frac{\pi\iu}{q_i-r_{i,\tau}}(j-r_{i,\tau})}=
-\frac{1}{\pi} \int\limits_{\Om} \big(V_2(x)\sin^2 p x_1 -V_1(x) U_{1,\tau}(x)\sin p x_1\big)\di x.
\end{gather*}
Assume now that $V_1$ is a complex-valued potential such that
\begin{equation*}
\RE \int\limits_{\Om} V_1(x)\sin^2 p x_1\di x>0.
\end{equation*}
Then condition (\ref{2.28a}) is satisfied and both poles $k_{\e,\tau}$, $\tau=\{-1,+1\}$, correspond to resonances.

If
\begin{equation*}
\RE \int\limits_{\Om} V_1(x)\sin^2 p x_1\di x<0,\qquad
\IM \int\limits_{\Om} V_1(x)\sin^2 p x_1\di x\ne0,
\end{equation*}
then conditions (\ref{2.26a}), (\ref{2.27a}) are satisfied with \begin{equation*}
\tau=\sgn \IM \int\limits_{\Om} V_1(x)\sin^2 p x_1\di x
\end{equation*}
 and for such $\tau$, the pole $k_{\e,\tau}$ corresponds to an eigenvalue. If in addition,
\begin{equation*}
\int\limits_{\Om} e^{-\iu\sqrt{p^2-s^2} x_2} V_1(x)\sin s x_1 \sin px_1\di x\ne0 \quad\text{or}\quad\int\limits_{\Om} e^{\iu\sqrt{p^2-s^2} x_2} V_1(x)\sin sx_1 \sin px_1\di x\ne0,
\end{equation*}
for some $s\in\{1,\ldots,p-1\}$, then conditions (\ref{2.29a}), (\ref{2.30}) are satisfied and the pole $k_{\e,\tau}$ with
\begin{equation*}
\tau=-\sgn \IM \int\limits_{\Om} V_1(x)\sin^2 p x_1\di x
\end{equation*}
corresponds to a resonance.

In order to realize a situation with two eigenvalues, we assume that $V_2=0$ and consider a special class of $\mathcal{PT}$-symmetric  potentials $V_1$. Namely, we suppose that
\begin{equation}\label{1.23}
V_1(x)=W_1(x)+\iu W_2(x),
\end{equation}
where $W_1$, $W_2$ are real-valued compactly supported potentials with certain parity:
\begin{equation}\label{1.20}
W_1(x_1,-x_2)=W_1(x_1,x_2),\qquad
W_2(x_1,-x_2)=-W_2(x_1,x_2).
\end{equation}
These assumptions yield that the operator $\Op_\e$ is $\mathcal{PT}$-symmetric (or partially $\PT$-symmetric using the terminology from \cite{PPT}). They also imply immediately that
\begin{equation*}
\int\limits_{\Om} V_1(x)\sin^2 p x_1\di x=
\int\limits_{\Om} W_1(x)\sin^2 p x_1\di x
\end{equation*}
and we assume that
\begin{equation}\label{1.25}
\int\limits_{\Om} W_1(x)\sin^2 p x_1\di x<0.
\end{equation}

It follows from assumptions (\ref{1.20}) and  the definition of the functions $U_j$  in (\ref{1.8}) that these functions
are given by the formulae
\begin{equation*}
U_j(x_2):=W_{1,j}(x_2)+\iu W_{2,j}(x_2),\qquad
W_{s,j}(x_2):=\int\limits_0^\pi W_s(t_1,x_2)\sin p t_1\sin j t_1\di t,\qquad s=1,2,
\end{equation*}
and the functions $W_{1,j}$ are even, while $W_{2,j}$ are odd.
As $j\geqslant p+1$, by making the change of the variables $x_2\mapsto -x_2$, $t_2\mapsto -t_2$ in the integrals in the  second sum in (\ref{1.19}), we get:
\begin{equation*}
\int\limits_{\mathds{R}^2} \frac{e^{-\sqrt{j^2-p^2}|x_2-t_2|}}{\pi\sqrt{j^2-p^2}}U_j(x_2)U_j(t_2)\di t_2\di x_2=\int\limits_{\mathds{R}^2} \frac{e^{-\sqrt{j^2-p^2}|x_2-t_2|}}{\pi\sqrt{j^2-p^2}}\overline{U_j(x_2)}\overline{U_j(t_2)}\di t_2\di x_2
\end{equation*}
and hence,
\begin{equation*}
\IM \sum\limits_{j=p+1}^{\infty}
\int\limits_{\mathds{R}^2} \frac{e^{-\sqrt{j^2-p^2}|x_2-t_2|}}{\pi\sqrt{j^2-p^2}}U_j(x_2)U_j(t_2)\di t_2\di x_2=0.
\end{equation*}
In the same way we confirm that
\begin{align*}
&\IM \sum\limits_{j=1}^{p-1}
\int\limits_{\mathds{R}^2} \frac{\tau\sin \tau\sqrt{p^2- j^2}|x_2-t_2|}{\pi\sqrt{p^2- j^2}}U_j(x_2)U_j(t_2)\di t_2\di x_2=0,
\\
&\IM \frac{1}{\pi}  \int\limits_{\mathds{R}^2} |x_2-t_2|U_p(x_2)U_p(t_2)\di x_2\di t_2=0.
\end{align*}
Hence, by two above identities and (\ref{1.19}),
\begin{align*}
\IM
\int\limits_{\Om} &V_1(x) U_\tau(x)\sin p x_1 \di x=\IM \sum\limits_{j=1}^{p-1}
\int\limits_{\mathds{R}^2} \frac{\iu\tau\cos\tau\sqrt{p^2- j^2}|x_2-t_2|}{\pi\tau\sqrt{p^2-
j^2}}U_j(x_2)U_j(t_2)\di t_2\di x_2
\\
=&\tau\RE \sum\limits_{j=1}^{p-1}
\int\limits_{\mathds{R}^2} \frac{\cos\sqrt{p^2- j^2}|x_2-t_2|}{\pi\sqrt{p^2-
j^2}}U_j(x_2)U_j(t_2)\di t_2\di x_2
\\
=&\tau \sum\limits_{j=1}^{p-1}
\int\limits_{\mathds{R}^2} \frac{\cos\sqrt{p^2- j^2}(x_2-t_2)
}{\pi\sqrt{p^2-
j^2}}\big(W_{1,j}(x_2)W_{1,j}(t_2)-W_{2,j}(x_2)W_{2,j}(t_2)\big)\di t_2\di x_2.
\end{align*}
By straightforward calculations, for an arbitrary compactly supported function $W(x_2)$  we obtain:
\begin{align*}
\int\limits_{\mathds{R}^2}  \cos\sqrt{p^2- j^2}(x_2-t_2)
 W (x_2)W (t_2) \di t_2\di x_2=&\left(\int\limits_{\mathds{R}}  W (x_2) \cos\sqrt{p^2- j^2} x_2
\di x_2\right)^2
\\
&+ \left(\int\limits_{\mathds{R}}
 W (x_2) \sin\sqrt{p^2- j^2} x_2 \di x_2\right)^2.
\end{align*}
Hence, by two latter identities  and (\ref{1.20}),
\begin{align*}
\tau\IM
\int\limits_{\Om} V_1(x) U_\tau(x)\sin p x_1 \di x=\frac{1}{\pi} \sum\limits_{j=1}^{p-1} \frac{1}{\sqrt{p^2-
j^2}}&\left(\left(\int\limits_{\mathds{R}}
 W_{1,j} (x_2) \cos\sqrt{p^2- j^2} x_2 \di x_2\right)^2
\right.
\\
&\left.\hphantom{\Bigg(} - \left(\int\limits_{\mathds{R}}  W_{2,j} (x_2) \sin\sqrt{p^2- j^2} x_2
\di x_2\right)^2\right).
\end{align*}
The latter formula implies that the sign of its left hand side is the same for both $\tau\in\{-1,+1\}$ and we can make this sign being $-1$ by choosing appropriately $W_2$ once we fix $W_1$ satisfying (\ref{1.25}), namely, we can satisfy the condition
\begin{equation}
\begin{aligned}  \sum\limits_{j=1}^{p-1} \frac{1}{\sqrt{p^2-
j^2}}&\left(\left(\int\limits_{\mathds{R}}
 W_{1,j} (x_2) \cos\sqrt{p^2- j^2} x_2 \di x_2\right)^2
\right.
\\
&\left. \hphantom{\Bigg(} - \left(\int\limits_{\mathds{R}}  W_{2,j} (x_2) \sin\sqrt{p^2- j^2} x_2
\di x_2\right)^2\right)<0
\end{aligned}\label{1.28}
\end{equation}
For instance, this can be done by letting $W_2=\a \tilde{W}_2$ with a sufficient large $\a$, where $\tilde{W}_2=\tilde{W}_2(x)$ is a real odd compactly supported function such that
\begin{equation*}
\sum\limits_{j=1}^{p-1} \frac{1}{\sqrt{p^2-
j^2}}\left(\int\limits_{\Om} \tilde{W}_2 (x)\sin p x_1\sin j x_1 \sin\sqrt{p^2- j^2} x_2
\di x\right)^2>0.
\end{equation*}
Once conditions~(\ref{1.25}),~(\ref{1.28}) hold,  Theorem~\ref{thEmInt} states that both poles $k_{\e,\tau}$, $\tau\in\{-1,+1\}$ correspond to the eigenvalues located in the vicinity of the internal threshold $\L_p$.

The above analytic results, namely, the discusse{d}  asymptotic expansions, approximate well the true eigenvalues and resonances {for sufficiently small} 
$\e$. 
In order to demonstrate how small $\e$ is to be chosen, we make some numerical computations.

For numerics we use
\begin{equation*}
 W_1(x) = -\sum_{j=1}^3 a_j\sin jx_1\cos\frac{x_2}{2}, \qquad W_2(x) =\sum_{j=1}^3 b_j\sin jx_1\sin x_2,
\end{equation*}
where $a_j$ and $b_j$ are real coefficients, and we additionally let $W_i(x)\equiv 0$ as $|x_2|>\pi$, $i=1,2$. The corresponding eigenvalue problem  is  approximated using  a  second-difference numerical scheme with Dirichlet boundary conditions
at $x_1=0$ and $x_1=\pi$ and a decay condition
at $x_2\to\pm \infty$. In order to achieve  a  numerically efficient approximation of decay condition
at $x_2\to\pm \infty$, a quasi-equidistant grid \cite{kalitkin}  is  used with  a  step size gradually increasing towards  $x_2\to\pm \infty$. For small $\e\lessapprox0.2$, where the  localization of  the  eigenfunctions in $x_2$-direction is extremely weak, and an adequate approximation of   the  decay condition as $x_2\to\pm \infty$ is practically impossible, we use the  Neumann condition in order to approximate slowly decaying oscillating tails of   the  eigenfunctions:
\begin{equation*}
\partial_{x_2} \Psi(x_1, \pm X_0)=0,\quad\text{where}\quad X_0 \gg 1.
\end{equation*}

\begin{figure}
	\begin{center}
		 \includegraphics[width=0.68\columnwidth]{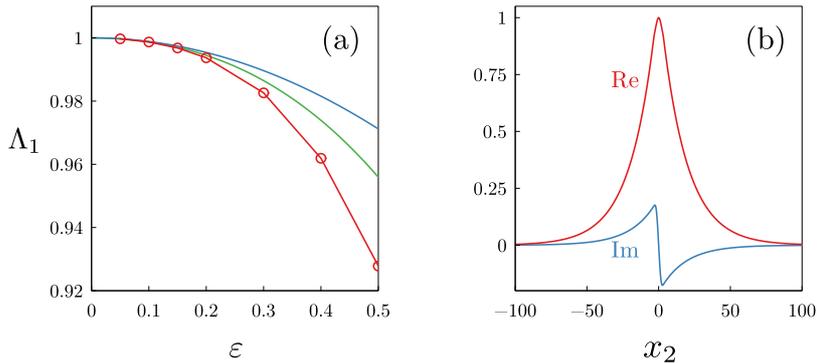}
	\end{center}
	\caption{\small (a)  Eigenvalue $\Lambda_1(\e)$ emerging from the bottom of the spectrum
computed using only the first term in expansion (\ref{1.4}) [blue curve] and two terms in (\ref{1.4}) [green curve]. Red points connected  by
red lines are obtained from direct numerical solution of the eigenvalue problem. (b)  The real  and imaginary parts of the eigenfunction at $\e=0.1$  plotted as a function of $x_2$ for $x_1=\pi/2$. Here $a_1=1$, $a_3=4$,  $b_1=0.5$ and all other coefficients are zero.}
		\label{fig:p=1}
	\end{figure}

\begin{figure}
	\begin{center}
		\includegraphics[width=0.68\columnwidth]{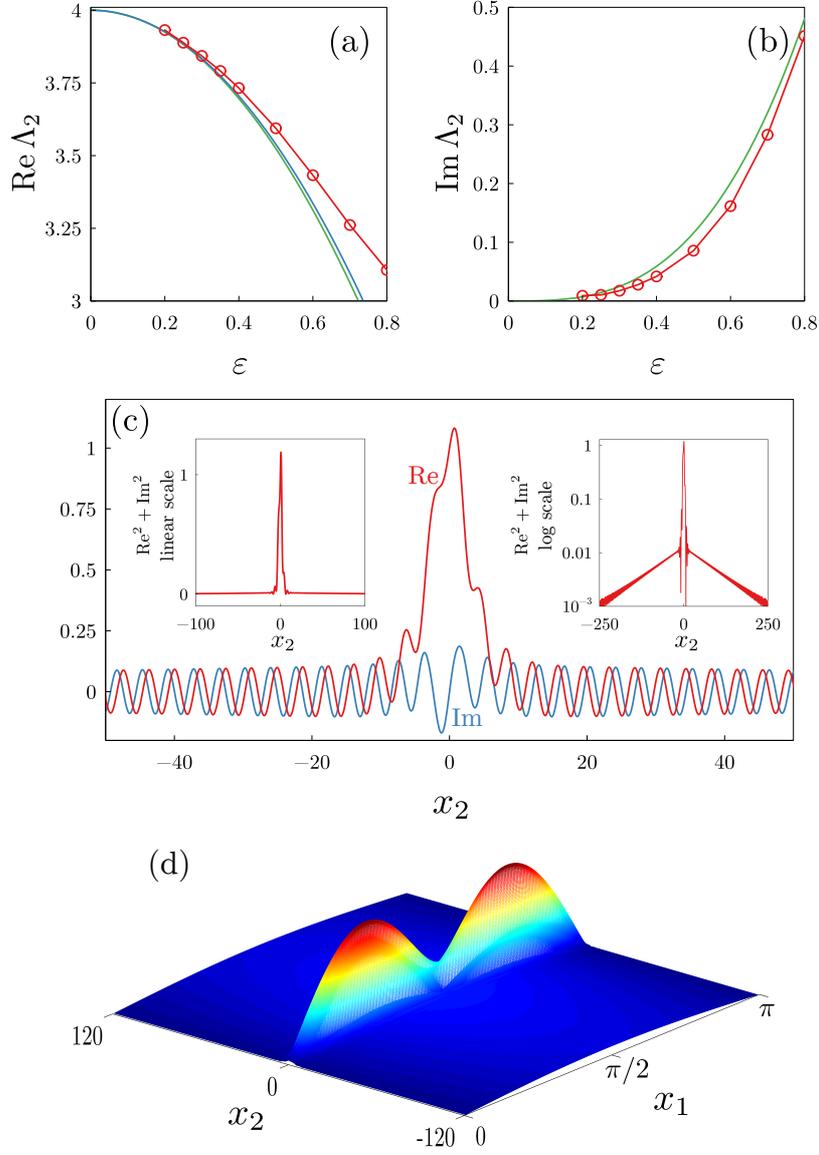}
	\end{center}
	\caption{\small  The real  and imaginary parts of the eigenvalue $\Lambda_2(\e)$  	emerging from an internal threshold in
the essential spectrum	computed using only the first term in expansion (\ref{1.9}) [blue curve] and two terms in (\ref{1.9}) [green curve]. Red points connected  by
red lines are obtained from direct numerical evaluation of the spectrum. (c)  The real  and imaginary parts of the eigenfunction at $\e=0.3$ plotted as a function of $x_2$ for $x_1=\pi/4$. {Two insets  show the plots of   the squared amplitude of the eigenfunction $|\Psi|^2 = (\mathrm{Re\,}\Psi)^2 + (\mathrm{Im\,}\Psi)^2$, using larger domains in the horizontal axes and linear (left inset) and logarithmic  (right inset) scales in the vertical axes. }   (d) Full plot of the modulus of the   eigenfunction.  In all panels $a_1=1$, $b_2=3$, and all other coefficients $a_i$ and $b_i$ are zero.  Only the eigenvalue with a positive imaginary part is shown. There also exists a complex-conjugate eigenvalue with a negative imaginary part and a $\PT$-conjugate eigenfunction. }
	\label{fig:p=2}
\end{figure}

In Fig.~\ref{fig:p=1} we plot the dependencies $\Lambda_1(\e)$ obtained from the asymptotic expansions, when only the leading term is taken into account and both terms are used, for the particular set of parameters $a_1=1$, $b_1=0.1$, $a_2=b_2=0$. For   $\e\lessapprox 0.2$,  the agreement between the analytical predictions and  numerical results is rather good, while for large values of $\e$ it is only qualitative.   In Fig.~\ref{fig:p=2} we  plot the same data for two eigenvalues bifurcating from the internal threshold with $p=2$; here only  the eigenvalue with positive imaginary part is shown.  The following set of   parameters is used: $a_1=1$, $b_1=0$, $a_2=0$, and $b_2=3$.  Again, the numerical results are in a good agreement with asymptotic expansions for weak perturbations, and in a qualitative agreement for stronger ones. Real part of the eigenvalue is found to decrease with the increase of $\varepsilon$. Nevertheless, for all values of $\varepsilon$ shown in Fig.~\ref{fig:p=2} the real part of eigenvalue $\Lambda_2(\varepsilon)$ is larger than the lower edge of the essential spectrum. Therefore, in the optical context, the corresponding eigenfunction,
 whose three-dimensional modulus plot is shown in Fig.~\ref{fig:p=2}(d), indeed represents a non-Hermitian generalization of a bound state in the continuum.

\subsubsection{Two-dimensional Bose-Einstein condensate with parabolic trapping} In the previous example a choice of $\om$ was not really important and the discussed results are of a more general nature. For instance, we can choose the operator $\Op'$ being a quantum harmonic oscillator. Namely, let $\om=\mathds{R}$ and
\begin{equation*}
\Op':=-\frac{d^2\ }{dx_1^2}+x_1^2 \quad \text{on}\quad \mathds{R}.
\end{equation*}
Then $\Om=\mathds{R}^2$ and the operator $\Op$ becomes
\begin{equation*}
\Op=-\D+x_1^2  \quad \text{on}\quad \mathds{R}^2.
\end{equation*}
As a perturbation, we choose $\cL(\e)=V_1$, where $V_1$ is a complex-valued compactly supported potential on $\mathds{R}^2$. Here
\begin{equation*}
\L_p=2p+ {1},\qquad \psi_p(x_1)=\frac{e^{-x^2/2}}{\sqrt{2^p p!\sqrt{\pi}}}
H_p\left(x_1\right),\qquad H_n(t):=(-1)^n e^{t^2} \frac{d^n\ }{dt^n}e^{-t^2},
\end{equation*}
i.e., $H_p$ are Hermite polynomials, and $p=0,1,\ldots$.
Then all calculations and results from previous example can be easily reproduced, just in all formulae   the functions $\frac{\sqrt{2}}{\sqrt{\pi}}\sin j x_1$ are to be replaced by the functions $\psi_j(x_1)$ introduced above.

The described situation corresponds
to a
two-dimensional cloud of  Bose-Einstein condensate with a parabolic confinement in $x_1$ direction. Assuming that the interparticle interactions are negligible, that is, the condensate is effectively linear, we can model its dynamics by the Schr\"oginer-like  equation, which in the theory of Bose-Einstein condensates is known as Gross-Pitaevskii equation \cite{Pitaevskii}:
\begin{equation}
\iu\partial_t \Phi + \D
\Phi - x_1^2\Phi - \e V_\e(x)\Phi = 0,
\end{equation}
where $\Phi(x_1, x_2, t)$ stands for
the macroscopic wavefunction of the condensate. Again, for stationary states in the form   $\Phi = e^{-\iu\lambda   t} \Psi$, where $\lambda$ has the meaning of the chemical potential, the problem is reduced to the described eigenvalue problem. Positive and negative imaginary parts of the perturbation correspond to the injecting the particles from an external source and absorption of the particles, respectively.

\subsubsection{Three-dimensional circular waveguide and three-dimensional  Bose-Einstein condensate }

Here we consider two examples of a three-dimensional waveguide and a three-dimensional quantum oscillator, which extend the above examples of the planar waveguide and the harmonic oscillator adduced above.

In first example we deal with a circular waveguide assuming that $\om=\{x'=(x_1,x_2):\, |x'|<1\}$ and the operator $\Op'$ is introduced as the Schr\"odinger operator with a radially symmetric potential subject to the Dirichlet condition:
\begin{equation*}
\Op'=-\D_{x'}+V_0\quad\text{in}\quad \om,\qquad V_0=V_0(|x'|).
\end{equation*}
Then the domain $\Om$ is a straight cylinder along the axis $x_3$ with the cross-section $\om$ and
\begin{equation*}
\Op=-\D+V_0(|x'|)\quad\text{in}\quad \Om
\end{equation*}
subject to the {homogeneous} Dirichlet condition. The operator $\Op'$ has a purely discrete spectrum and  thanks  to the assumed radial symmetricity for the potential $V_0$, it possesses double eigenvalues $\L_p=\L_{p+1}>\L_1$ such that the associated {eigenfunctions, orthonormalized in $L_2(\om)$,} read as $\psi_p(x')=\Psi(|x'|)\cos s\tht$,  $\psi_{p+1}(x')=\Psi(|x'|)\sin s\tht$, where $s$ is some fixed integer number,   $\Psi$ is some real  function, and $\tht$ is a polar angle associated with $x'$. Such eigenvalues are  degenerate internal thresholds in the essential spectrum. We define then a $\mathcal{PT}$-symmetric potential $V_1$   by formula (\ref{1.23}) and in addition, we assume that $W_1$ is even in $x_1$. Then it is easy to see that the corresponding matrix $\rM_1$ is diagonal:
\begin{equation*}
\rM_1=
\begin{pmatrix}
\mu_1 & 0
\\
0 & \mu_2
\end{pmatrix},\qquad \mu_1:=-\frac{1}{2}\int\limits_{\Om} W_1\psi_p^2\di x,\qquad \mu_2:=-\frac{1}{2}\int\limits_{\Om} W_1\psi_{p+1}^2\di x.
\end{equation*}
Then we assume that $\mu_1\ne\mu_2$ and $\mu_1<0$, $\mu_2<0$;  these conditions can be easily satisfied by choosing appropriately $W_1$. Since for both $\mu_i$ we have $q_i=1$, for each corresponding  pole we can apply asymptotic expansion (\ref{1.2}):
\begin{equation*}
k_{i,\tau}(\e)=\e\mu_i +\frac{\e^2}{2}\int\limits_{\Om}  V_1(x) U_{i,\tau}(x)\psi_{i-1+p} \di x
+O(\e^3),
\end{equation*}
where $U_{i,\tau}:=\cG_{1,\tau} V_1\psi_{i-1+p}$ are solutions to the boundary value problem
\begin{equation*}
(-\D-\L_p)U_{i,\tau}=V_1 \psi_{i-1+p}-\psi_{i-1+p}
\int\limits_{\Om} V_1(t',x_3)\psi_p^2(t') \di t'{dx_3} \quad\text{in}\quad\Om,\qquad U_{i,\tau}=0\quad\text{on}\quad\p\Om,
\end{equation*}
and can be found by a separation of variables similar to (\ref{1.8}). Then we can reproduce calculations from the first example and obtain that choosing appropriately the function $W_2$, we can satisfy conditions (\ref{2.26a}), (\ref{2.27b}) for all poles $k_{i,\tau}$, $i=1,2$, $\tau\in\{-1,+1\}$ and this means that these poles correspond to eigenvalues. In other words, this means that for appropriately chosen $W_1$ and $W_2$, we can generate four different simple eigenvalues of the operator $\Op_\e$ in the vicinity of the double eigenvalue $\L_p$ of the operator $\Op'$ serving as an internal threshold in the essential spectrum.

A similar situation can be realized also in other three-dimensional models. For instance, we can let $\om=\mathds{R}^2$ and $\Op'=-\D_{x'}+V_0(|x'|)$, where  $V_0=V_0(t)$  is some function growing unboundedly at infinity. Such operator $\Op'$ again can have double eigenvalues with the eigenfunctions of form $\Psi(|x'|)\cos s\tht$ and  $\Psi(|x'|)\sin s\tht$. Linear combinations $\Psi(|x'|)\cos s\tht \pm \iu  \Psi(|x'|)\sin s\tht$ correspond to vortex states with integer $s$ being the vorticity or topological charge. Therefore, in the context of Bose-Einstein condensates, this mechanism can be potentially applied for generation of localized in all three spatial dimensions vortex rings (see e.g. \cite{VR}).

We also observe that in both discussed three-dimensional examples the operator $\Op'$ can have not {only} double eigenvalues, but also ones of higher multiplicities $n$. And in such cases, it is possible to find a $\mathcal{PT}$-symmetric potential $V_1$ generating $2n$ eigenvalues of the operator $\Op_\e$ in the vicinity of the considered multiple eigenvalue.

\section{{Meromorphic} 
continuation}
\label{sec:proof1}

In this section we prove Theorem~\ref{thAnCo} on the  {meromorphic} 
continuation of the resolvent of the operator $\Op_\e$.  First we prove some auxiliary statements in a separate subsection and then we prove the theorem.

\subsection{Auxiliary lemmata}\label{ssAuLms}
In this subsection we prove three auxiliary lemmata, which will be employed in the proof of Theorem~\ref{thAnCo}.
The first statement is Lemma~\ref{lm2.1}.

\begin{proof}[Proof of Lemma~\ref{lm2.1}]
Reproducing literally the proof of Lemma~2.3 in \cite{Izv11}, one can check easily the identity $\essspec(\Op_\e)=\essspec(\Op)$. The operator $\Op$ can be represented as a sum of tensor products $\Op=\Op'\otimes \cI + \cI\otimes\Op_0$. And since
\begin{equation*}
\essspec(\Op_0)=\spec(\Op_0)=[0,+\infty),\qquad \inf\spec(\Op')=\L_1,
\end{equation*}
we immediately get:
\begin{equation*}
\spec(\Op)=\essspec(\Op)
=[\L_1,+\infty).
\end{equation*}
This completes the proof.
\end{proof}

The next statement is Lemma~\ref{lmLiRes}.

\begin{proof}[Proof of Lemma~\ref{lmLiRes}]
For each $u\in \Dom(\Op)\cap L^\bot$, each $\psi_j$ and almost each $x_d\in\mathds{R}$ we have:
\begin{align*}
\big((\Op^\bot u)(\cdot,x_d),\psi_j\big)_{L_2(\om)}=&\hf'\big(u(\cdot,x_d),\psi_j\big) - \left(\frac{\p^2 u}{\p x_d^2}(\cdot,x_d),\psi_j\right)_{L_2(\om)}
\\
=& -\L_j\big(u(\cdot,x_d),\psi_j\big)_{L_2(\om)} -\frac{d^2\ }{dx_d^2}\big(u(\cdot,x_d),\psi_j\big)_{L_2(\om)}=0.
\end{align*}
Hence, the operator $\Op^\bot$ maps $\Dom(\Op)\cap L^\bot$ into $L^\bot$. This  is an unbounded self-adjoint operator in $L^\bot$
associated with the restriction of the form $\hf$  on $\Dom(\hf)\cap L^\bot$. Moreover, for each $u\in\Dom(\hf)\cap L^\bot$ we have:
\begin{align*}
\hf(u,u)=& \int\limits_{\mathds{R}} \hf'\big(u(\cdot,x_d),u(\cdot,x_d)\big)\,dx_d + \int\limits_{\Om} \left|\frac{\p u}{\p x_d}\right|^2 \,dx \geqslant \int\limits_{\mathds{R}} \hf'\big(u(\cdot,x_d),u(\cdot,x_d)\big)\,dx_d
\\
\geqslant & c_0 \int\limits_{\mathds{R}^{d}} \|u(\cdot,x_d)\|_{L_2(\om)}^2\,dx_d=c_0\|u\|_{L_2(\Om)}^2.
\end{align*}
Hence, the spectrum of the operator $\Op^\bot$ is located in $[c_0,+\infty)$. The proof is complete.
\end{proof}

 The third lemma provides  {a meromorphic} 
 continuation for the resolvent of the unperturbed operator.

\begin{lemma}\label{lmLiCo}
Fix $p\in\{1,\ldots,m\}$ and  $\tau\in\{-1,+1\}$ and
let $\L_p=\ldots=\L_{p+n-1}$ be an  $n$-multiple eigenvalue of the operator $\Op'$, where $n\geqslant1$.
There exists a sufficiently small fixed $\d>0$ such that for all complex  $k\in B_\d$ and all  $f\in L_2(\Om,e^{{\vt}|x_d|}dx)$ the boundary value problem
\begin{equation}\label{3.2}
 \begin{gathered}
\left(
 -\sum\limits_{i,j=1}^{d-1}\frac{\p\ }{\p x_i} A_{ij} \frac{\p\ }{\p x_j} + \iu \sum\limits_{j=1}^{d-1} \left(A_j\frac{\p\ }{\p x_j} +\frac{\p\ }{\p x_j}A_j \right)+A_0 -\L_p+k^2
 \right)u=f\quad\text{in}\quad\Om,
 \\
\cB u=0 \quad\text{on}\quad\p\Om,
\end{gathered}
\end{equation}
is solvable in $W_{2,loc}^2(\Om)$ and possesses a solution, which can be represented as
\begin{equation}\label{3.3a}
u=\cA_{1,\tau}(k)f,
\end{equation}
where
$\cA_{1,\tau}$ is a linear operator mapping $L_2(\Om,e^{{\vt}|x_d|}dx)$  into $W_2^2(\Om,e^{-{\vt}|x_d|}dx)$. This operator is bounded and meromorphic in $k\in B_\d$. It has the only pole in $B_\d$, which is at zero and simple:
\begin{align}\label{3.7}
&\cA_{1,\tau}(k)=\frac{1}{k}\cA_2+\cG_{p,\tau} +k\cA_{4,\tau}(k),
\\
&\cA_2 f:=
\sum\limits_{j=p}^{p+n-1}\psi_j\ell_j f,\qquad
\ell_j f:=\frac{1}{2}\int\limits_{\Om}\overline{\psi_j(x')}f(x)\di x,\label{3.8}
\end{align}
where $\cA_{4,\tau}(k):\,L_2(\Om,e^{{\vt}|x_d|}dx)\to W_2^2(\Om,e^{-{\vt}|x_d|}dx)$ is an operator bounded uniformly in $k\in\overline{B_\d}$ and holomorphic in $k\in B_\d$, {and, we recall, $\cG_{p,\tau}$ is the operator defined in (\ref{3.9}).}

For sufficiently large $x_d$, the solution $u$ given by (\ref{3.3a}) can be represented as
\begin{equation}\label{3.10}
u(x,k)=\sum\limits_{j=1}^{m}u_j^\pm(x_d,k)\psi_j(x_d)+u_\bot^\pm(x,k),\qquad \pm x_d>R,
\end{equation}
where $R$ is some fixed number,
$u_j^\pm\in L_2(I_\pm,e^{{\mp\vt} x_d}dx_d)$ are some meromorphic in $k\in B_\d$
functions possessing the asymptotic behavior
\begin{equation}\label{3.10a}
\begin{aligned}
&u_j^\pm(x_d)= e^{-\tau K_j(k)|x_d|}\big(C_j^\pm(k)+O({e}^{-{\tilde{\vt}}|x_d|})\big),\quad && |x_d|\to\infty,\qquad j=1,\ldots,p- 1,
\\
&u_j^\pm(x_d)=e^{- K_j(k)|x_d|}\big(C_j^\pm(k)+O({e}^{-{\tilde{\vt}}|x_d|})\big), && |x_d|\to\infty,\qquad j=p,\ldots,m,
\end{aligned}
\end{equation}
$C_j^\pm(k)$ are some meromorphic in $k\in B_\d$ functions, $0<{\tilde{\vt}}<{\vt}$ is some fixed constant independent of $k$ and $x$,
and $u_\bot^\pm\in W_2^2(\Om_R^\pm)$ are some functions meromorphic in $k\in B_\d$ and obeying the identities
\begin{equation}\label{3.10b}
(u_\bot^\pm(\cdot,x_d),\psi_j)_{L_2(\Om)}=0
\end{equation}
for almost each $x_d\in I_\pm$ and for each $j=1,\ldots,m$.
\end{lemma}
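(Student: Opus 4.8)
The plan is to build $\cA_{1,\tau}(k)$ by a fibered separation of variables along the $x_d$-axis. Given $f\in L_2(\Om,e^{\vt|x_d|}\di x)$, decompose it fiberwise,
\begin{equation*}
f(x)=\sum_{j=1}^{m}f_j(x_d)\psi_j(x')+f^\bot(x),\qquad f_j(x_d):=\big(f(\cdot,x_d),\psi_j\big)_{L_2(\om)},\quad f^\bot(\cdot,x_d)\in L^\bot ,
\end{equation*}
so that $f_j\in L_2(\mathds{R},e^{\vt|x_d|}\di x_d)$ and $f^\bot\in L^\bot\cap L_2(\Om,e^{\vt|x_d|}\di x)$. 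Seeking $u$ in the same form $u=\sum_j u_j(x_d)\psi_j(x')+u^\bot$ and using that $\Op'\psi_j=\L_j\psi_j$ and that $\psi_j$ satisfies the boundary condition of $\Op'$, problem (\ref{3.2}) decouples into the scalar equations $-u_j''+(\L_j-\L_p+k^2)u_j=f_j$ on $\mathds{R}$, $j=1,\dots,m$, together with $(\Op^\bot-\L_p+k^2)u^\bot=f^\bot$ on $L^\bot$.

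For the orthogonal component I would invoke Lemma~\ref{lmLiRes}: since $\spec(\Op^\bot)\subset[c_0,+\infty)$ and $\L_p<c_0$, the operator $\Op^\bot-\L_p+k^2$ is boundedly invertible on $L^\bot$ with $k$-holomorphic inverse for $k\in B_\d$, $\d$ small. The spectral gap $c_0-\L_p>0$ lets one run a Combes--Thomas argument (conjugation by $e^{\pm g}$, with $g$ a smoothed version of $\vt|x_d|/2$ and $|g'|\leqslant\vt/2$), which shows, provided $\vt$ is small relative to $c_0-\L_p$, that this resolvent maps $L^\bot\cap L_2(\Om,e^{\vt|x_d|}\di x)$ boundedly into itself, hence into $W_2^2(\Om,e^{-\vt|x_d|}\di x)$ after adding interior elliptic estimates; it is holomorphic and uniformly bounded on $\overline{B_\d}$. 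This delivers $u^\bot$, its value at $k=0$ being $(\Op^\bot-\L_p)^{-1}f^\bot$ and its $k$-dependence being $O(k^2)$, and it gives the exponential decay of $u^\bot$ in $x_d$ needed for $u_\bot^\pm\in W_2^2(\Om_R^\pm)$; the orthogonality (\ref{3.10b}) is immediate since $u^\bot\in L^\bot$.

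For the scalar channels I would write down the explicit Green's functions. For $j\geqslant p+n$ one has $\RE K_j(0)=\sqrt{\L_j-\L_p}>0$, so take $u_j(x_d)=\frac{1}{2K_j(k)}\int_{\mathds{R}}e^{-K_j(k)|x_d-y_d|}f_j(y_d)\di y_d$; for $j<p$, since $(\tau K_j)^2=K_j^2$ and $\tau K_j(0)=-\iu\tau\sqrt{\L_p-\L_j}\neq0$, take $u_j(x_d)=\frac{1}{2\tau K_j(k)}\int_{\mathds{R}}e^{-\tau K_j(k)|x_d-y_d|}f_j(y_d)\di y_d$, holomorphic at $k=0$; for the threshold channels $j=p,\dots,p+n-1$, $K_j(k)=k$ and $u_j(x_d)=\frac{1}{2k}\int_{\mathds{R}}e^{-k|x_d-y_d|}f_j(y_d)\di y_d$, whose expansion $e^{-k|x_d-y_d|}=1-k|x_d-y_d|+O(k^2)$ exhibits the simple pole $\frac{1}{2k}\int_{\mathds{R}}f_j(y_d)\di y_d=k^{-1}\ell_j f$, the regular part $-\frac12\psi_j(x')\int_\Om|x_d-y_d|\overline{\psi_j(y')}f(y)\di y$, and an $O(k)$ remainder. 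Uniform-in-$k$ weighted bounds for all these convolution operators follow from Cauchy--Schwarz applied to the kernel $e^{-\RE\zeta_j|x_d-y_d|}$ (resp. $|x_d-y_d|$ for the threshold channels), $\zeta_j\in\{\tau K_j(k),k,K_j(k)\}$, with the gain the factor $e^{-\vt|y_d|}$ coming from $f_j\in L_2(\mathds{R},e^{\vt|x_d|}\di x_d)$: the resulting $y_d$-integral is uniformly bounded when $\RE\zeta_j$ stays away from zero (open and evanescent channels, using $\d$ small) and grows at most polynomially in $x_d$ otherwise, which $e^{-\vt|x_d|}$ then absorbs; differentiating under the integral and using $u_j''=f_j-(\L_j-\L_p+k^2)u_j$ upgrades this to the $W_2^2$-norm. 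Splitting each kernel at $x_d=\pm R$ yields the representation (\ref{3.10a}) with $C_j^\pm(k)$ read off from the tails and error $O(e^{-\tilde\vt|x_d|})$ for any small enough $\tilde\vt<\vt$.

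Finally I would set $\cA_{1,\tau}(k)f:=\sum_{j=1}^{m}u_j(x_d)\psi_j(x')+u^\bot$. Summing the channel equations yields (\ref{3.2}), while $\cB u=0$ because each $u_j\psi_j$ and $u^\bot$ satisfy it ($\nu_d=0$ on $\p\Om=\p\om\times\mathds{R}$), and $u\in W_{2,loc}^2(\Om)\cap W_2^2(\Om,e^{-\vt|x_d|}\di x)$ by the above. Collecting the $k$-expansions channel by channel reproduces (\ref{3.7})--(\ref{3.9}): the $k^{-1}$-part equals $\cA_2$ with $\cA_2 f=\sum_{j=p}^{p+n-1}\psi_j\ell_j f$, the constant part equals $\cG_{p,\tau}$, and since $\cA_{1,\tau}(k)-k^{-1}\cA_2-\cG_{p,\tau}$ is holomorphic on $B_\d$ and vanishes at $k=0$, dividing by $k$ defines $\cA_{4,\tau}(k)$, holomorphic and uniformly bounded on $\overline{B_\d}$; representation (\ref{3.10})--(\ref{3.10b}) follows from the channel asymptotics together with the decay of $u^\bot$. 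I expect the main difficulty to be precisely these weighted estimates kept uniform in $k$ as $k\to0$, treating the three qualitatively different channel types (open, threshold, evanescent) and the gapped orthogonal block simultaneously, and in particular arranging that the fixed weight $\vt$ is small enough compared with the spectral gaps so that the leading exponentials in (\ref{3.10a}) genuinely dominate the forced tails inherited from the exponential decay of $f$; the remainder is bookkeeping in separation of variables and Taylor expansion in $k$.
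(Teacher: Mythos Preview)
Your approach is essentially the same as the paper's: fiberwise decomposition $f=\sum_j f_j\psi_j+f^\bot$, explicit one-dimensional Green kernels for the channels $u_j$, Lemma~\ref{lmLiRes} for the orthogonal block, and Laurent expansion in $k$ to read off (\ref{3.7})--(\ref{3.9}). The kernel identities, the threshold expansion $e^{-k|x_d-y_d|}=1-k|x_d-y_d|+O(k^2)$, and the tail-splitting yielding (\ref{3.10a}) all match the paper's (\ref{3.5}), (\ref{3.13}), (\ref{3.15}).

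The one place where you work harder than necessary is the orthogonal block. You invoke a Combes--Thomas conjugation and are led to the spurious restriction ``$\vt$ small relative to $c_0-\L_p$'', which is not assumed in the statement. The paper avoids this entirely by the trivial embeddings
\[
L_2(\Om,e^{\vt|x_d|}\di x)\subset L_2(\Om),\qquad W_2^2(\Om)\subset W_2^2(\Om,e^{-\vt|x_d|}\di x),
\]
valid for \emph{every} $\vt>0$; the ordinary resolvent $(\Op^\bot-\L_p+k^2)^{-1}:L^\bot\to\Dom(\Op)\cap L^\bot\subset W_2^2(\Om)$ then already maps the first weighted space into the second, with no constraint on $\vt$. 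Likewise $u_\bot^\pm\in W_2^2(\Om_R^\pm)$ follows immediately from $u^\bot\in W_2^2(\Om)$, so no exponential decay of $u^\bot$ needs to be established. Dropping the Combes--Thomas step removes the artificial hypothesis and streamlines your argument to exactly the paper's.
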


\begin{proof}
The functions
\begin{equation*}
f_j(x_d):=\big(f(\cdot,x_d),\psi_j\big)_{L_2(\om)}
\end{equation*}
obviously belong to $L_2(\mathds{R},e^{{\vt}|x_d|}dx_d)$
since $f\in L_2(\Om,e^{{\vt}|x_d|}dx)$. It is straightforward to check that the functions
\begin{equation}\label{3.5}
\begin{aligned}
&u_j(x_d,k):=\frac{1}{2 \tau K_j(k)} \int\limits_{\mathds{R}} e^{-\tau K_j(k)|x_d-y_d|} f_j(y_d)\,dy_d\quad &&\text{as}\quad j<p-1,
\\
&u_j(x_d,k):=\frac{1}{2  K_j(k)} \int\limits_{\mathds{R}} e^{- K_j(k)|x_d-y_d|} f_j(y_d)\,dy_d\quad &&\text{as}\quad j\geqslant p,
\end{aligned}
\end{equation}
 solve the equations
\begin{equation*}
-u_j'' +(\L_j-\L_p+k^2)u_j=f_j\quad\text{in}\quad \mathds{R},\qquad j=1,\ldots,m.
\end{equation*}
Employing these facts, we seek a solution to problem (\ref{3.2}) as
\begin{equation}\label{3.1}
u(x,k)=\sum\limits_{j=1}^{m} u_j(x_d',k)\psi_j(x') + u_\bot(x,k)
\end{equation}
and for $u_\bot$ we immediately get problem (\ref{3.2}) with $f$ replaced by the function $f^\bot$ defined in (\ref{3.11}).
We see easily that $f^\bot\in L^\bot$.

Thanks to Lemma~\ref{lmLiRes}, the resolvent $(\Op^\bot-\L_p+k^2)^{-1}$ is well-defined for all sufficiently small complex $k\in B_\d$ provided $\d$ is small enough. This resolvent is holomorphic in $k\in B_\d$ as an operator in $L_2(\Om)$ and is uniformly bounded in $k\in \overline{B_\d}$. It can be expanded via the standard Neumann series:
\begin{equation}\label{3.12}
(\Op^\bot-\L_p+k^2)^{-1}=\sum\limits_{j=0}^{\infty}(-k^2)^j\big((\Op^\bot-\L_p)^{-1}
\big)^{j+1}.
\end{equation}
This expansion implies that the resolvent  $(\Op^\bot-\L_p+k^2)^{-1}$ is also holomorphic as an operator from $L_2(\Om)$ into $W_2^2(\Om)$. We define then
 \begin{equation*}
 u_\bot:=(\Op^\bot-\L_p+k^2)^{-1}f^\bot
\end{equation*}
and this obviously gives a solution to problem~(\ref{3.2}). We denote the operator mapping $f$ into the described solution by $\cA_{1,\tau}(k)$ and let us show that it possesses all stated properties.

First of all we observe that the introduced operator is independent of the choice of $\tau$ as $\L_p=\L_1$ since in this case the functions $u_j$ with $\L_j<\L_p$ are missing and the above constructions become independent of $\tau$.

Just by the embeddings $L_2(\Om,e^{{\vt}|x_d|}dx)\subset L_2(\Om)$ and $W_2^2(\Om)\subset W_2^2(\Om,e^{-{\vt}|x_d|}dx)$, the resolvent $(\Op^\bot-\L_p+k^2)^{-1}$  is a bounded operator from $L_2(\Om,e^{{\vt}|x_d|}dx)$ into $W_2^2(\Om,e^{-{\vt}|x_d|}dx)$ holomorphic in $k\in B_\d$ for sufficiently small $\d$ and bounded uniformly in $k\in\overline{B_\d}$. An operator mapping $f$ into $\sum\limits_{j=1}^{m} u_j(x_d,k)\psi_j(x')$ is given explicitly and by straightforward calculations we can check that this is also a bounded operator from $L_2(\Om,e^{{\vt}|x_d|}dx)$ into $W_2^2(\Om,e^{-{\vt}|x_d|}dx)$   holomorphic in $k\in B_\d$. The calculations are based on estimates of the following kind:
\begin{align*}
\|u_j\psi_j\|_{L_2(\Om,e^{-{\vt}|x_d|}dx)}^2=&\|u_j\|_{L_2(\mathds{R},e^{-{\vt}|x_d|}dx)}^2
\\
\leqslant & \frac{\|f_j\|_{L_2(\mathds{R},e^{{\vt}|x_d|}dx_d)}^2}{2|K_j|^2}\int\limits_{\mathds{R}^2}  e^{-|x_d-y_d|\tau\RE K_j(k)-a(|x_d|+|y_d|)}\di x_d \di y_d
\\
\leqslant & C(k)\|f\|_{L_2(\Om,e^{{\vt}|x_d|}dx)}^2,
\end{align*}
where $C(k)$ is some constant independent of $f$.
{Since the functions $f$ belong to $L_2(\Om,e^{\vt|x_d|}dx_d)$,} the functions $u_j$ can be expanded into the {Laurent} 
series with respect to the small parameter $k${; in particular, this means that the functionals $\ell_j: L_2(\Om,e^{\vt|x_d|}dx)\to\mathds{C}$ are well-defined and bounded.}
Substituting these expansions and (\ref{3.12}) into (\ref{3.1}), we immediately get representation (\ref{3.7}), (\ref{3.8}), (\ref{3.9}).

In view of formula (\ref{3.1}), in order to prove representation (\ref{3.10}), (\ref{3.10a}), (\ref{3.10b}), it is sufficient to analyze the behavior of the functions $u_j$ at infinity.  This can be done easily by  an  obvious identity
\begin{align}
&u_j(x_d,k)-\frac{e^{\mp \tau K_j(k)x_d}}{2 \tau K_j(k)} \int\limits_{\mathds{R}}e^{\pm \tau K_j(k)y_d} f_j(y_d)\,dy_d=\tilde{u}_j^\pm(x_d,k),\qquad \pm x_d>0,\label{3.13}
\\
&\tilde{u}_j^\pm(x_d,k):= \frac{1}{2 K_j(k)} \int\limits_{x_d}^{\pm\infty} \Big(e^{K_j(k)(x_d-y_d)}- e^{-K_j(k)(x_d-y_d)}\Big)f_j(y_d)\,dy_d, \nonumber
\end{align}
and an  estimate
\begin{equation}\label{3.15}
\begin{aligned}
|\tilde{u}_j^\pm(x_d,k)|\leqslant &\frac{1}{2 |K_j(k)|}\left(
\int\limits_{x_d}^{\pm\infty} \Big|e^{K_j(k)(x_d-y_d)}- e^{-K_j(k)(x_d-y_d)}\Big|^2e^{-a|y_d|} \,dy_d,
\right)^\frac{1}{2}\|f_j\|_{L_2(\mathds{R},e^{{\vt}|x_d|}dx_d)}
\\
\leqslant & C(k)e^{-{\tilde{\vt}}|y_d|}\|f_j\|_{L_2(\mathds{R},e^{{\vt}|x_d|}dx_d)},\qquad \pm x_d>0,
\end{aligned}
\end{equation}
where $C(k)>0$ and $0<{\tilde{\vt}<\vt}$ are some constants independent of $x_d$ and $f$.
The proof is complete.
\end{proof}

\subsection{
{Meromorphic} continuation for the resolvent of the perturbed operator}

This section is devoted to the proof of Theorem~\ref{thAnCo}.
We begin with rewriting the operator equation
\begin{equation*}
(\Op_\e-\L_p+k^2)u_\e=f,
\end{equation*}
as a boundary value problem (\ref{2.5}). Then we denote
\begin{equation*}
g_\e:=f-\e\cL(\e) u_\e
\end{equation*}
and  rewrite
(\ref{2.5})  as problem (\ref{3.2}) with the function $f$ replaced by $g_\e$. According Lemma~\ref{lmLiCo}, such problem is solvable and there exists a solution given by formula (\ref{3.3a}):
\begin{equation}\label{4.1}
u_\e=\cA_{1,\tau}(k)g_\e.
\end{equation}
We substitute this formula into corresponding boundary value problem (\ref{2.5}) and this leads to an operator equation in the space $L_2(\Om,e^{{\vt}|x_d|}dx)$:
\begin{equation*}
g_\e+\e\cL(\e)\cA_{1,\tau}(k)g_\e=f.
\end{equation*}
Then we substitute
representation (\ref{3.7}), (\ref{3.8}), (\ref{3.9})
into this equation:
\begin{equation}\label{4.3}
g_\e + \frac{\e}{k}
\sum\limits_{j=p}^{p+n-1}\big(\ell_j g_\e\big)\cL(\e)\psi_j
+\e\cL(\e)\big(\cG_{p,\tau} +k\cA_{4,\tau}(k)\big)g_\e=f.
\end{equation}
Thanks to the properties of the operators $\cG_{p,\tau}$ and $\cA_{4,\tau}(k)$ described in Lemma~\ref{lmLiCo}, the operator $\cL(\e)\big(\cG_{p,\tau} +k\cA_{4,\tau}(k)\big)$ is bounded uniformly in $\e$  and $k$ as an operator in $L_2(\Om,e^{{\vt}|x_d|}dx)$ and is holomorphic in $k\in B_\d$. Hence, for sufficiently small $\e$, the operator $(\cI+\e\cL(\e)\big(\cG_{p,\tau} +k\cA_{4,\tau}(k)\big))$ is boundedly invertible and the inverse is holomorphic in $k\in B_\d$ {as an operator in $L_2(\Om,e^{\vt|x_d|}dx)$}; hereinafter the symbol $\cI$ stands for the identity mapping. We denote
\begin{equation}\label{4.10}
\cA_{5,\tau}^\e(k):=\big(\cI+\e\cL(\e)\big(\cG_{p,\tau} +k\cA_{4,\tau}(k)\big)\big)^{-1}
\end{equation}
and apply this operator to equation (\ref{4.3}). This gives:
\begin{equation}\label{4.4}
g_\e + \frac{\e}{k}
\sum\limits_{j=p}^{p+n-1}\big(\ell_j g_\e\big)\cA_{5,\tau}^\e(k)\cL(\e)\psi_j=\cA_{5,\tau}^\e(k)f.
\end{equation}
Our next step is to find $\ell_j g_\e$. Once we do this, we shall be able to find $g_\e$ from equation (\ref{4.3}) as
\begin{equation}\label{4.5}
g_\e= -\frac{\e}{k}
\sum\limits_{j=p}^{p+n-1}\big(\ell_j g_\e\big)\cA_{5,\tau}^\e(k)\cL(\e)\psi_j+\cA_{5,\tau}^\e(k)f
\end{equation}
and recover the function $u_\e$ by formula (\ref{4.1}).

We apply the functionals $\ell_i$ to equation (\ref{4.4}) with all $i$ such that $\L_i=\L_p$. As a result, we arrive to a system of linear equations for the vector  $\ell g_\e:=(\ell_j g_\e)_{j=p,\ldots, p+n-1}$; this is a vector-column. The system reads as
\begin{equation}\label{4.6}
\left(\rE+\frac{\e}{k}\rM_{\e,\tau}(k)\right)\ell g_\e=F_\e(k).
\end{equation}
{Here} $\rE$ is the unit matrix and $\rM_{\e,\tau}(k)$ is a square matrix with entries $M_\e^{ij}(k):=\ell_i \cA_{5,\tau}^\e(k)\cL(\e)\psi_j$, where $i$ counts the rows and $j$ does the columns in the matrix $\rM_{\e,\tau}(k)${, while the} 
symbol $F_\e(k)$ denotes a vector column  with coordinates $\ell_i \cA_{5,\tau}^\e(k)f$, $i=p,\ldots,p+n-1$. {
 As it was shown in the proof of Lemma~\ref{lmLiCo}, the functionals $\ell_i: L_2(\Om,e^{\vt|x_d|}dx)\to\mathds{C}$ are bounded and since the operator $\cA_{5,\tau}^\e(k)$ is a bounded one in $L_2(\Om,e^{\vt|x_d|}dx)$, the introduced matrix $\rM_{\e,\tau}(k)$ and vector $F_\e(k)$ are well-defined. }

In view of the holomorphy of the operator $\cA_{5,\tau}^\e$, the entries of the matrix $\rM_{\e,\tau}(k)$ are holomorphic in $k$ and this implies that the matrix $\rM_{\e,\tau}(k)$  is holomorphic in $k\in B_\d$.
Hence, the determinant of the matrix $k\rE+\e\rM_{\e,\tau}(k)$ is a holomorphic in $k$ function and therefore, the matrix $(k\rE+\e\rM_{\e,\tau}(k))^{-1}$ is well-defined as a meromorphic in $k$ matrix. This allows us to solve system (\ref{4.6}):
\begin{equation}\label{4.7}
\ell g_\e=k(k\rE+\e\rM_{\e,\tau}(k))^{-1}F_\e(k)=k\big(\ell_j^\e(k)f\big)_{j=p,\ldots, p+n-1},\qquad \ell_j g_\e=k\ell_j^\e(k)f,
\end{equation}
where $\ell_j^\e(k)$ are functionals on $L_2(\Om,e^{{\vt}|x_d|}dx)$ meromorphic  in $k\in B_\d$. Substituting these formulae into (\ref{4.5}), we can find $g_\e$ and determine $u_\e$ by formula (\ref{4.1}):
\begin{equation}\label{4.8}
\begin{aligned}
u_\e=&
\sum\limits_{j=p}^{p+n-1}\big(\ell_j^\e f\big)\Big(\psi_j-\e
\big(\cG_{p,\tau}+k\cA_{4,\tau}(k)\big)\cA_{5,\tau}^\e(k)\cL(\e)\psi_j\Big) + \big(\cG_{p,\tau}+k\cA_{4,\tau}(k)\big)\cA_{5,\tau}^\e(k)f
\\
=&
\sum\limits_{j=p}^{p+n-1}\big(\ell_j^\e f\big)\big(\cI+\e\big(\cG_{p,\tau}+k\cA_{4,\tau}(k)\big) \cL(\e)\big)^{-1}\psi_j
+ \big(\cG_{p,\tau}+k\cA_{4,\tau}(k)\big)\cA_{5,\tau}^\e(k)f=:\cR_{\e,\tau}(k)f.
\end{aligned}
\end{equation}
The function $u_\e$ solves problem (\ref{2.5}) and satisfies representation  (\ref{2.6}), (\ref{2.6a}), (\ref{2.6b}). The latter are implied by similar representations (\ref{3.10}), (\ref{3.10a}), (\ref{3.10b}) and formula (\ref{4.1}). Since for sufficiently small $k$ the functions $K_j$, $\L_j<\L_p$, satisfy  asymptotic identities (\ref{4.11}),
the functions $e^{-\tau K_j(k)|x_d|}$, $\L_j<\L_p$ and $e^{-k|x_d|}$ decay exponentially at infinity as $\RE k>0$ and $\tau\IM k^2<0$. Hence, under the same conditions, we have $u_\e=(\Op_\e-\L_p+k^2)^{-1}f$ and this means that the operator $\cR_{\e,\tau}$ provides 
 {a meromorphic} continuation of the resolvent $(\Op_\e-\L_p+k^2)^{-1}$.

We   observe that the introduced operator $\cR_{\e,\tau}$ is meromorphic in $k\in B_\d$ as an operator from $L_2(\Om,e^{{\vt}|x_d|}\,dx)$ into $W_2^2(\Om)\cap L_2(\Om,e^{-{\vt}|x_d|}\,dx)$ due to formulae (\ref{4.1}), (\ref{4.5}), (\ref{4.7}) and representation (\ref{3.7}), (\ref{3.8}), (\ref{3.9}).

According {to} formula (\ref{4.8}), the poles of the  operator $\cR_{\e,\tau}(k)$ defined in this formula coincide with the poles of the functionals $\ell_j^\e(k)$. Let at least one of these functionals have a pole at a point $k_\e\in B_\d$. In view of the definition of the functionals in (\ref{4.7}), this means that the matrix $(k_\e\rE+\e\rM_{\e,\tau}(k_\e))$ is degenerate and by the Cramer's rule, system (\ref{4.6}) with $F_\e(k)=0$, $k=k_\e$
has a non-trivial solution, which we denote by $l^\e=(l^\e_j)_{j:\, \L_j=\L_p}$. We observe that by the definition,
\begin{equation}\label{4.12}
k_\e l_i^\e=-\e\sum\limits_{j=p}^{p+n-1} l_j^\e \ell_i \cA_{5,\tau}^\e(k)\cL(\e)\psi_j.
\end{equation}
Employing these identities and (\ref{3.7}), (\ref{3.8}), it is straightforward to check that the formula
\begin{equation}\label{4.9a}
\begin{aligned}
\psi_\e:=&-\e\lim\limits_{k\to k_\e} \cA_{1,\tau}(k)
\sum\limits_{i=1}^{n}  l_i^\e \cA_{5,\tau}^\e(k)\cL(\e)\psi_{i+p-1}
\\
=&
\sum\limits_{i=1}^{n} l_i^\e \psi_{i+p-1} -\e
\big(\cG_{p,\tau}+k_\e\cA_{4,\tau}(k_\e)\big)\sum\limits_{i=1}^{n}  l_i^\e \cA_{5,\tau}^\e(k_\e)\cL(\e)\psi_{i+p-1}
\\
=&
\sum\limits_{i=1}^{n} l_i^\e\big(\cI+\e\big(\cG_{p,\tau}+k_\e\cA_{4,\tau}(k_\e)\big) \cL(\e)\big)^{-1}\psi_{i+p-1}
\end{aligned}
\end{equation}
  defines a non-trivial solution to problem (\ref{2.5}) with $f=0$, $k=k_\e$. Thanks to formulae (\ref{3.1}), (\ref{3.5}), (\ref{3.15}), this function can be also represented as
\begin{equation}\label{4.9b}
\begin{aligned}
&
\psi_\e=\sum\limits_{j=1}^{m} \psi_j(x')\vp_j(x_d)
+(\Op^\bot-\L_p+k_\e^2)^{-1}f^\bot,
\\
&\vp_j(x_d):=\frac{1}{2 \tau K_j(k_\e)} \int\limits_{\Om} e^{-\tau K_j(k_\e)|x_d-y_d|} f(y)\overline{\psi_j(y')}\,dy &&\text{as}\quad j<p-1,
\\
&\vp_j(x_d):=\frac{1}{2  K_j(k_\e)} \int\limits_{\Om} e^{- K_j(k_\e)|x_d-y_d|} f(y)\overline{\psi_j(y')}\,dy  &&\text{as}\quad j\geqslant p+n-1,
\\
&\vp_j(x_d):=\frac{1}{2   k_\e } \int\limits_{\Om} e^{- k_\e|x_d-y_d|} f(y)\overline{\psi_j(y')}\,dy  &&\text{as}\quad j=p,\ldots, p+n-1,\quad k_\e\ne0,
\\
&\vp_j(x_d):=-\frac{1}{2} \int\limits_{\Om} |x_d-y_d| f(y)\overline{\psi_j(y')}\,dy  &&\text{as}\quad j=p,\ldots, p+n-1,\quad k_\e=0,
\end{aligned}
\end{equation}
with
\begin{equation*}
f:=-\e\sum\limits_{i=1}^{n}  l_i^\e \cA_{5,\tau}^\e(k_\e)\cL(\e)\psi_{i+p-1}
\end{equation*}
and $f^\bot$ defined by (\ref{3.11}). We also observe that as $k_\e=0$, it follows from (\ref{4.12}) that
\begin{equation*}
(f(\cdot,x_d),\psi_j)_{L_2(\Om)}=0
\end{equation*}
for almost each $x_d\in I_\pm$ and for each $j=1,\ldots,m$.
Representations (\ref{4.9a}), (\ref{4.9b}) and relations (\ref{3.13}), (\ref{3.15}) imply representation (\ref{2.8}), (\ref{2.8a}), (\ref{2.8b}). This completes the proof of Theorem~\ref{thAnCo}.

\section{Poles of  {meromorphic} 
continuations}\label{ss:Emer}
\label{sec:proof2}

In this section we study the asymptotic behaviour of poles of the operators $\cR_{\e,\tau}$ and we prove Theorems~\ref{thEmer1},~\ref{thEmer2}. As it was shown in the proof of Theorem~\ref{thAnCo} in the previous section, the poles of the operator $\cR_{\e,\tau}(k)$ coincides with poles of the matrix $(k\rE+\e\rM_{\e,\tau}(k))^{-1}$. The orders of the mentioned poles of the operator $\cR_{\e,\tau}(k)$ and the orders of the zeroes of $\det(k\rE+\e\rM_{\e,\tau}(k))$ obviously coincide.

\subsection{Proof of Theorem~\ref{thEmer1}}
Let us study the solvability of the equation
\begin{equation}\label{5.1}
\det(k\rE+\e\rM_{\e,\tau}(k))=0
\end{equation}
in $B_\d$. It is straightforward to check that
\begin{equation*}
\det(k\rE+\e\rM_{\e,\tau}(k))=k^n+P(k,\e),\qquad P(k,\e):=\sum\limits_{i=1}^{n}\e^i k^{n-i} P_i(k,\e),
\end{equation*}
where $P_i(k,\e)$ are some functions holomorphic in $k\in B_\d$ and uniformly bounded in $k\in\overline{B_\d}$ and sufficiently small $\e$. We fix arbitrary $\d'\in(0,\d)$ and for sufficiently small $\e$ we have an uniform estimate:
\begin{equation}\label{5.3}
|P(k,\e)|\leqslant C\e \quad\text{on}\quad\p B_{\d'}.
\end{equation}
Since $|k^n|=(\d')^n$ on $\p B_{\d'}$ and the function $k\mapsto k^n$ has the only zero in $B_\d$ of order $n$ at the origin, estimate (\ref{5.3}) allows us to {apply 
the} Rouch\'e theorem and to conclude that equation (\ref{5.1}) has exactly $n$ zeroes counting their orders and these zeroes {converge} 
to the origin as $\e\to+0$.

Our next step is to show that all zeroes of equation (\ref{5.1}) are located in a ball $B_{b\e}$ for some fixed $b$. For arbitrary $b$ and sufficiently small $\e$ such that $b\e<d$, by the definition of the function $P$ we have:
\begin{equation*}
|P(k,\e)|\leqslant C\e^n(1+b+\ldots+b^{n-1}),\qquad |k|^n=\e^n b^n \quad\text{on}\quad\p B_{b\e}.
\end{equation*}
Then we choose $b$ large enough and $\e$ small enough so that
\begin{equation*}
b^n>C (1+b+\ldots+b^{n-1}), \qquad b\e<\d,
\end{equation*}
and applying the Rouch\'e theorem once again,  we see that the zeroes of equation (\ref{5.1}) are located inside the ball $B_{b\e}$. This fact allows us to seek the zeroes of equation (\ref{5.1}) as $k=z\e$, where $|z|<b$ for all sufficiently small $\e$.

The operator $\cA_{5,\tau}^\e$ defined in (\ref{4.10}) can be expanded into the standard Neumann series and this yields:
\begin{equation}\label{5.5}
\cA_{5,\tau}^\e(k)=\cI-\e\cL(\e)\big(\cG_{p,\tau}+k\cA_{4,\tau}(k)) +\e^2(\cL(\e)\big(\cG_{p,\tau}+k\cA_{4,\tau}(k)))^2\cA_{5,\tau}^\e(k).
\end{equation}
In view of the above identity, the change $k=z\e$,  $|z|<b$, and the definition of the matrix $\rM_{{\e,\tau}}$,  we can represent the latter as
\begin{equation}\label{5.20}
\rM_{\e,\tau}(k)=- \rM_1+\e \tilde{\rM}_2(z,\e),
\end{equation}
where $\tilde{\rM}_2(z,\e)$ is some uniformly bounded matrix holomorphic in $z$ and we recall that the matrix $\rM_1$ is defined in (\ref{2.9}). {Both these matrices are well-defined since they arise as the terms in the expansion for $\rM_{\e,\tau}$.
}

Then we can rewrite equation (\ref{5.1}) as
\begin{equation}\label{5.6}
\det\big(z\rE-\rM_1+\e\tilde{\rM}_2(z,\e)\big)=0.
\end{equation}
We recall that $\mu_j$, $j=1,\ldots,N$, are different eigenvalues of the matrix $\rM_1$ of the multiplicities $q_j$ and hence,
\begin{equation*}
\det(z\rE-\rM_1)=\prod\limits_{j=1}^{N}(z-\mu_j)^{q_j}.
\end{equation*}
This allows us to represent equation (\ref{5.6}) as
\begin{equation}\label{5.17}
\prod\limits_{i=1}^{N}(z-\mu_i)^{q_i} + \e \tilde{P}(z,\e)=0,
\end{equation}
where $\tilde{P}$ is some function  holomorphic in $z$ and bounded uniformly in $z$ and $\e$:
\begin{equation}\label{5.19}
|\tilde{P}(z,\e)|\leqslant C.
\end{equation}

 By $T_r(z_0)$ we denote a ball of radius $r$ in the complex plane centered at the point $z_0$.
We introduce the balls $T_{\tilde{b}\e^{\frac{1}{q_i}}}(\mu_i)$,
where $\tilde{b}$ is some fixed number. Then for sufficiently small $\e$, each ball
$T_{\tilde{b}\e^{\frac{1}{q_i}}}(\mu_i)$ contains the only zero of the function
$z\mapsto \prod\limits_{j=1}^{N}(z-\mu_j)^{q_j}$, this zero is $\mu_i$ and the order of this zero is $q_i$. On the boundary of each ball
$T_{\tilde{b}\e^{\frac{1}{q_i}}}(\mu_i)$, for sufficiently small $\e$, an obvious estimate holds true:
\begin{equation}\label{5.18}
\left| \prod\limits_{j=1}^{N}(z-\mu_j)^{q_j}\right|\geqslant C\tilde{b}^{q_i}\e,
\end{equation}
where $C$ is some fixed constant independent of $\e$ and $\tilde{b}$. Hence, in view of estimate (\ref{5.19}), we  can choose $\tilde{b}$ large enough and $\e$ small enough and apply the Rouch\'e theorem to the left hand side of equation (\ref{5.17}). This yields that each ball
$T_{\tilde{b}\e^{\frac{1}{q_i}}}(\mu_i)$ contains exactly $q_i$ zeroes of equation (\ref{5.17}), which we denote by $z_{ij}=z_{ij}(\e)$. These zeroes satisfy the asymptotic identities
\begin{equation}\label{5.25}
z_{ij}(\e)=\mu_i+O(\e^{\frac{1}{q_i}}),\qquad j=1,\ldots,q_i.
\end{equation}
Returning back to equation (\ref{5.1}), we see that it has exactly $n$ zeroes counting their orders and these zeroes obey asymptotic expansion (\ref{2.10}). The proof is complete.

\subsection{Proof of Theorem~\ref{thEmer2}}
Identity (\ref{5.5}) allows us to specify the form of the matrix $\rM_{\e,\tau}$ in more details than in (\ref{5.20}). Namely, this identity implies that
\begin{equation*}
\rM_{\e,\tau}(k)=-\rM_1+\e \rM_2^\tau+\e^2\tilde{\rM}_3(z,\e),
\end{equation*}
where $\tilde{\rM}_3(z,\e)$ is some uniformly bounded matrix holomorphic in $z$ and we recall that the matrix   $\rM_2$ is  defined in (\ref{2.11}). {All these matrices are well-defined  as the terms in the expansion for $\rM_{\e,\tau}$.}  Equation  (\ref{5.6}) is replaced by
a more detailed one:
\begin{equation}\label{5.21}
\det\big(z\rE-\rM_1+\e \rM_2^\tau+\e^2\tilde{\rM}_3(z,\e)\big)=0.
\end{equation}

We fix $i\in\{1,\ldots,N\}$ and we are going to study the asymptotic behavior of the zeroes $z_{ij}(\e)$, $j=1,\ldots,q_i$.
Let $\rS$ be a matrix reducing the matrix $\rM_1$ to its Jordan canonical form, which we denote by $\rJ:=\rS\rM_1\rS^{-1}$. Then equation (\ref{5.21}) can be rewritten as
\begin{equation*}
\det\big(z\rE-\rJ +\e \rS\rM_2^\tau\rS^{-1} +\e^2\rS\tilde{\rM}_3(z,\e)\rS^{-1}\big)=0.
\end{equation*}
Taking into consideration  the structure of the matrix $\rJ$, we
rewrite this equation as follows:
\begin{equation}\label{5.23}
 \prod\limits_{j=1}^{N}(z-\mu_j)^{q_j}  +\e(z-\mu_i)^{r_{i,\tau}}Y_1(z) + \e^2 Y_2(z,\e)=0,
\end{equation}
where  $Y_1$, $Y_2$ are some functions holomorphic in $z$ and bounded uniformly in $\e$ and $z$ and $r_{i,\tau}<q_i$ is some non-negative integer number. These functions arise as some polynomials in $z$, $\e$ and the entries of the matrices  $\rS\rM_2\rS^{-1} $ and $\rS\tilde{\rM}_3(z,\e)\rS^{-1}$.  It is clear that $(z-\mu_i)^{r_{i,\tau}}Y_1(z)=Q_{i,\tau}(z)$, where $Q_i$ is defined in (\ref{2.12}). And if $Y_1$ is not identically zero, then
\begin{equation}\label{5.22}
Y_1(z_i)=\g_{i,\tau}\prod\limits_{\substack{j=1
\\
j\ne i}}^{N}(\mu_i-\mu_j)^{q_j}\ne0,
\end{equation}
where $\g_{i,\tau}$ is defined in (\ref{2.16}). By $T_r(z_0)$ we denote a ball of radius $r$ in the complex plane centered at the point $z_0$.

We first consider the case, when $Y_1$ vanishes identically. Then the term $\e Y_1$ disappears in equation (\ref{5.23}). Here we consider the ball $T_{\tilde{b}\e^{\frac{2}{q_i}}}(\mu_i)$ and for sufficiently small $\e$, on the boundary of this ball we have the estimates
\begin{equation*}
\Big| \prod\limits_{j=1}^{N}(z-\mu_j)^{q_j} \Big|\geqslant C\tilde{b}^{q_i}\e^2,\qquad |\e^2 Y_2(z,\e)|\leqslant \tilde{C}\e^2,
\end{equation*}
where $C$ and $\tilde{C}$ are some fixed constants independent of $\e$ and $\tilde{b}$. Then we apply the Rouch\'e theorem proceeding as in (\ref{5.17}), (\ref{5.19}), (\ref{5.18}), (\ref{5.25}) and we arrive immediately to (\ref{2.13}).

Now {we} proceed to the case, when $Y_1$ is not identically zero. Here we divide equation (\ref{5.23}) by $ \prod\limits_{\substack{j=1
\\
j\ne i}}^{N}(z-\mu_j)^{q_j}$ and in view of (\ref{5.22}) we get:
\begin{equation}\label{5.26}
(z-\mu_i)^{q_i}  +\e(z-\mu_i)^{r_{i,\tau}}\g_{i,\tau}+\e (z-\mu_i)^{r_{i,\tau}+1}Y_3(z) + \e^2 Y_4(z,\e)=0,
\end{equation}
where $Y_3$, $Y_3$ are some functions holomorphic in $z$ in the vicinity of the point $\mu_i$ and bounded uniformly in $\e$ and $z$.

Suppose that $q_i\leqslant 2r_{i,\tau}$ and consider the ball $T_{\tilde{b}\e^{\frac{1}{r_{i,\tau}}}}(\mu_i)$. For sufficiently small $\e$,
on the boundary of this ball we have
\begin{equation}\label{5.27}
\begin{aligned}
|(z-\mu_i)^{q_i}  +\e(z-\mu_i)^{r_{i,\tau}}\g_{i,\tau}|\geqslant & |(z-\mu_i)^{q_i}| - \e|\g_{i,\tau}||(z-\mu_i)^{r_{i,\tau}}|
\\
=& \tilde{b}^{q_i}\e^{2-\frac{2r_{i,\tau}-q_i}{r_{i,\tau}}}
-\e^2|\g_{i,\tau}|\tilde{b}^{r_{i,\tau}}
\geqslant  \frac{ \tilde{b}^{q_i}}{2}\e^{2-\frac{2r_{i,\tau}-q_i}{r_{i,\tau}}}
\end{aligned}
\end{equation}
 and
\begin{equation}\label{5.28}
|\e (z-\mu_i)^{r_{i,\tau}+1}Y_3(z) + \e^2 Y_4(z,\e)|\leqslant C(\tilde{b}^{r_{i,\tau}+1}\e^{2+\frac{1}{r_{i,\tau}}
}+\e^2)<C\e^2,
\end{equation}
where $C$ is some constant independent of $\e$ and $\tilde{b}$.
Since $q_i-r_{i,\tau}\geqslant r_{i,\tau}+1$, it is also straightforward to check that for sufficiently small $\e$ and sufficiently large $\tilde{b}$, all zeroes of the function $z\mapsto (z-\mu_i)^{q_i}  +\e(z-\mu_i)^{r_{i,\tau}}\g_{i,\tau}$  belong to the ball $T_{\tilde{b}\e^{\frac{1}{r_{i,\tau}}}}(\mu_i)$.
Taking this fact and estimates (\ref{5.27}), (\ref{5.28}) into consideration, we apply the Rouch\'e theorem and   arrive at asymptotic expansion (\ref{2.17}).

Suppose that $q_i-2r_{i,\tau}\geqslant 1$. The zeroes of the function $z\mapsto (z-\mu_i)^{q_i}  +\e(z-\mu_i)^{r_{i,\tau}}\g_{i,\tau}$ can be found explicitly. The first of them is a $r_{i,\tau}$-multiple zero $z_0=\mu_i$ and $q_i-r_{i,\tau}$ simple zeroes
\begin{equation}\label{5.31}
z_j(\e):=\mu_i-\e^{\frac{1}{q_i-r_{i,\tau}}}(-\g_{i,\tau})^{\frac{1}{q_i-r_{i,\tau}}} e^{\frac{2\pi\iu}{q_i-r_{i,\tau}}j}, \qquad j=1,\ldots,q_i-r_{i,\tau}.
\end{equation}
First we consider a ball $T_{\tilde{b}\e^{\frac{1}{r_{i,\tau}}}}(\mu_i)$ and we observe that under our assumption, $q_i-r_{i,\tau}\geqslant r_{i,\tau}+1$. Hence, for sufficiently small $\e$, the ball $T_{\tilde{b}\e^{\frac{1}{r_{i,\tau}}}}(\mu_i)$ does not contain the zeroes $z_j(\e)$. Moreover, on the boundary of this ball, the estimate holds  true:
 \begin{align*}
|(z-\mu_i)^{q_i}  +\e(z-\mu_i)^{r_{i,\tau}}\g_{i,\tau}|\geqslant & \e|\g_{i,\tau}||(z-\mu_i)^{r_{i,\tau}}|-|(z-\mu_i)^{q_i}|
\\
=& \tilde{b}^{r_{i,\tau}}\e^2|\g_{i,\tau}|-\tilde{b}^{q_i}\e^{2+\frac{q_i-2r_{i,\tau}}{r_{i,\tau}}}> \frac{|\g_{i,\tau}|}{2}\tilde{b}^{r_{i,\tau}}\e^2.
\end{align*}
We also observe that estimate (\ref{5.28}) remains true in the considered case.  Then we apply Rouch\'e theorem to the ball  $T_{\tilde{b}\e^{\frac{1}{r_{i,\tau}}}}(\mu_i)$ and conclude that this ball contains exactly $r_{i,\tau}$ zeroes of equation (\ref{5.26}) counting their orders. This implies asymptotic expansion (\ref{2.18}).

Now we consider another ball $T_{\tilde{b}\e^{\frac{2}{q_i-r_{i,\tau}}}}(z_j(\e))$ for $z_j$ defined in  (\ref{5.31}) with some fixed $j$. For sufficiently small $\e$, this ball does not contain the origin and other points $z_s$ with $s\ne j$. On the boundary of this ball, we have the following estimate:
\begin{equation}\label{5.32}
\begin{aligned}
|(z-\mu_i)^{q_i}  +\e(z-\mu_i)^{r_{i,\tau}}\g_{i,\tau}|\geqslant & C\e^{\frac{r_{i,\tau}}{q_i-r_{i,\tau}}} |(z-\mu_i)^{q_i-r_{i,\tau}}  +\e\g_{i,\tau}|
\\
=& C\e^{\frac{r_{i,\tau}}{q_i-r_{i,\tau}}} \prod\limits_{s=1}^{q_i-r_{i,\tau}}|z-z_s|
\geqslant
C\tilde{b}
\e^{ \frac{q_i+1}{q_i-r_{i,\tau}}},
\end{aligned}
\end{equation}
where $C$ is some constant independent of $\e$ and $\tilde{b}$.
We observe that under our assumptions
\begin{equation*}
\frac{q_i+1}{q_i-r_{i,\tau}}\leqslant 2.
\end{equation*}
In view of the latter inequality, on the boundary of the ball $T_{\tilde{b}\e^{\frac{2}{q_i-r_{i,\tau}}}}(z_j(\e))$, the estimates
\begin{equation*}
|\e (z-\mu_i)^{r_{i,\tau}+1}Y_3(z) + \e^2 Y_4(z,\e)|\leqslant C(\e^{\frac{q_i+1}{q_i-r_{i,\tau}}}+\e^2)<C\e^{\frac{q_i+1}{q_i-r_{i,\tau}}}
\end{equation*}
hold true, where $C$ is some constant independent of $\e$ and $\tilde{b}$. This estimate and (\ref{5.32}) allow us to apply the Rouch\'e theorem and we conclude that the ball $T_{\tilde{b}\e^{\frac{2}{q_i-r_{i,\tau}}}}(z_j(\e))$ contains exactly one zero of equation (\ref{5.23}). This gives asymptotic expansion (\ref{2.19}).
The proof is complete.

\section{Emerging resonances and eigenvalues}

In this section we determine whether the poles of the operators $\cR_{\e,\tau}$ described in the previous section correspond to the eigenvalues or resonances of the operator $\Op_\e$. In this way, we shall prove Theorems~\ref{thEmBot},~\ref{thEmInt}.

We fix  $p\in\{1,\ldots, m\}$ and assume that $\L_p=\ldots=\L_{p+n-1}$, where $n\geqslant1$ is a multiplicity of the eigenvalue $\L_p$ of the operator $\Op'$. In what follows, we analyze the nature of poles $k_{ij}(\e)$ from Theorem~\ref{thEmer1} with asymptotic behavior (\ref{2.10}). In order to do this, we analyze the behavior of the associated nontrivial solutions to problem (\ref{2.5}) with $f=0$.

\subsection{Bottom of the spectrum} In this subsection we assume that $p=1$ and we prove Theorem~\ref{thEmBot}. In the considered case, the  {meromorphic} 
continuation of the resolvent given the operator $\cR_{\e,\tau}$ is independent of $\tau$ and the same is true for the corresponding poles $k_{ij}(\e)$ from Theorem~\ref{thEmer1} with asymptotic behavior (\ref{2.10}). The behavior of the associated nontrivial solutions to problem (\ref{2.5}) with $f=0$  is provided by representation (\ref{2.8}). In this representation, the first sum over $j=1,\ldots,p-1$   is obviously missing and the leading term at infinity is the sum over $j=p,\ldots,p+n-1$.

We fix one of the poles $k_{ij}(\e)$ with some $i\in\{1,\ldots,n\}$, $j=1,\ldots,q_i$. It follows from (\ref{4.12}), (\ref{4.9a}), (\ref{4.9b}), (\ref{3.7}), (\ref{3.8}) that the coefficients $c_{s,\e}^\pm$, $s=1,\ldots,n$, in representation (\ref{2.8}) coincide with $l_s^\e$ determined by (\ref{4.12}). Since we deal with a non-trivial solution of system (\ref{4.12}), this implies that at least one of the coefficients $c_{s,\e}^\pm$ is non-zero. Then representation (\ref{2.8}) for a non-trivial solution to problem (\ref{2.5}) defined in (\ref{4.9a}) and formulae (\ref{4.11}) imply that if
\begin{equation}\label{6.1}
\RE k_{ij}(\e)>0,
\end{equation}
then a  non-trivial solution $\psi_\e$ decays exponentially at infinity and thus, is an eigenfunction, while the opposite inequality
\begin{equation}\label{6.2}
\RE k_{ij}(\e)\leqslant 0
\end{equation}
ensures that $\psi_\e$ is not in $W_2^2(\Om)$.

Asymptotic expansions for $k_{ij}$ established in Theorems~\ref{thEmer1},~\ref{thEmer2} allow us to check effectively the above conditions. Namely, if $\RE\mu_i>0$, this ensures inequality (\ref{6.1}) for sufficiently small $\e$.
Then the poles $k_{ij}$, $j=1,\ldots,q_i$ correspond to the eigenvalues $\l_{ij}(\e)=\L_p-k_{ij}^2(\e)$ with
asymptotic expansions (\ref{6.4}), (\ref{6.5}), (\ref{6.6}) as described in the formulation of Theorem~\ref{thEmBot}.
If $\RE\mu_i<0$, this guarantees inequality (\ref{6.2}) for sufficiently small $\e$ and the poles $k_{ij}$ correspond to the resonances $\l_{ij}(\e)=\L_p-k_{ij}^2(\e)$ with same asymptotic expansions (\ref{6.4}), (\ref{6.5}), (\ref{6.6}).

If $\RE\mu_i=0$, in order to understand which of   conditions (\ref{6.1}), (\ref{6.2}) is realized, we need to check the next term in the asymptotic expansion for $k_{ij}$. According Theorem~\ref{thEmer2}, this can be done under an additional assumption $2r_{i,\tau}\leqslant q_i-1$ and only for poles $k_{ij,\tau}$ with $j=r_i+1,\ldots,q_i$. Then asymptotic expansion (\ref{2.19}) imply that under condition (\ref{6.7}), the pole $k_{ij}$, $j=r_{i,\tau}+1,\ldots,q_i$ corresponds to an eigenvalue, while under condition (\ref{6.8})
the pole $k_{ij}$ corresponds to a resonance. The asymptotic behavior for this eigenvalue/resonance is given by (\ref{6.6}) if $\mu_i\ne0$ and it is given by (\ref{6.9}) if $\mu_i=0$.

\subsection{Internal thresholds in the spectrum} In this subsection we study the nature of the poles emerging from internal points $\L_p$ in the essential spectrum and we prove Theorem~\ref{thEmInt}. As in the previous subsection, here we again analyze the behavior at infinity of non-trivial solutions $\psi_\e$ associated with poles $k_{ij}$ and this analysis will be based on representation (\ref{2.8}). However, there are important differences. The matter is that now the representation involves also a sum over $j=1,\ldots,p-1$. Its terms  can decay or grow at infinity.
As it has been mentioned in the proof of Theorem~\ref{thAnCo}, according identity (\ref{4.11}), the functions $e^{-\tau K_j(k)|x_d|}$, $j=1,\ldots,p-1$, decay exponentially at infinity if $\tau\IM k^2<0$ and grow exponentially or vary periodically  as $\tau\IM k^2\geqslant 0$. This means that
apart of the sign of $\RE k_{ij}$, we should also control the sign of $\tau\IM k_{ij}^2$. One more point is that now we have two  {meromorphic} 
continuations of the resolvent, the operators $\cR_{\e,\tau}$, $\tau\in\{-1,+1\}$, respectively, two sets of their poles $k_{ij,\tau}(\e)$ converging to zero as $\e\to+0$.
We observe that according Theorems~\ref{thEmer1},~\ref{thEmer2}, the first terms in asymptotic expansions for poles $k_{ij,\tau}$ are the eigenvalues $\mu_i$ of the matrix $\rM_1$ and they are independent of $\tau$. In fact, we can see the influence of $\tau$ on the asymptotic behavior of $k_{ij}$ only in formulae (\ref{2.19}), for $j=r_{i,\tau}+1,\ldots,q_i$ under an additional assumption $2r_{i,\tau}\leqslant q_i-1$.

We choose $i\in\{1,\ldots,N\}$, $j\in\{1,\ldots,q_i\}$, $\tau\in\{-1,+1\}$ and consider the pole $k_{ij,\tau}(\e)$. As in the above proof of Theorem~\ref{thEmBot}, it is easy to see that at least one of the coefficients    $c_{s,\e}^\pm$, $s=1,\ldots,n$, in representation (\ref{2.8}) for the associated non-trivial functions $\psi_\e$ is non-zero. Then it follows from asymptotic expansions (\ref{2.13}), (\ref{2.17}), (\ref{2.18}), (\ref{2.19}) that conditions (\ref{2.26a}), (\ref{2.26b}), (\ref{2.27a}), (\ref{2.27b}) ensure  that $\RE k_{ij,\tau}(\e)>0$ and $\tau\IM k_{ij,\tau}(\e)<0$ and hence, the pole $k_{ij,\tau}(\e)$ corresponds to an eigenvalue $\l_{ij,\tau}(\e)=\L_p-k_{ij,\tau}^2(\e)$. The stated asymptotic behavior for this eigenvalue is implied by  (\ref{2.13}), (\ref{2.17}), (\ref{2.18}), (\ref{2.19}). In the same way, conditions (\ref{2.28a}), (\ref{2.28b}) ensure that $\RE k_{ij,\tau}(\e)<0$ and hence, the pole $k_{ij,\e}$ corresponds to a  resonance $\l_{ij,\tau}(\e)=\L_p-k_{ij,\tau}^2(\e)$ and it possesses the stated behavior.

If $\mu_i$ is a simple eigenvalue of the matrix $\rM_1$, then $q_i=1$. In this case, a non-trivial solution $l_\e$ to system (\ref{4.12}) associated with $k_{i1,\tau}(\e)$ converges to the vector $\ev_i$. Indeed, it follows from representation (\ref{5.20})  that the corresponding zero $z_{ij,\tau}(\e)$ of equation (\ref{5.6}) is an eigenvalue of the matrix $\rM_1-\e\tilde{\rM}_2(z_{ij,\tau}(\e),\e)$ and $l_\e$ is an associated eigenvector. And since   $z_{ij,\tau}(\e)$ converges to $\mu_i$ and both these eigenvalues are simple, the eigenvector  associated with $k_{i1,\tau}(\e)$ converges to $l_\e$ as well. Now by representation (\ref{4.9a}), (\ref{4.9b}), definition (\ref{3.9}) of the operator $\cG_{p,\tau}$ and formula (\ref{3.13}) we conclude that the coefficients $c_{s,\e}^\pm$ in representation (\ref{2.8}) satisfy the identities:
 \begin{equation*}
 c_{s,\e}^\pm=\e
\sum\limits_{t=1}^{n}\int\limits_{\Om} e^{\mp K_t(0)x_d} \overline{\psi_s(x')} \cL_1 \ev_{i,t} \psi_{t-p+1}\di x+o(\e).
 \end{equation*}
Hence, by condition (\ref{2.30}), at least one of the coefficients $c_{s,\e}^\pm$ is non-zero. As above, it is easy to see that one of  conditions (\ref{2.26a}) or (\ref{2.26b})  and one of conditions (\ref{2.29a}) or (\ref{2.29b}) ensures that the functions $e^{-K_s(k_{ij,\tau}(\e))|x_d|}$ grows exponentially at infinity and hence, the same is true for the function $\psi_\e$. Therefore, the pole $k_{ij,\tau}(\e)$ corresponds to a resonance. The asymptotic expansion for this resonance can be established as above. The proof of Theorem~\ref{thEmInt} is complete.

\section*{Acknowledgments}

{The authors thank the referees for useful remarks.}
The research by D.I.B. and D.A.Z. is supported by the Russian Science Foundation
(Grant No. 20-11-19995).

\section*{Data Availability Statement}

The data that support the findings of this study are available from the corresponding author upon reasonable request.


\begin{thebibliography}{BBB}
	

\bibitem{jamming}  I.V. Barashenkov, D.A. Zezyulin and V.V.  Konotop,   Jamming anomaly in $\PT$-symmetric systems, {\it New J. Phys.} {\bf 18} (2016) 075015.

\bibitem{Bender2}  C.M. Bender, Making sense of non-Hermitian
Hamiltonians, {\it Rep. Prog. Phys.} {\bf 70} (2007) 947--1018.

\bibitem{Bender1} C.M. Bender and S. Boettcher, Real spectra in non-Hermitian
Hamiltonians having $\PT$-symmetry, {\it Phys. Rev. Lett.} {\bf 80}  (1998)
5243--5246.

\bibitem{BGS77}  R. Blankenbecler, M. L. Goldberger, B. Simon, The bound states of weakly coupled
long-range one-dimensional quantum Hamiltonians, {\it Ann. Physics}, {\bf 108}
(1977) 69--78.


\bibitem{MSb06} {D. Borisov, Discrete spectrum of a pair of non-symmetric waveguides coupled by a window, {\it Sb. Math.}, {\bf 197} (2006) 475--504.}

\bibitem{MPAG07} D. Borisov,  Asymptotic behaviour of the spectrum of a waveguide with distant perturbation, {\it Math. Phys. Anal. Geom.} {\bf 10} (2007) 155--196. 

\bibitem{AHP07} D.I. Borisov, Distant perturbations of the Laplacian in a multi-dimensional space, {\it  Ann. H. Poincar\'e}, {\bf 8} (2007) 1371--1399. 



\bibitem{Izv08} D.I. Borisov, R.R. Gadyl'shin,  On spectrum of periodic operator with a small localized perturbation, {\it Izv. Math.} {\bf 72} (2008) 659--688. 

\bibitem{Izv11} D. Borisov,  On spectrum of two-dimensional periodic operator with small localized perturbation, {\it Izv. Math.} {\bf 75} (2011) 471--505. 

\bibitem{SAM17} D. Borisov, S. Dmitriev,  On the spectral stability of kinks in 2D Klein-Gordon model with parity-time-symmetric perturbation, {\it Stud. Appl. Math.} {\bf 138} (2017) 317--342. 

\bibitem{ESAIM20} D. Borisov, G. Cardone, On a one-dimensional quadratic operator pencil with a small periodic perturbation,  {\it ESAIM: COCV} {\bf 26} (2020) id 21. 

\bibitem{IEOP08}  D. Borisov, D. Krej\v{c}i\v{r}{\'\i}k, $\PT$-symmetric waveguide, {\it Integ. Equat. Oper. Theo.} {\bf 62} (2008) 489--515. 

 \bibitem{MZ15} D. Borisov, Emergence of eigenvalue for $\PT$-symmetric operator in thin strip, {\it Math. Notes.} {\bf  98} (2015) 827--883. 

 \bibitem{MSb17} D.I. Borisov, M. Znojil, On eigenvalues of a $\PT$-symmetric operator in a thin layer, {\it Sb. Math.} {\bf 208} (2017) 173--199. 

\bibitem{JPA07} D. Borisov,   The spectrum of two quantum layers coupled by a window, {\it J. Phys. A: Math. Theor.} {\bf 40} (2007) 5045--5066. 



\bibitem{PMA14} D. Borisov, Perturbation of threshold of essential spectrum for waveguide with window. I. Decaying resonance solutions, {\it J. Math. Sci.} {\bf 205} (2015) 19--54.

\bibitem{JPA19} D.I. Borisov, D.A. Zezyulin, Spacing gain and absorption in a simple PT-symmetric model: spectral singularities and  ladders of eigenvalues and resonances, {\it J. Phys. A: Math. Theor.} {\bf 52} (2019) id 445202. 

\bibitem{AML20} D.I. Borisov, D.A. Zezyulin, Sequences of closely spaced resonances and eigenvalues for  bipartite complex potentials, {\it Appl. Math. Lett.} {\bf 100} (2020)  id 106049.



\bibitem{VR} L. D. Carr and J. Brand, Multidimensional solitons: theory, in \textit{Emergent nonlinear phenomena in Bose-Einstein condensates:
theory and experiment}, ed. by P. G. Kevrekidis, D. J.
Frantzeskakis, and R. Carretero-Gonz\'alez (Springer, Berlin,
2008).

\bibitem{CP} { S. Cuccagna, D. Pelinovsky,
Bifurcations from the endpoints of the essential spectrum in the linearized nonlinear
Schr\"odinger problem, {\it J. Math. Phys.}  {\bf 46}  (2005)  053520.}


\bibitem{ED} P.  Duclos and P. Exner, Curvature-induced bound state in quantum waveguides in two and three dimensions, {\it Rev.
Math. Phys.} {\bf 7} (1995) 73--102.

 \bibitem{El-GanainyReview}
R. El-Ganainy,  K.G. Makris,  M.  Khajavikhan, Z.H. Musslimani,  S. Rotter and D.N. Christodoulides, Non-Hermitian physics and PT symmetry, \textit{Nat. Phys.} {\bf 14}  (2018) 11--19.


\bibitem{ESTV} P. Exner, P. \v{S}eba, M. Tater, D. Van\v{e}k, Bound states and scattering in quantum
waveguides coupled laterally through a boundary window, {\it J. Math. Phys.} {\bf 37} (1996) 4867--4887. 


 \bibitem{FengReview}   L. Feng,  R. El-Ganainy, and L. Ge,   Non-hermitian photonics based on parity-time symmetry,  {\it Nat. Photonics} {\bf 11} (2017)  752--762.


\bibitem{Ga1} R.R. Gadyl'shin, Local perturbations of the Schr\"odinger operator on the axis, {\it Theor. Math. Phys.} {\bf 132} 
    (2002) 976–-982.

\bibitem{Ga2} R. R. Gadyl'shin,  Local perturbations of the Schr\"odinger operator on the plane, {\it Theor. Math. Phys.} {\bf 138} 
    (2004), 33--44.

\bibitem{G3} R.R. Gadyl'shin. On regular and singular perturbations of acoustic and quantum waveguides, {\it C.R. M\'ech.}  {\bf 332} (2004) 647--652. 


\bibitem{Garmon}
S. Garmon, M. Gianfreda, and N. Hatano, Bound states, scattering states,
and resonant states in $\PT$-symmetric open quantum systems,
{\it Phys. Rev. A} {\bf 92}, (2015) id 022125.



\bibitem{GH} F. Gesztesy, H. Holden,  A unified approach to eigenvalues and resonances of Schr\"odinger
operators using Fredholm determinants, {\it J. Math. Anal. Appl.} {\bf 123} (1987) 181--198. 



 \bibitem{Heiss} W.D. Heiss, The physics of exceptional points, {\it J. Phys. A: Math. Theor.} {\bf 45} (2011) id 444016.


\bibitem{MK}   R. H{\o}egh-Krohn, M. Mebkhout,  The $\frac{1}{r}$
expansion for the critical multiple well problem,
Comm. Math. Phys. {\bf 91} (1983)   65--73.

\bibitem{Soljacic} C. W. Hsu, B. Zhen, A. Douglas Stone, J. D. Joannopoulos  M. Solja\v{c}i\'c,  Bound states in the continuum, {\it  Nat. Rev.  Mater.} {\bf 1}(2016), 16048.

\bibitem{kalitkin} N.N. Kalitkin,  A.B. Alshin,   Ye.A. Alshina, B.V. Rogov, {\it  Calculations On Quasi-Equidistant Grids}, (Fizmatlit 2005). (in Russian)

\bibitem{Kartashov}
Y. V. Kartashov, C. Mili\'an,
V. V. Konotop, and L. Torner, Bound states in the continuum
in a two-dimensional PT-symmetric system, {\it Opt. Lett.} {\bf 43}, (2018) 575--578.


\bibitem{Ka} T. Kato, {\it Perturbation Theory for Linear Operators},  (Springer-Verlag,  1976).

\bibitem{Kivshar} Yu. S. Kivshar, G. P. Agrawal, {\it Optical Solitons: From Fibers to Photonic Crystals}, (Academic Press, 2003).



\bibitem{Kl77} M. Klaus,  On the bound state of Schr\"odinger operators in one dimension, {\it Ann.
Phys.} {\bf 108} (1977), 288–-300. 

\bibitem{KS80}
 M. Klaus, B. Simon,  Coupling constant thresholds in nonrelativistic quantum mechanics.
I. Short-range two-body case, {\it Ann. Phys.} {\bf 130} (1980) 251--281. 




\bibitem{Kn} K. Knopp, {\it Theory of functions. Part II.}
(Dover Publ. 
1947).

  \bibitem{KYZreview}  V. V. Konotop, J. Yang, D. A. Zezyulin,
 Nonlinear waves in $\PT$-symmetric systems, {\it Rev. Mod. Phys.}  {\bf 88} (2016)  id 035002.


\bibitem{KZ17}  V. V. Konotop and D. A. Zezyulin, Phase transition through the
splitting of self-dual spectral singularity in optical potentials,
{\it Opt. Lett.} {\bf 42}  (2017) 5206--5209.

\bibitem{KZ19} V. V. Konotop and D. A. Zezyulin,  Spectral singularities of odd-$\PT$-symmetric potentials, {\it Phys. Rev.   A} {\bf  99} (2019)  013823.


\bibitem{LonghiReview}   S. Longhi,    Parity-time symmetry meets photonics: A new twist
 	in non-hermitian optics. {\it Europhys. Lett.} {\bf 120} (2017)  id 64001.


\bibitem{BICoptics}
 S. Longhi, Bound states in the continuum in $\PT$-symmetric
optical lattices, {\it Opt. Lett.} {\bf 39} (2014) 1697--1700.

 \bibitem{CPA2}  C. Meng, X. Zhang, S. T. Tang, M. Yang, Z. Yang, Acoustic coherent perfect absorbers as
 sensitive null detectors, {\it Sci. Rep.} {\bf 7} (2017) id 43574



\bibitem{Alu} M.-A. Miri and  A.   Al\'u, Exceptional points in optics and photonics, {\it Science} {\bf  363}   (2019) eaar7709.


\bibitem{Moiseyev} N. Moiseyev,  {\it Non-Hermitian Quantum Mechanics}  (Cambridge Univ. Press,  2011).	


\bibitem{Most} A. Mostafazadeh, Pseudo-hermitian Representation of Quantum Mechanics.  {\it Int. J. Geom. Methods Mod. Phys.} {\bf 7} (2010)  1191--1306.


\bibitem{Mostafazadeh2009}  A. Mostafazadeh, Spectral singularities of complex scattering potentials and infinite reflection and transmission coefficients at real energies \textit{Phys. Rev. Lett.} {\bf 102} (2009) id 220402.


\bibitem{SS5} 		 A. Mostafazadeh, {\it Physics of spectral singularities}, Trends in Mathematics {\bf 145} (Springer
 International Publishing, 2015).


\bibitem{Muga} J. G. Muga, J. P. Palao, B. Navarro, and I. L. Egusquiza,  Complex absorbing potentials, {\it Phys.
 Rep.} {\bf 395} (2004)  357--426.

\bibitem{CPA3}  A. M\"ullers, B. Santra, C. Baals, J. Jiang, J. Benary, R. Labouvie, D. A. Zezyulin,
 V. V. Konotop, H. Ott, {\it Coherent perfect absorption of nonlinear matter waves.} Sci. Adv. {\bf 4}
(2018) eaat6539.

 \bibitem{SS1} M. A.  Naimark, Investigation of the spectrum and the expansion in eigenfunctions of a nonselfadjoint operator of the second order on a semi-axis, {\it Tr. Mosk. Mat. Obs.} {\bf 3}  (1954) 181--270.


\bibitem{Na} S.A. Nazarov, Almost standing waves in a periodic waveguide with a resonator and near-threshold eigenvalues, {\it St. Petersburg Math. J.} {\bf 28} 
    (2017) 377--410.
    {
\bibitem{Na2} S.A. Nazarov, Threshold resonances and virtual levels in cylindrical and periodic waveguides, {\it Izv. Math.} {\bf 84} (2020), to appear. DOI: 10.1070/IM8928
   }

\bibitem{BIC1} J. von Neumann and E. Wigner, \"Uber merkw\"urdige diskrete
eigenwerte, {\it Phys. Z.} {\bf 30}  (1929) 467--470.







\bibitem{Nori} \c{S}. K. \"Ozdemir, S. Rotter, F. Nori, and  L. Yang, Parity-time symmetry and exceptional points in photonics, {\it Nature Materials} {\bf  18} (2019)  783--798.


\bibitem{BIC4} J. Pappademos, U. Sukhatme, and A. Pagnamenta, Bound states in
the continuum from supersymmetric quantum mechanics,  {\it Phys. Rev. A} {\bf
	48} (1993) 3525--3532.

\bibitem{Pitaevskii} L.P. Pitaevskii and S. Stringari, {\it Bose-Einstein
Condensation}, (Clarendon Press,   2003).


\bibitem{Robnik} M. Robnik, A simple separable Hamiltonian having bound states in the continuum, {\it J. Phys. A: Math. Gen.} {\bf 19} (1986)   3845--3848.

\bibitem{SS2} 			 J. Schwartz, Some non-selfadjoint operators, {\it Comm. Pure Appl. Math.} {\bf 13} (1960)  609--639.


\bibitem{Si76}
 B. Simon, The bound state of weakly coupled Schr\"odinger operators in one and two
dimensions, {\it Ann. Phys.} {\bf 97} 279--288  (1976).


\bibitem{BIC2}
F. H. Stillinger and D. R.
Herrick, Bound states in the continuum, {\it Phys. Rev. A} {\bf 11}
(1975) 446--454.


\bibitem{Vainberg} 	 B. Vainberg, On the analytical properties of the resolvent for a  certain class of operator-pencils, {\it Math. USSR Sbornik} {\bf   6} (1968) 241--273.

\bibitem{V} {  V. Vougalter,
On threshold eigenvalues and resonances for the linearized NLS equation, {\it Math. Model.
Nat. Phenom.} 5 (2010), no. 4, 448-469.}

 \bibitem{CPA1}   W. Wan, Y. Chong, L. Ge, H. Noh, A. D. Stone, H. Cao, Time-reversed lasing and
 interferometric control of absorption, {\it Science} {\bf 331} (2011) 889--892.


\bibitem{Yang17} J. Yang, Classes of non-parity-time-symmetric optical potentials with exceptional-point-free phase transitions, {\it Opt. Lett.} {\bf 42}
(2017) 4067--4070.

\bibitem{PPT} J. Yang, Partially $\PT$-symmetric optical potentials with all-real spectra and soliton families in multidimensions, {\it Opt. Lett.} {\bf 39} (2014)  1133--1136.


\bibitem{Zhe} V. A. Zheludev,  Perturbations of the spectrum of the
Schr\"odinger operator with a complex periodic potential, {\it Spectral Theory, Topics
Math. Phys.} {\bf 3} (1969) 25--41.





 \bibitem{PukhovReview} A. A. Zyablovsky, A.P. Vinogradov, A.A. Pukhov, A.V. Dorofeenko and   A.A. Lisyansky, $\PT$-symmetry in optics, \textit{Physics -- Uspekhi} (2014) {\bf 57} 1063--1082.











\end{thebibliography}
\end{document}